\keywords{Network Verification, Equivalence, Decision Procedures, Kleene Algebra}
\def\@parfont{\bfseries}
\newif\ifproofatend%
\newif\ifcomments%
\newcommand{\jules}[1]{\textcolor{blue}{JJ:\@ #1}}
\newcommand{\alexandra}[1]{\textcolor{purple}{AS:\@ #1}}
\newcommand{\nate}[1]{\textcolor{green}{NF:\@ #1}}
\newcommand{\jana}[1]{\textcolor{red}{JW:\@ #1}}
\newcommand{\tobias}[1]{\textcolor{orange}{TK:\@ #1}}
\newcommand{\dexter}[1]{\textcolor{cyan}{DK:\@ #1}}
\newcommand{\jules}[1]{}
\newcommand{\alexandra}[1]{}
\newcommand{\nate}[1]{}
\newcommand{\jana}[1]{}
\newcommand{\tobias}[1]{}
\newcommand{\dexter}[1]{}
\newcommand\StacKAT{\textsf{StacKAT}\xspace}
\newcommand\NF{\mathsf{nf}} 
\newcommand\dn{{\downarrow}}
\newcommand\stackat{StacKAT}
\newcommand\SemR[1]{\ensuremath{R(#1)}}
\newcommand\Ssd{\Sigma^{*\dagger}}
\newcommand\set[2]{\{#1\mid#2\}}
\newcommand\eps\varepsilon%
\newcommand\BB{\mathcal{B}}
\newcommand\subs\subseteq%
\newcommand\Imp\Rightarrow%
\newcommand\Iff\Leftrightarrow%
\renewcommand\bar\overline%
\title{\StacKAT: Infinite State Network Verification}
\author{Jules Jacobs}
\affiliation{
  \institution{Cornell University}
  \city{Ithaca}
  \country{USA}
}
\author{Nate Foster}
\affiliation{
  \institution{Cornell University}
  \city{Ithaca}
  \country{USA}
}
\author{Tobias Kappé}
\affiliation{
  \institution{Leiden University}
  \city{Leiden}
  \country{The Netherlands}
}
\author{Dexter Kozen}
\affiliation{
  \institution{Cornell University}
  \city{Ithaca}
  \country{USA}
}
\author{Lily Saada}
\affiliation{
  \institution{Cornell University}
  \city{Ithaca}
  \country{USA}
}
\author{Alexandra Silva}
\affiliation{
  \institution{Cornell University}
  \city{Ithaca}
  \country{USA}
}
\author{Jana Wagemaker}
\affiliation{
  \institution{Radboud University}
  \city{Nijmegen}
  \country{The Netherlands}
}
\newcommand{\push}[1]{\ensuremath{\mathsf{push}(#1)}}
\newcommand{\pop}[1]{\ensuremath{\mathsf{pop}(#1)}}
\begin{document}
\begin{abstract}
We develop \StacKAT, a network verification language featuring loops, finite state variables, nondeterminism, and---most importantly---access to a stack with accompanying push and pop operations.
By viewing the variables and stack as the (parsed) headers and (to-be-parsed) contents of a network packet, \StacKAT can express a wide range of network behaviors including parsing, source routing, and telemetry. These behaviors are difficult or impossible to model using existing languages like \NetKAT.
We develop a decision procedure for \StacKAT program equivalence, based on finite automata.
This decision procedure provides the theoretical basis for verifying network-wide properties and is able to provide counterexamples for inequivalent programs.
Finally, we provide an axiomatization of \StacKAT equivalence and establish its completeness.
\end{abstract}
\maketitle

\section{Introduction}

\NetKAT~\cite{Anderson2014,Foster2015,Smolka2015} is a domain-specific language for specifying and verifying network behavior. Formally, a \NetKAT program is a regular expression over assignments $f \mut v$ and guards $f \test v$:\footnote{\NetKAT also has a special action $\mathsf{dup}$, which we omit here for the sake of simplicity but discuss in a later section.}
\begin{align*}
 e ::=
    0 \mid 1 \mid e_1 + e_2 \mid e_1 \cdot e_2 \mid e^* \mid f \test v \mid f \mut v
\end{align*}
Expressions like $(f \mut 3 \,+\, f \test 3 \cdot f \mut 4)^*$ represent traces similar to the regular expression $(a + b \cdot c)^*$, but with imperative actions $f \mut 3$, $f \test 3$, and $f \mut 4$ instead of letters $a$, $b$, and $c$. The action $f \mut v$ sets a variable $f$ to a constant value $v$, and the guard $f \test v$ discards the current trace if variable $f$ is not $v$. As such, \NetKAT can be viewed as an imperative programming language over a simple data model---i.e., network packets represented as finite records from header fields $f$ to values $v$.

Because \NetKAT can assign and check equality only with respect to constants, it is too simple to model general-purpose imperative programs written in languages like Python or C.
But it turns out to be an ideal model for the kind of computation that arises in high-speed network data planes.
Network devices like routers, switches, firewalls, etc., process packets using efficient piplines that classify packets using predicates formulated in terms of constant values ($f\test v$) and modify packets using actions that assign packet headers to possibly new constant values ($f \mut v$).
Moreover, \NetKAT can model not only the behavior of individual devices, but also the collective behavior of different interlinked devices, by designating a variable $\mathsf{sw}$ as the packet's current location, and representing the movement of a packet to switch $k$ as an assignment $\mathsf{sw} \mut k$.

Many practical network verification questions can be formulated in terms of equivalence queries---e.g., reachability, waypointing, correct compilation, and more~\cite{Anderson2014}.
Fortunately, a \NetKAT expression $e$ can mention only finitely many values and local variables, and the values $v$ are drawn from a finite domain of numerical constants; hence, it has finite state.
As a result, semantic equivalence of \NetKAT programs is decidable, though the procedure is non-trivial~\cite{Foster2015, Moeller2024}.

However, \NetKAT's restriction to a finite state space precludes modeling a variety of behaviors that arise in practice.
First, on real-world devices, packets are not represented as finite records, but as sequences of bytes.\footnote{%
Most devices impose an upper limit on the size of a packet---i.e., the so-called Maximum Transmission Unit (MTU),
but mathematically, it is more natural to treat packets as having arbitrary size.
In the same way, we usually treat memory on a general-purpose computer as being unbounded, even though physical devices have a finite amount of memory.}
When a packet enters a switch it must be parsed to ``extract'' the relevant headers into local variables.
Likewise, when a packet exits a switch, the local variables must be serialized (or ``deparsed'' to use P4's term~\cite{Bosshart2014}) back into bytes.
By convention, the unparsed portion of the packet is carried along as the payload.
To faithfully model packet parsing and serialization in all the various forms used on different devices, \NetKAT's finite records are insufficient.

Second, certain network protocols are not finite state.
In \emph{source routing} schemes~\cite{Sunshine1977}, the packet not only stores its IPv4 destination address, but encodes the entire forwarding path in a stack.
Each router pops an element off the stack and forwards the packet to the corresponding next hop.
As paths can in principle be arbitrarily long, the stack cannot be encoded using a finite record.
Dually, protocols that collect \emph{telemetry} push records onto a stack to log observations about how the packet was processed at each hop on the end-to-end path.
In similar fashion, \emph{tunneling} protocols prepend a new set of headers to the packet, encapsulating the original headers in the payload, and use the newly added headers to route through a subnetwork.
When the packet ultimately leaves the subnetwork, the new headers are removed and the original headers are restored from the payload. There are also protocols like \emph{Segment Routing} and \emph{MPLS} that combine aspects of tunneling and source routing.
\NetKAT's finite records are also insufficient for modeling these protocols.

In this paper, we extend \NetKAT to \StacKAT, enriching its data model so that a packet $\langle h, s \rangle$ comprises both a finite record $h$ mapping fields to values as well as a stack $s$ representing the unparsed portion of the packet. To manipulate the record, we use \NetKAT's existing constructs. To manipulate the stack, we add $\push{v}$ and $\pop{v}$ constructs as follows:
\begin{align*}
  e ::=
    \underbrace{0 \mid 1 \mid e_1 + e_2 \mid e1 \cdot e2 \mid e^* \mid f \test v \mid f \mut v}_{\text{\NetKAT (dup-free)}} \mid
    \underbrace{\push{v} \mid \pop{v}}_{\text{New \StacKAT operations}}
\end{align*}
The new instruction $\push{v}$ pushes a value $v$ onto the stack, and $\pop{v}$ tests that the top of the stack is $v$ and removes it, dropping the packet if the top of the stack is not $v$ (similar to the guard $f \test v$).
In many examples, the stack can be thought of as a sequence of bytes that represent the rest of the packet, while the variables hold information extracted from the raw data.
Hence, with these simple extensions, \StacKAT can model parsing, source routing, telemetry, tunneling, and MPLS, as well as the $\mathsf{dup}$ instruction used to generate traces in \NetKAT's original semantics.

Various fragments of \StacKAT are obviously decidable.
For instance, equivalence of \StacKAT expressions that do not use push or pop is trivially decidable by a reduction to \NetKAT.
Also, equivalence of \StacKAT expressions that use only \textsf{push} (resp.\ \textsf{pop})---and neither local variables nor \textsf{pop} (resp.\ \textsf{push})---is also trivially decidable, by reduction to equivalence of regular expressions.
However, it is not clear whether equivalence of arbitrary \StacKAT programs is decidable, because the addition of a stack results in infinitely many states.
Moreover, equivalence of \StacKAT programs needs to take into account subtle interactions of $\push -$ and $\pop{-}$.
For instance, we will want $\push{v} \cdot \pop v \equiv 1$, as pushing a value and then immediately popping it does nothing.
Similarly, $\push{v} \cdot \pop w \equiv 0$ (for $v\neq w$), because pushing a value and then expecting a different value to be at the top of the stack will always drop the packet, just like $0$.
More subtly, $\pop{v} \cdot \push{v} \nequiv 1$ because the former rejects packets with $v$ not on top of the stack---although $\pop{v} \cdot \push{v} + 1 \equiv 1$ does hold, because the program on the left has no effect on the stack or packet, just like $1$.


A decision procedure for \StacKAT must therefore incorporate aspects of \NetKAT equivalence (to handle local variables) and regular expression equivalence, but it must additionally handle interactions between $\push{-}$ and $\pop{-}$.
To illustrate this subtlety when deciding equivalence, consider the \StacKAT programs
$e_1 = \push{a}^* \cdot \pop{a}^* $
and
$ e_2 = \push{a}^* + \pop{a}^*$.
The program $e_1$ first pushes any number of $a$'s onto the stack, and then pops any number of $a$'s off the stack.
If there are more pushes than pops, the net effect is to push some number of $a$'s.
If there are more pops than pushes, the net effect is to pop some number of $a$'s.
This is exactly what $e_2$ does, and therefore the two are equivalent.
In contrast, $e_3 = (\push{a} \cdot \push{a})^* \cdot (\pop{a} \cdot \pop{a})^*$ is not equivalent to $e_1$, because $e_1$ can empty the stack with just $a$ on it, while $e_3$ cannot do this.


Existing formalisms such as Visibly Pushdown Languages (VPLs)~\cite{Alur2004} handle push-pop interactions, but visibly pushdown automata execute stack actions in response to consuming symbols from an input tape.
As the equivalence problem for context-free languages is undecidable, VPLs restrict the possible stack behaviors to a decidable fragment.
This is in contrast to \StacKAT, where stack behaviors are unrestricted.
The reason that \StacKAT equivalence nevertheless remains decidable is that unlike VPLs, \StacKAT does not have a separate input tape. Rather, \StacKAT has \emph{only} a stack, whose initial contents serve as the input packet, without a separate input tape.
As a result, the notion of equivalence for \StacKAT is distinct from equivalence for VPLs.
Notably, adding an input tape to \StacKAT makes equivalence undecidable as the stack allows one to recognize context-free languages, and equivalence for context-free languages is undecidable.

Our main technical contribution is a decision procedure for \StacKAT equivalence.
In a nutshell, our technical development proceeds as follows.
We first develop a decision procedure for the ``pure'' stack fragment of \StacKAT (i.e., without local variables).
This procedure first converts the \StacKAT program to an automaton, and then canonicalizes the automaton in three steps, corresponding to three different kinds of push-pop interactions. The canonicalized automaton has the property that two \StacKAT programs are semantically equivalent if and only if the canonicalized automata accept the same language (which is decidable). We subsequently extend our decidability result to arbitrary \StacKAT programs, including those with local variables, by reduction to equivalence in the pure stack fragment. Our decision procedure not only decides equivalence, but also produces counterexamples (i.e., input-output pairs) when the given programs are inequivalent.

We furthermore axiomatize \StacKAT and prove a completeness result for the fragment without local variables.
This requires a fundamentally novel approach, because \StacKAT handles equivalences such as $\push{v} \cdot \pop{v} \equiv 1$, and $\pop{v} \cdot \push{v} + 1 \equiv 1$ that were not part of previous frameworks based on Kleene Algebra~\cite{Pous2024}.


\noindent\paragraph{Contributions}
Overall, this paper makes the following contributions.

\begin{itemize}
\item The design of \StacKAT, a domain specific language for infinite state network verification that can model a wide range of network behaviors, such as packet parsing, source routing, telemetry, tunneling, MPLS, and more (\Cref{sec:tour})
\item A decision procedure that verifies the semantic equivalence of two \StacKAT programs, or produces a counterexample input-output pair (\Cref{sec:decpushpop,sec:decfull,sec:counterexamples})
\item A complete axiomatization of equivalence for \StacKAT without local variables (\Cref{sec:completeness})
\end{itemize}
We include an extensive discussion of related work (\Cref{sec:related}) to highlight the differences between \StacKAT and other languages and automata formalisms that go beyond regular languages and share similarities with our work.
One of the special things about \StacKAT is that it captures the local packet state, while also having access to a stack.
Together, these features make its semantics different than the systems that have been studied in the literature, to the best of our knowledge.

\ifproofatend%
For the sake of brevity, proofs appear in the appendix; we provide hyperlinks for easy navigation.
\else%
For the sake of brevity, proofs appear in the appendix of the extended version.
\fi%




\begin{figure}[t]
  \renewcommand{\arraystretch}{1.2}
  \begin{tabular}{@{}c@{\hspace{8pt}}c@{\hspace{8pt}}l@{}}
      \textbf{Syntax} & \textbf{Meaning} & \textbf{Semantics $\Sem{\cdot} \subseteq \Pk \times \Pk$} \\[1mm]
      \hline \\[-0.9em]
      $0$ & \emph{False} -- Drops all packets & $\emptyset$ \\[.25mm]
      $1$ & \emph{True} -- Forwards all packets & $\{ (p,p) \mid p \in \Pk \}$ \\[.25mm]
      $e_1 + e_2$ & \emph{Union} -- Send packet through $e_1$ and $e_2$ & $\Sem{e_1} \cup \Sem{e_2}$ \\[.25mm]
      $e_1 \cdot e_2$ & \emph{Sequencing} -- Send packet through $e_1$ then $e_2$ & $\{ (p,p'') \mid (p,p') \in \Sem{e_1}, (p',p'') \in \Sem{e_2} \}$ \\[.25mm]
      $e^*$ & \emph{Iteration} -- Perform $e$ any number of times & $\bigcup_{i=0}^\infty \Sem{e^i}$, where $e^0 = 1, e^{n+1} = e \cdot e^n$ \\[.25mm]
      $f \test v$ & \emph{Test equals} -- Forwards packets with $f = v$ & $\{ (\langle h, s \rangle,\langle h, s \rangle) \mid \langle h, s \rangle \in \Pk, h(f) = v \}$ \\[.25mm]
      $f \mut\, v$ & \emph{Modification} -- Sets field $f$ to $v$ & $\{ (\langle h, s \rangle, \langle h, s[f \mut v]\rangle) \mid \langle h, s \rangle \in \Pk \}$ \\[.25mm]
      $\push{v}$ & \emph{Push} -- Push $v$ onto stack & $\{ (\langle h, s \rangle, \langle h, v::s \rangle) \mid \langle h, s \rangle \in \Pk \}$ \\[.25mm]
      $\pop{v}$ & \emph{Pop} -- Test and pop top of stack if it equals $v$ & $\{ (\langle h, v::s \rangle, \langle h, s \rangle) \mid \langle h, s \rangle \in \Pk \}$ \\[3mm]
      \hline \\[-0.8em]
  \end{tabular}
  \vspace{2mm}
  \noindent
  Fields $f \in F$\quad Values $v \in \V \subseteq \N$\quad Header $h \in F \mapsto V$\quad Stack $s \in \lst(V)$\quad Packet $p \in \Pk \Coloneqq \langle h, s \rangle$

  \caption{\StacKAT syntax, meaning, and semantics.}%
  \label{fig:syn}
\end{figure}

\section{A Tour of \StacKAT}%
\label{sec:tour}

A \StacKAT program operates on packets of the form $\langle h, s \rangle \in \Pk$, comprising a finite record mapping header fields to values $h \in F \to \V$, as well as a stack $s \in \lst(\V)$. Values are drawn from a fixed, finite subset $V \subseteq \N$ of the naturals. The syntax and semantics of the \StacKAT language constructs are given in \Cref{fig:syn}. The formal semantics $\Sem{e} \subseteq \Pk \times \Pk$ of a \StacKAT expression is a relation between input and output packets, such that an input packet produces zero or more output packets. A pair of \StacKAT programs $e$ and $f$ are equivalent (written $e \equiv f$) if they denote the same relation on input and output packets ($\Sem e = \Sem f$). We have, for instance
\[
\push{v} \cdot \pop{v} \equiv 1, \text{\qquad but \qquad} \pop{v} \cdot \push{v} \nequiv 1,
\]
as $(\langle h, s \rangle, \langle h, s \rangle) \in \Sem{1}$ for all stacks $s$, but $(\langle h, s \rangle, \langle h, s \rangle) \in \Sem{\pop{v} \cdot \push{v}}$ only for stacks that start with $v$ on top.
Consider now the \StacKAT equivalence
\[
 \push{v}^* \cdot \pop{v}^* \ \equiv\ \push{v}^* + \pop{v}^*
\]
At first sight, these programs appear different---the first one sequentially executes an arbitrary number of $\push v$ followed by an arbitrary number of $\pop v$ whereas the second one branches and then either does sequences of only $\push v$ or only $\pop v$. However, in the \StacKAT semantics these are equivalent because doing some number of $\push{v}$'s followed by some other number of $\pop{v}$'s will, on net after canceling out $\push{v}\cdot \pop{v} \equiv 1$, result in either only $\push{v}$'s or only $\pop{v}$'s, depending on whether the number of pushes or pops was larger.

\paragraph{Exercise} Which of the following \StacKAT programs are equivalent?
\begin{align*}
  e_1 &\triangleq \pop{v}^* \cdot \push{v}^* &&&
  e_2 &\triangleq (\push{v} + \pop{v})^* \\
  e_3 &\triangleq \push{v}^* \cdot (\pop{v} \cdot \pop{v})^* &&&
  e_4 &\triangleq (\pop{v}\cdot\pop{v})^* \cdot\push{v}^* \\
  e_5 &\triangleq (\push{v} \cdot \push{v})^* \cdot \pop{v}^* &&&
  e_6 &\triangleq \pop{v}^* \cdot (\push{v}\cdot\push{v})^*
\end{align*}

\textbf{The main goal of this paper is to design a decision procedure that enables us to automatically check program equivalence of \StacKAT programs,} not just for simple examples involving a single value $v$ as above, but for arbitrary \StacKAT programs involving multiple different values as well as local variables. Furthermore, if two programs are not equivalent, we would like to find a counterexample input-output pair. As we will see, designing such a procedure in the presence of the stack-related equivalences induced by the semantics of \StacKAT requires care and is not a simple reduction to \NetKAT or other well-studied automata frameworks.


\subsection{Syntactic Sugar}

\StacKAT is a minimal core calculus, designed for exploring theoretical questions. Nevertheless, in presenting examples, it will be useful to use various derived constructs, which can be defined as syntactic sugar.
General Boolean expressions can be supported using $+$ for disjunction and $\cdot$ for conjunction, and negation via De Morgan's laws. For instance, inequality of two fields:
\begin{align*}
  f \testNE g &\ \triangleq\  \neg \sum_{v \in V} f \test v \cdot g \test v\ \equiv\ \prod_{v \in V} (f \testNE v + g \testNE v)
  \text{\qquad where \qquad} f \testNE v \triangleq \sum_{v' \in V \setminus \{ v \}} f \test v'
\end{align*}
We can use these Boolean expressions inside conditionals and loops~\cite{Kozen1996}:
\begin{align*}
  \textbf{if } f \test v \textbf{ then } e_1 \textbf{ else } e_2 \triangleq\  f \test v \cdot e_1 + f \testNE v \cdot e_2 &&
  \textbf{while } f \test v \textbf{ do } e \triangleq\  (f \test v \cdot e)^* \cdot f \testNE v
\end{align*}
We can also define syntactic sugar for assigning one field to another and for comparing fields:
\begin{align*}
  f \mut g \triangleq\  \sum_{v \in V} f \test v \cdot g \mut v &&
  f \test g \triangleq\  \sum_{v \in V} f \test v \cdot g \test v
\end{align*}
Similarly, we can define syntactic sugar for pushing a field onto the stack, and popping the top off the stack and storing the result in a field:
\begin{align*}
  \push{f} \triangleq\  \sum_{v \in V} f \test v \cdot \push{v} &&
  \pop{f} \triangleq\  \sum_{v \in V} \pop{v} \cdot f \mut v
\end{align*}

Note that the definitions of many of these derived constructs rely on the finiteness of the value domain $V$. Of course, enumerating the finite (but presumably very large) domain of values would be totally impractical in an implementation. However, this is not an unreachable approach for a mathematical treatment of the language and its core properties, as with \NetKAT.


\subsection{What Can Be Expressed in \StacKAT}%
\label{sec:applications}

\StacKAT can be used to encode the behavior of a wide range of networking applications, including complex behaviors that would be difficult or impossible to model in \NetKAT. We first discuss how network behavior is modeled in both \NetKAT and \StacKAT, and then discuss \StacKAT's increased expressiveness compared to \NetKAT.

\paragraph{Modeling the Behavior of a Single Switch}
The behavior of a switch is modeled as a program $e$ that takes a packet as input and produces zero or more packets as output.
The header fields of the packet are modeled as local variables.
One special field $\mathsf{sw}$, called the switch field, is used to track the current switch that the packet resides on.
To move the packet to a different switch, we assign $\mathsf{sw} \mut n$ for the target switch $n$.
Typically, the behavior of a switch program $e$ is to inspect some of the packet's header fields (using $\text{if } f \test v$ \text{ then \ldots\ else \ldots}), and then either drop the packet (using $0$), or send it to a neighboring switch (using $\mathsf{sw} \mut n$), possibly after modifying some of the packet's header fields (using $f \mut v$). A switch can also output more than one packet (using $+$), such as when a switch forwards a packet to multiple neighbors.

\paragraph{Modeling the Behavior of a Network of Switches}
Let $e_1, e_2, \dots, e_n$ be the programs that model the behavior of the switches in the network.
The behavior of the entire network is then given by the program $e \triangleq (\mathsf{sw}\test 1 \cdot e_1 + \mathsf{sw}\test 2 \cdot e_2 + \cdots + \mathsf{sw}\test n \cdot e_n)^*$. This program first tests which switch the packet is currently on, and then behaves as the program for that switch. The Kleene star is used to encode the fact that a packet can traverse multiple switches.

\paragraph{Decidability of Equivalence}
Whether two \NetKAT programs are equivalent is decidable, because variables only take on finitely many different values.
Equivalence of programs can be used to express a wide range of properties about network behavior, e.g.,
\begin{itemize}
  \item Reachability: is there any packet that can go from switch $i$ to switch $j$? ($\mathsf{sw}\test i \cdot e \cdot \mathsf{sw}\test j \equiv 0$)
  \item Slice isolation: can the network be partitioned into two virtual networks? ($e \equiv e_1 + e_2$)
\end{itemize}
A key goal of \StacKAT is to remain decidable even as we add more expressive constructs.

\paragraph{Increased Expressiveness of \StacKAT}
\StacKAT's increased expressiveness comes from the ability to model not only the header fields of the packet, but also the payload or remainder of the packet, which can be manipulated using $\push{v}$ and $\pop{v}$. This allows to model the following features, which are impossible to express in \NetKAT:
\begin{itemize}
  \item \textbf{Packet parsing and unparsing:} whereas \NetKAT assumes that the header fields of the packet are already parsed, the pop instruction allows \StacKAT to model parsing the start of the payload of the packet into header fields, and the push instruction to unparse the header fields back into the payload.
  \\\emph{Example.} Parsing headers: $\pop{f_1} \cdot \pop{f_2}$.
  Serializing headers: $\push{f_1} \cdot \push{f_2}$.
  \item \textbf{Source routing:} in source routing, the packet contains a list of routing instructions that determine the desired path of the packet. \StacKAT can model source routing by storing the routing instructions in the packet and then popping them off the stack at each hop.
  \\\emph{Example.} Go to new switch based on top of stack: $\pop{\text{sw}}$.
  \item \textbf{Tunneling:} tunneling protocols are used to transport packets through sub-networks that use different protocols to carry packets. \StacKAT can model tunneling by pushing the new headers onto the stack when the packet enters the sub-network, and stripping them off when exiting the network.
  \\\emph{Example.} Enter tunnel: $\push{f_1} \cdot \push{f_2} \cdot f_1 \mut v_1 \cdot f_2 \mut v_2$. Exit tunnel: $\pop{f_1} \cdot \pop{f_2}$.
  \item \textbf{Segment Routing and MPLS:} In segment routing and MPLS, each switch removes a routing instruction as in source routing, but may then also add new instructions onto the packet that cause it to follow the right segment. \StacKAT can model this hierarchical behavior found in both Segment Routing and MPLS by pushing and popping values.
  \\\emph{Example.} Do stack transformation: $\pop{v_1} \cdot \push{v_2} \cdot \push{v_3} + \pop{w_1} \cdot \push{w_2}$.
  \item \textbf{String matching:} \StacKAT can model regular matching over routing instructions by nesting pop instructions in a regular expression.
  \\\emph{Example.} Match string: $\pop{a} \cdot (\pop{b} + \pop{c})^*$.
  \item \textbf{Telemetry:} \StacKAT can model simple forms of telemetry by pushing tracked values.
  \\\emph{Example.} Push telemetry value: $\push{t}$.
\end{itemize}

\paragraph{\NetKAT's $\mathit{dup}$ Operator} \StacKAT can model \NetKAT's $\mathsf{dup}$ operator, which appends the current packet to a trace for observing internal network behavior. By pushing all fields onto the stack ($\mathsf{dup} \triangleq \prod_{f \in F} \push{f}$), \StacKAT simulates \NetKAT's trace semantics---e.g., $f \mut 1 \cdot \mathsf{dup} \cdot f \mut 2 \nequiv f \mut 2$ even though $f \mut 1 \cdot f \mut 2 \equiv f \mut 2$. This encoding preserves \NetKAT equivalence: two \NetKAT programs are equivalent if and only if their \StacKAT translations are equivalent.

\paragraph{Combinations of Features}
The combination of features above is possible (e.g., parsing, tunneling, MPLS, etc.), provided that only a single stack is used. Use of two multiple stacks makes the equivalence problem undecidable~\cite{Hopcroft2006}. In fact, this remains true even if one of the stacks is constrained to be read-only or write-only (e.g., MPLS + telemetry on separate stacks), as the problem is then the same as equivalence for context-free grammars, which is known to be undecidable~\cite{Hopcroft2006}.

\section{Decidability of Equivalence}%
\label{sec:decpushpop}

This section presents our procedure for deciding equivalence of \StacKAT programs:
\begin{align*}
  \text{Given } e_1 \text{ and } e_2, \text{ decide whether } e_1 \equiv e_2 \text{ or } e_1 \nequiv e_2.
\end{align*}
We first present a decision procedure for pure push-pop \StacKAT programs without local variables, i.e., without the operations $f \test v$ and $f \mut v$. Our decision procedure for full \StacKAT programs will invoke this procedure as a subroutine. Programs in the push-pop fragment can be viewed as regular expressions over the alphabet
\(
\Sigma = \{\push{v} \mid v \in V\} \cup \{\pop{v} \mid v \in V\}
\).
In this view, each word in the language accepted by the regular expression represents a trace of push and pop operations.

We wish to decide equivalence between push-pop programs via this interpretation, and the connection between regular languages and automata. The regular language $L(e)$ of a push-pop program is not the same as $\Sem{e}$, but a connection can be obtained through a series of canonicalizations that take into account the following non-trivial equivalences induced by the semantics (\Cref{fig:syn}).

First, pushing a value $v$ and then popping $v$ is the identity:
\begin{align*}
  \push{v} \cdot \pop{v} \equiv 1 \phantom{\text{\qquad if } v \neq w} \tag{push-pop}
\end{align*}

Second, pushing $v$ and then popping a different value $v \neq w$ fails:
\begin{align*}
  \push{v} \cdot \pop{w} \equiv 0 \text{\qquad if } v \neq w \tag{filter}
\end{align*}

Third, popping $v$ and then pushing $v$ back is \emph{almost} the identity:
\begin{align*}
  \pop{v} \cdot \push{v} + 1 \equiv 1 \qquad\qquad \tag{pop-push}
\end{align*}

The intuition for the third property is that $\pop{v} \cdot \push{v}$ does not alter the stack, provided that the top of the stack is $v$, and therefore its behaviors are included in those of $1$.

Our approach to deciding equivalence proceeds as follows:
\begin{enumerate}
  \item We first convert the \StacKAT programs $e$ and $f$ to finite automata that accept $L(e)$ and $L(f)$.
  \item We then perform a series of three canonicalization steps corresponding to the three axioms above to obtain canonical automata that accept the canonicalized languages $\overline{L(e)}$ and $\overline{L(f)}$.
  \item Finally, we decide whether $\overline{L(e)} = \overline{L(f)}$ are language equivalent.
\end{enumerate}

We then show that language equivalence of the canonicalized languages implies semantic equivalence of the original \StacKAT programs:
\begin{align*}
  \Sem{e}=\Sem{f} \iff \overline{L(e)}=\overline{L(f)}
\end{align*}

We proceed to describe the canonicalization steps in detail, and then show how to extend the decision procedure to \StacKAT programs with local variables.

\subsection{Push-Pop Canonicalization}

\newcommand{\pushpop}[1]{\ensuremath{\mathsf{pushpop}(#1)}}

Let us assume that we have a \StacKAT program $e$ that contains only push and pop operations, and that we have a language $L(e)$ that represents the traces of push and pop operations that $e$ can perform, as well as an NFA that accepts $L(e)$. This NFA can be constructed from regular expression $e$ using standard constructions, such as Antimirov derivatives~\cite{Antimirov1996}.

The first step of canonicalization reduces matching $\push{v} \cdot \pop{v}$ pairs. We define the operation $\pushpop{L}$ on a language $L$, which closes $L$ under the reduction $\push{v} \cdot \pop{v} \to \epsilon$. In order to make it possible to perform the canonicalization on automata as well, $\pushpop{L}$ adds reduced traces while keeping original traces in $L$. This ensures that $\pushpop{L}$ is a regular language whenever $L$ is a regular language. Let us consider an example.

For the singleton language $L = \{ \push{1} \cdot \push{2} \cdot \pop{2} \cdot \pop{1} \cdot \push{3} \cdot \pop{3} \}$,
\begin{align*}
  \pushpop{L} = \{ &\push{1} \cdot \push{2} \cdot \pop{2} \cdot \pop{1} \cdot \push{3} \cdot \pop{3},\\
  &\push{1} \cdot \pop{1} \cdot \push{3} \cdot \pop{3},\ \
  \push{1} \cdot \push{2} \cdot \pop{2} \cdot \pop{1},\\
  &\push{3} \cdot \pop{3},\ \
  \push{1} \cdot \pop{1},\ \
  \epsilon \}
\end{align*}
That is, reduced strings are added for all possible reduction orders.
The $\pushpop{L}$ canonicalization on sets of strings is defined in \Cref{fig:pushpoplang}.

\begin{figure}
  \begin{mathpar}
    \inferrule{x \in L}{x \in \pushpop{L}} \and
    \inferrule{x \cdot \push{v} \cdot \pop{v} \cdot y \in \pushpop{L}}{x \cdot y \in \pushpop{L}}
  \end{mathpar}
  \caption{Push-pop canonicalization of a language $L$}%
  \label{fig:pushpoplang}
\end{figure}

\paragraph{Push-Pop Canonicalization for the Trace Automaton}

To perform the canonicalization on the automaton, we add shortcut $\epsilon$ edges for every $\push{v}\cdot \epsilon^* \cdot \pop{v}$ path in the automaton. In the implementation, to avoid $\epsilon^*$ paths, we simultaneously perform $\epsilon$ closure.

The rules for adding epsilon edges are shown in \Cref{fig:pushpopaut}.
First, we add an $\epsilon$ self-loop to each state.
Then, for each $\push{v}-\epsilon-\pop{v}$ path of length 3, we add an $\epsilon$ shortcut edge.
Finally, for each $\epsilon-\epsilon$ path of length 2, we add an $\epsilon$ shortcut edge.
If we perform this closure on an automaton for $L(e)$, the language accepted by the resulting canonicalized automaton is $\pushpop{L(e)}$.

\begin{theoremEnd}[default]{lemma}
  For any NFA over the alphabet $\Sigma = \{\push{v}, \pop{v} \mid v \in V\}$ representing language $L$, the language accepted by the NFA after performing the push-pop canonicalization is $\pushpop{L}$.
\end{theoremEnd}
\begin{proofEnd}
  Let $A$ be an NFA over the alphabet $\Sigma = \{\push{v}, \pop{v} \mid v \in V\}$ representing language $L$. We construct an NFA $\pushpop{A}$ according to the rules of \Cref{fig:pushpopaut}.
  That is, we have a sequence of automata $A_0 = A$, $A_1$, $A_2$, \dots, where $A_{i+1}$ is obtained from $A_i$ by adding $\epsilon$ edges according to an application of one of the rules in \Cref{fig:pushpopaut}. Because at most $n^2$ $\epsilon$-edges can be added to the automaton (one edge for each pair of states), the process stabilizes. The resulting automaton $A_n = \pushpop{A}$ by definition, and application of the rules results in no further changes.

  We then show that $\pushpop{L(A)} = L(\pushpop{A})$ by proving inclusions $\subseteq$ and $\supseteq$.
  \begin{itemize}
    \item $\subseteq$:
    We need to show $\pushpop{L(A)} \subseteq L(\pushpop{A})$, i.e., all strings in the push-pop language are accepted by the push-pop automaton.
    We show this by induction on the rules in \Cref{fig:pushpoplang}.
    For the first rule, we need to show that if $x \in L(A)$ then $x \in L(\pushpop{A})$, which follows from the definition of $\pushpop{A}$, which only adds edges to the automaton.
    For the second rule, we need to show that if $x \cdot \push{v} \cdot \pop{v} \cdot y \in L(\pushpop{A})$, then $x \cdot y \in L(\pushpop{A})$.
    By the induction hypothesis, there is a path in $\pushpop{A}$ that generates $x \cdot \push{v} \cdot \pop{v} \cdot y$.
    Between every letter in the string, the path may have followed any number of $\epsilon$ edges.
    In particular, the path may have followed $\epsilon$ edges between $\push{v}$ and $\pop{v}$.
    By the last two closure rules in \Cref{fig:pushpopaut}, the path may be replaced by a single $\epsilon$ edge between the $\push{v}$ and $\pop{v}$.
    Therefore, the first closure rule will have added an $\epsilon$ edges that skips the intervening $\push{v}$ and $\pop{v}$.
    The result is a path that generates $x \cdot y$, so $x \cdot y \in L(\pushpop{A})$.
    \item $\supseteq$:
    We need to show $L(\pushpop{A}) \subseteq \pushpop{L(A)}$, i.e., all strings accepted by the push-pop automaton are in the push-pop language.
    We show this by induction on the sequence $A_0 = A, A_1, \ldots, \pushpop{A}$:
    we show that for all $i$, $L(A_i) \subseteq \pushpop{L(A)}$.
    For $i=0$, $L(A) \subseteq \pushpop{L(A)}$ follows from the definition of $\pushpop{L(A)}$, which only adds strings to the language.
    For $i>0$, we assume that $L(A_{i-1}) \subseteq \pushpop{L(A)}$, and we show that $L(A_i) \subseteq \pushpop{L(A)}$. Consider the closure rule that was applied to go from $A_{i-1}$ to $A_i$.
    If the rule was the second or third rule, then $L(A_i) = L(A_{i-1})$, and the inclusion follows from the induction hypothesis.
    If the rule was the first rule, then the language does change, as follows:
    for strings of the form $\vec{x} \cdot \push{v} \cdot \pop{v} \cdot \vec{y}$ generated by going from the start state to $q_1$, then taking the $\push{v}$ edge to $q_2$, then taking the $\epsilon$ edge to $q_3$, and taking the $\pop{v}$ edge to $q_4$ and then continuing to a final state, the string $\vec{x} \cdot \vec{y}$ is added to the language accepted by the automaton. By assumption, the string $\vec{x} \cdot \push{v} \cdot \pop{v} \cdot \vec{y}$ is in $\pushpop{L(A)}$.
    By the definition of $\pushpop{L(A)}$, the string $\vec{x} \cdot \vec{y}$ is also in $\pushpop{L(A)}$.
    Therefore, $L(A_i) \subseteq \pushpop{L(A)}$, in particular for $A_n = \pushpop{A}$, we have $L(\pushpop{A}) \subseteq \pushpop{L(A)}$.
  \end{itemize}
\end{proofEnd}

\begin{figure}
  \begin{tikzpicture}[shorten >=1pt, node distance=2cm, on grid, auto]

    \node[state] (q1) at (1.2,0) {$q_1$};
    \node[state] (q2) [right=of q1] {$q_2$};
    \node[state] (q3) [right=of q2] {$q_3$};
    \node[state] (q4) [right=of q3] {$q_4$};

    \path[->]
      (q1) edge node {$\push{v}$} (q2)
      (q2) edge node {$\epsilon$} (q3)
      (q3) edge node {$\pop{v}$} (q4)
      (q1) edge[bend right, dashed, swap] node {$\epsilon$} (q4);

    \node[state] (r1) at (8.4,0) {$q_1$};
    \node[state] (r2) [right=of r1] {$q_2$};
    \node[state] (r3) [right=of r2] {$q_3$};

    \path[->]
      (r1) edge node {$\epsilon$} (r2)
      (r2) edge node {$\epsilon$} (r3)
      (r1) edge[bend right, dashed, swap] node {$\epsilon$} (r3);

    \node[state] (s1) at (0,0) {$q_1$};

    \path[->]
      (s1) edge[loop below, in=225, out=315, looseness=8, dashed] node {$\epsilon$} (s1);

  \end{tikzpicture}
  \caption{Push-pop closure rules. Given the presence of the solid edges, we add the dotted edge. For each $\push{v}-\epsilon-\pop{v}$ path, we add an $\epsilon$ shortcut edge, and we simultaneously take the reflexive transitive closure of the $\epsilon$ edges.}%
  \label{fig:pushpopaut}
\end{figure}
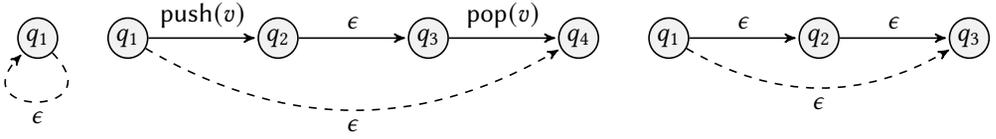

\subsection{Filtering Out Invalid Traces}

\newcommand{\Lpoppush}{L_{\text{pop}^*\text{push}^*}}

The next step is to filter out invalid traces that contain $\push{v}\cdot\pop{w}$ with $v \neq w$. It is important that the push-pop canonicalization from the preceding section is performed first, as this can reveal additional $\push{v}\cdot\pop{w}$ combinations by reducing an intervening substring to $\epsilon$, such as $\push{v}\cdot\push{a}\cdot\pop{a}\cdot\pop{w} \to \push{v}\cdot\pop{w}$.

Having performed the push-pop closure first, we also know that we have reduced any $\push{v} \cdot \pop{w}$ with $v=w$, which makes the original trace with $\push{v} \cdot \pop{w}$ redundant. Therefore, we filter out \emph{all} traces where a $\push{v}$ occurs before a $\pop{w}$, regardless of whether $v \neq w$ or $v = w$. In other words, we only keep the traces of the form
\begin{align*}
  \Lpoppush \triangleq (\pop{v_1} + \cdots + \pop{v_n})^* \cdot (\push{v_1} + \cdots + \push{v_n})^*
\end{align*}
where $V = \{ v_1, \dots, v_n \}$ is the set of values that occur.
We therefore define
\begin{align*}
  \mathsf{filter}(L) \triangleq L \cap \Lpoppush
\end{align*}
The intersection of two regular languages is regular, and we can compute an automaton for $\mathsf{filter}(L)$ from the automaton for $L$.

\paragraph{Example}
We now show a simple example of the procedure on $e = \push{3}^*\cdot \pop{3}^*$ and $e' = \push{3}^* + \pop{3}^*$ to check that they are equivalent.

In the first step, an automaton is constructed for $e = \push{3}^*\cdot \pop{3}^*$ (left), then we take the push-pop closure (middle), and finally we take the intersection with pop-push (right).

\medskip
\begin{minipage}{0.33\textwidth}
 \begin{tikzpicture}[state/.style={draw,ellipse}, shorten >=1pt, node distance=2cm, on grid, auto,every node/.style={font=\footnotesize},inner sep=.05cm]
    \node[state] (q1) at (1,0) {$e$};
    \node[state] (q2) [right=of q1] {$e_0$};

    \path[->]
      (q1) edge[loop below] node {$\push{3}$} ()
      (q1) edge node {$\pop{3}$} (q2)
      (q2) edge[loop below] node {$\pop{3}$} ();
  \end{tikzpicture}
\end{minipage}
\begin{minipage}{0.33\textwidth}
  \begin{tikzpicture}[state/.style={draw,ellipse}, shorten >=1pt, node distance=2cm, on grid, auto,every node/.style={font=\footnotesize},inner sep=.05cm]
    \node[state] (q1) at (1,0) {$e$};
    \node[state] (q2) [right=of q1] {$e_0$};

    \path[->]
      (q1) edge[loop below] node {$\push{3}$} ()
      (q1) edge node {$\pop{3}$} (q2)
      (q2) edge[loop below] node {$\pop{3}$} ()
      (q1) edge[red, bend right, dashed, swap] node {$\epsilon$} (q2);
  \end{tikzpicture}

 \end{minipage}
 \begin{minipage}{0.33\textwidth}
 \begin{tikzpicture}[state/.style={draw,ellipse}, shorten >=1pt, node distance=2cm, on grid, auto,every node/.style={font=\footnotesize},inner sep=.05cm]
    \node[state] (q1) at (1,0) {$e$};
    \node[state] (q2) [right=of q1] {$e_1$};
    \node[state] (q3) at (2,-0.5){$e_2$};

    \path[->]
      (q1) edge node {$\pop{3}$} (q2)
      (q1) edge node [swap] {$\push{3}$} (q3)
          (q2) edge[loop  right] node {$\pop{3}$} ()
      (q3) edge[loop  right] node {$\push{3}$} ();
  \end{tikzpicture}
 \end{minipage}
\medskip

 The automaton on the right is also the automaton one would build for $e' = \push{3}^* + \pop{3}^*$ (in the very first step, as in this case taking push-pop closure, intersect with pop-push does not change the automaton). Hence, the automata for the expressions  $e$ and $e'$ are bisimilar, and as we shall see, it follows that $\Sem{\push{3}^*\cdot \pop{3}^*} = \Sem{\push{3}^* + \pop{3}^*}$.

More generally, we can consider the program $(\push 3^n)^* (\pop 3^m)^*$, for any $n,m\in \mathbb N$, and show that it is in fact equivalent to $(\push 3^k)^* + (\pop 3^k)^*$ where $k=\gcd(n,m)$.

\subsection{Pop-Push Canonicalization}

\newcommand{\poppush}[1]{\ensuremath{\mathsf{poppush}(#1)}}

We have now reduced traces where $\push{v}$ immediately occurs before $\pop{v}$ and limited ourselves to traces where all $\pop{v}$ operations happen before $\push{w}$ operations. The reader may wonder whether we are now done, as all remaining traces are valid and in reduced form. However, we have not yet taken into account the fact that $\pop{v}\cdot \push{v}$ is almost the identity, which results in an interaction between different traces. Consider the following trace sets:
\begin{align*}
  L_1 \triangleq \{ \epsilon,\ \pop{v} \cdot \push{v} \} \qquad\qquad L_2 \triangleq \{ \epsilon \}
\end{align*}
corresponding to the programs $e_1 = 1 + \pop{v} \cdot \push{v}$ and $e_2 = 1$.
These trace sets are in reduced form and not equal, but they are semantically equivalent.

At first sight, it may seem attractive to canonicalize the trace sets by removing the redundant trace $\pop{v} \cdot \push{v}$. However, removing this trace is only valid because the $\epsilon$ trace is present. In general, we cannot remove traces without considering the full context---i.e., possible interactions between different traces. We conjecture that the resulting operation would be exceedingly difficult to define and compute, particularly on the automaton level.

Instead, we define a canonicalization operation $\poppush{L}$ that \emph{adds} traces by inserting $\pop{v}\cdot \push{v}$ in the middle of a trace in all possible ways. The canonicalization is defined in \Cref{fig:poppushlang}.

\begin{figure}
  \begin{mathpar}
    \inferrule{x \in L}{x \in \poppush{L}} \and
    \inferrule{x \cdot y \in \poppush{L} \and \text{$x$ all pop and $y$ all push}}{x \cdot \pop{v} \cdot \push{v} \cdot y \in \poppush{L}}
  \end{mathpar}
  \caption{Pop-push canonicalization of a language $L$}%
  \label{fig:poppushlang}
\end{figure}

This transformation is enough to fully canonicalize the trace language, so that semantic equivalence coincides with language equivalence, as we shall see in the next section. Unfortunately $\poppush{L}$ is not regular, so we cannot compute an automaton for it. In fact, the language is not even visibly pushdown\footnote{This language is not visibly pushdown, because it is not known for each alphabet letter whether it is a call (push), return (pop) or a transition action. For instance a letter $\pop{v}$ in a word $w\in\poppush{L}$ can be both a transition action (in case the letter already appeared in $w$ before the closure) or a return action (in case the letter is a result of the $\mathsf{poppush}$-closure).}~\cite{Alur2004}, so decidability of language equivalence is not a priori guaranteed.
We address this difficulty by shifting perspectives and switching to a different representation.

\paragraph{Zipping the Trace Language}
\newcommand{\zip}[1]{\ensuremath{\mathsf{zip}(#1)}}
\newcommand{\done}{\mathsf{done}}

To address the non-regularity of the language, we switch to a different representation of the trace language. We define the zipped trace language $\mathsf{zip}(L)$, which represents a trace from the middle (i.e., the border between \textsf{pop}s and \textsf{push}es) outwards. For example,
\begin{align*}
  \tau = \pop{1} \cdot \pop{2} \cdot \push{3} \cdot \push{4} &&&
  \zip{\tau} = (\pop{2}, \push{3}) \cdot (\pop{1}, \push{4})
\end{align*}
When there are more pushes than pops or vice versa, we use $\done$ as a placeholder. For example,
\begin{align*}
  \tau' = \pop{1} \cdot \push{2} \cdot \push{3} &&&
  \zip{\tau'} = (\pop{1}, \push{2}) \cdot (\done, \push{3})
\end{align*}
We define the \textsf{zip} operation formally in \Cref{fig:ziplang}.
For strings that are not of the form $\mathsf{pop}^*\mathsf{push}^*$, the zipping operation is undefined. For example, the trace $\tau'' = \pop{1} \cdot \push{2} \cdot \pop{3}$ is not zippable, i.e. $\zip{\tau''} = \bot$.
We extend \textsf{zip} to sets of traces by applying it to each trace in the set:
\begin{align*}
  \zip{L} \triangleq \{\zip{x}\mid x\in L \text{ and }\zip{x} \text{ defined }\}
\end{align*}
This not only zips the language, but also filters out invalid traces, as the zipping operation is only defined for traces of the form $\mathsf{pop}^*\mathsf{push}^*$.

The key advantage of \textsf{zip} is that whereas $\poppush{L}$ is not regular, $\zip{\poppush{L}}$ \emph{is}.
In fact, the pop-push canonicalization can easily be performed on the zipped trace language, and the result is manifestly regular.
As zipping already filters out invalid traces, and is a bijection on the $\Lpoppush$ fragment, we can check the equivalence of the zipped languages instead of the original languages:

\begin{theoremEnd}[default]{lemma}
  $\zip{\filter(L)} = \zip{L}$
\end{theoremEnd}
\begin{proofEnd}
  This is immediate as $\zip{x}$ for a trace $x$ is non-empty only when $x\in \Lpoppush$.
\end{proofEnd}

\begin{theoremEnd}[default]{lemma}
  $ \zip{L_1} = \zip{L_2} \iff L_1 = L_2$ for $L_1,L_2 \subseteq \Lpoppush$
\end{theoremEnd}
\begin{proofEnd}
  Immediate from the definition of zipping, which is injective on $\Lpoppush$.
\end{proofEnd}

The following lemma characterizes the zipped trace language of the pop-push canonicalization of a language $L$.
Let $A \triangleq \{ (\pop{v},\push{v}) \mid v \in V \}$.

\begin{theoremEnd}[default]{lemma}
  $\zip{\poppush{L}} = A^* \cdot \zip{L}$
\end{theoremEnd}
\begin{proofEnd}
  We show that the claim is true for a singleton language $L = \{ x \}$, which implies the claim for any language $L$ because both sides respect union.
  Firstly, consider whether $x$ is of the form $\pop{\vec{v}}\cdot\push{\vec{w}}$.
  If not, then both sides are empty, and the claim holds.
  If $x$ is of that form, we have:
  \begin{align*}
    \poppush{L} =\ &\{
      \pop{\vec{v}}\cdot
      \pop{\vec{u}^R} \cdot \push{\vec{u}}\cdot
      \push{\vec{w}} \mid \vec{u} \in \Sigma^* \}\\
      =\ &\ \pop{\vec{v}}\cdot\{
        \pop{\vec{u}^R} \cdot \push{\vec{u}} \mid \vec{u} \in \Sigma^* \}
        \cdot
        \push{\vec{w}}
  \end{align*}
  where $\vec{u}^R$ is the reverse of $\vec{u}$.
  Therefore
  \begin{align*}
    \zip{\poppush{L}} =
      \underbrace{\zip{\{\pop{\vec{u}^R} \cdot \push{\vec{u}} \mid \vec{u} \in \Sigma^* \}}}_{A^*}\ \cdot\ \
      \zip{\underbrace{\pop{\vec{v}}\cdot\push{\vec{w}}}_{x}}
  \end{align*}
  So indeed, $\zip{\poppush{L}} = A^* \cdot \zip{L}$.
\end{proofEnd}


\begin{figure}
  \begin{align*}
    \zip{\epsilon} &\triangleq \epsilon \\
    \zip{ x \cdot \pop{v} \cdot \push{w} \cdot y} &\triangleq (\pop{v},\push{w}) \cdot \zip{x\cdot y}\\
    \zip{x \cdot \pop{v}} &\triangleq (\pop{v}, \done) \cdot \zip{x} \\
    \zip{\push{v} \cdot y} &\triangleq (\done, \push{v}) \cdot \zip{y}\\
    \zip{a} &\triangleq  \text{undefined otherwise}
  \end{align*}
  \caption{Zipping the trace language; in these equations,, $x \in \pop{V}^*$, $y \in \push{V}^*$}%
  \label{fig:ziplang}
\end{figure}

\paragraph{Zipping the Trace Automaton}

\begin{figure}
  \begin{mathpar}
    \inferrule
      {q_1 \xrightarrow{\pop{v}} q'_1 \and q_2 \xrightarrow{\push{w}} q'_2}
      {(q'_1,q_2) \xrightarrow{(\pop{v},\push{w})} (q_1,q'_2)} \quad
    \inferrule{q_1 \mathsf{\ initial}}
      {(q_1,q_2) \xrightarrow{\epsilon} (\done,q_2)} \quad
    \inferrule{q_2 \mathsf{\ final}}
      {(q_1,q_2) \xrightarrow{\epsilon} (q_1,\done)} \quad
      \inferrule{\ }{(q,q) \mathsf{\ initial}}
      \\
    \inferrule{q_1 \xrightarrow{\pop{v}} q'_1}
      {(q'_1,\done) \xrightarrow{(\pop{v},\done)} (q_1,\done)} \quad
    \inferrule{q_2 \xrightarrow{\push{v}} q'_2}
      {(\done,q_2) \xrightarrow{(\done,\push{v})} (\done,q'_2)} \quad
    \inferrule{\ }{(\done,\done) \mathsf{\ final}}
      \\
    \inferrule{q_2 \xrightarrow{\epsilon} q'_2}
      {(q_1,q_2) \xrightarrow{\epsilon} (q_1,q'_2)} \quad
    \inferrule{q_2 \xrightarrow{\epsilon} q'_2}
      {(\done,q_2) \xrightarrow{\epsilon} (\done,q'_2)} \quad
    \inferrule{q_1 \xrightarrow{\epsilon} q'_1}
      {(q'_1,q_2) \xrightarrow{\epsilon} (q_1,q_2)} \quad
    \inferrule{q_1 \xrightarrow{\epsilon} q'_1}
      {(q'_1,\done) \xrightarrow{\epsilon} (q_1,\done)}
  \end{mathpar}
  \caption{Zipping the trace automaton}%
  \label{fig:zipaut}
\end{figure}

In order to make use of \textsf{zip} for the decision procedure, we need to define a zipping operation on the automaton level.
Given an automaton for a language $L$, the zipped automaton for $\zip{L}$ should traverse the original automaton ``from the middle outwards'', taking a backward pop step and a forward push step simultaneously.
Given a NFA for language $L$, we construct an NFA for $\zip{L}$ as follows.

\begin{description}
  \item[States] $(q_1,q_2)$ where $q_1$ and $q_2$ are either states of the original automaton, or an extra state $\done$.
  \item[Initial states] $(q,q)$ where $q$ is any state of the original automaton.
  \item[Final states] $(\done,\done)$ only.
  \item[Transitions] A transition in the zipped automaton corresponds to a simultaneous backward pop and forward push transition in the original automaton. We also add $\epsilon$ transitions to $\done$ for the initial and final states. If the first or second component of the state is $\done$, we can step in the zipped automaton with a single step in the original automaton. The transitions are defined in \Cref{fig:zipaut}.
\end{description}

\begin{theoremEnd}[default]{lemma}\label{lemma:zipregular}
  For any NFA over the alphabet $\Sigma = \{\push{v}, \pop{v} \mid v \in V\}$ representing language $L$, the language accepted by the zipped NFA is $\zip{L}$.
\end{theoremEnd}
\begin{proofEnd}
  We first show that any string in $\zip{L}$ is accepted by the zipped automaton.
  If $z \in \zip{L}$, then there exists $x \in \filter(L)$ such that $z = \zip{x}$.
  Therefore, there exists a path $q_i$ of some length $n$ in the original automaton from a start state to a final state, corresponding to the letters of $x$.
  We construct a path in the zipped automaton as follows.
  First, remove $\epsilon$-transitions from the path, giving a path $q'_i$ of length $n' \neq n$.
  Let $m$ be the ``middle'' state $q'_m$ on the path when the string transitions from the last pop to the first push.
  Now we shall construct a path in the zipped automaton.
  Let $Q'_0 = (q'_m,q'_m)$ as the first state of the zipped path (which is a start state in the zipped automaton).
  More generally, let $Q'_i = (q'_{m-i},q'_{m+i})$ be the states on the zipped path, taking $q'_k = \done$ for $k<0$ or $k\geq n$ until $Q'_N = (\done, \done)$.
  The path $Q'$ is not yet a path in the zipped automaton, because we have removed $\epsilon$-transitions. However, for every $i$ there is a path with additional $\epsilon^*$-transition from $Q'_i$ to $Q'_{i+1}$ in the zipped automaton, so we can expand the path $Q'$ to a longer path $Q'$ that takes these $\epsilon$-transitions and accepts the string $z$.
  Therefore, any string in $\zip{L}$ is accepted by the zipped automaton.

  For the other direction is straightforward. Take a path in the zipped automaton. Transforming this path into a path in the original automaton: for start state $(q,q)$ in the zipped automaton, we start in state $q$ in the original automaton, and then each step in the zipped automaton extends this path in either both directions or in a single direction.
\end{proofEnd}

With the zipped trace automaton in \Cref{fig:zipaut}, we can can now compute the pop-push closure on the automaton level by precomposing the automaton with the $A^*$ language.

\begin{theoremEnd}[default]{lemma}
  For any NFA over the alphabet $\Sigma = \{\push{v}, \pop{v} \mid v \in V\}$ representing language $L$, the language accepted by the NFA after performing the push-pop canonicalization, zipping, and precomposing with $A^*$ is $\zip{\poppush{\mathsf{filter}(\pushpop{L})}}$.
\end{theoremEnd}
\begin{proofEnd}
  Composition of earlier lemmas.
\end{proofEnd}

\paragraph{Example}
In the preceding example, the automata for $e = \push{3}^* \cdot \pop{3}^*$ and $e' = \push{3}^* + \pop{3}^*$ already became equivalent after the first two steps of canonicalization, and the final step of zipping and pop-push canonicalization was not necessary. However, in general, the final step is necessary to decide equivalence of push-pop programs as the following example illustrates.
Consider the program $e'' = \pop{3}^* \cdot \push{3}^*$ (note that, compared to $e$, the order of \textsf{push} and \textsf{pop} differs here). An automaton for this program is shown on the left below.

The push-pop closure of this automaton is the same as the original automaton, and the intersection with the pop-push automaton is also the same. The resulting automaton is not bisimilar to the final automata of $e$ and $e'$, yet we know that $e''$ is semantically equivalent to $e$ and $e'$. The pop-push closure is needed to decide this equivalence. We therefore construct the zipped automata for $e''$ and $e$. These are shown below; in the middle is the zipped automaton for $e''$, and on the right the zipped automaton for $e$. We have omitted the $\epsilon$ transitions and merged equivalent states for clarity.

\medskip
\begin{minipage}[t]{0.3\textwidth}
  \centering
 \begin{tikzpicture}[state/.style={draw,ellipse}, shorten >=1pt, node distance=2cm, on grid, auto,every node/.style={font=\footnotesize},inner sep=.05cm]
    \node[state] (q1) at (1,0) {$e$};
    \node[state] (q2) [right=of q1] {$e_0$};
    \node[state] (q3) at (2,-1.5){$e_1$};
    \phantom{ \node[state] (q4) at (2,-1.3) {$e_2$}; }

    \path[->]
      (q1) edge node {$\pop{3}$} (q2)
      (q1) edge node [near start, swap] {$\push{3}$} (q3)
      (q2) edge node[near start]  {$\push 3$} (q3)
      (q2) edge[loop  right] node[above=4pt] {$\pop{3}$} ()
      (q3) edge[loop  right] node {$\push{3}$} ();
  \end{tikzpicture}
\end{minipage}
\begin{minipage}[b]{0.34\textwidth}
  \centering
 \begin{tikzpicture}[state/.style={draw,ellipse}, shorten >=1pt, node distance=2cm, on grid, auto,every node/.style={font=\footnotesize},inner sep=.05cm,node distance=1.2cm]
    \node[state] (q2) {};
    \node[state] (q3) [below left=of q2] {};
    \node[state] (q4) [below right=of q2] {};

    \path[->]
      (q2) edge node [swap, near start]  {$(\done,\push 3)$} (q3)
      (q2) edge[loop  above] node {$(\pop{3},\push{3})$} ()
      (q3) edge[loop  below] node {$(\done,\push{3})\quad$} ()
      (q2) edge node [near start] {$(\pop{3},\done)$} (q4)
      (q4) edge[loop  below] node {\quad$(\pop{3},\done)$} ();
  \end{tikzpicture}
\end{minipage}
\begin{minipage}[b]{0.34\textwidth}
  \centering
 \begin{tikzpicture}[state/.style={draw,ellipse}, shorten >=1pt, node distance=2cm, on grid, auto,every node/.style={font=\footnotesize},inner sep=.05cm,node distance=1.2cm]
    \node[state] (q2) {};
    \node[state] (q3) [below left=of q2] {};
    \node[state] (q4) [below right=of q2] {};

    \path[->]
      (q2) edge node [swap, near start] {$(\done,\push 3)$} (q3)
      (q3) edge[loop  below, swap] node[swap] {$(\done,\push{3})\quad$} ()
      (q2) edge node [near start] {$(\pop{3},\done)$} (q4)
      (q4) edge[loop  below] node {\quad$(\pop{3},\done)$} ();
  \end{tikzpicture}
\end{minipage}

 In order to then perform the pop-push closure on these automata, we simply prepend the language $(\pop{3},\push{3})^*$ to the automata. This has no effect on the language of the automaton on the left, as it already starts with $(\pop{3},\push{3})^*$. The automaton on the right, however, is extended with a $(\pop{3},\push{3})$ loop on the leftmost state. This crucial last step makes the resulting automata language equivalent, and next we will see that this allows us to conclude that $\Sem{e''} = \Sem{e}$.

\subsection{Deciding Push-Pop Programs}
For a \StacKAT program $e$ in the push-pop fragment, let $L(e)$ be the language obtained by viewing $e$ as a regular expression over $\Sigma = \{\push{v},\pop{v} \mid v \in v\}$. Further, define
\begin{align*}
  \overline{L} &\triangleq \poppush{\mathsf{filter}(\pushpop{L})}
\end{align*}
Note that the notation $\overline{L}$ does \emph{not} denote the complement of $L$, as it sometimes does in other work, but rather this specific operation.
We have the following theorem:
\begin{theorem}\label{thm:semcanon}
  For two push-pop programs $e$ and $f$, we have $\Sem{e} = \Sem{f} \iff \overline{L(e)} = \overline{L(f)}$.
\end{theorem}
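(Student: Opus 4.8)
The plan is to give an explicit description of the canonicalized language $\overline{L(e)}$ that depends \emph{only} on $\Sem{e}$; both implications then fall out. Since push-pop programs have no header fields, a packet is just a stack, so $\Sem{e}\subseteq\lst(V)\times\lst(V)$. For a single trace $w\in\Sigma^*$, read as a program, write $\SemR{w}$ for its semantics (so $\SemR{\push{v}}=\{(s,v::s)\}$, $\SemR{\pop{v}}=\{(v::s,s)\}$, and $\SemR{\cdot}$ composes along concatenation); then $\Sem{e}=\bigcup_{w\in L(e)}\SemR{w}$. Two elementary facts drive everything: (a) $\SemR{x\cdot\push{v}\cdot\pop{v}\cdot y}=\SemR{x\cdot y}$, while $\SemR{x\cdot\push{v}\cdot\pop{w}\cdot y}=\emptyset$ and $\SemR{x\cdot\pop{v}\cdot\push{v}\cdot y}\subseteq\SemR{x\cdot y}$ when $x$ is all-pop and $y$ is all-push; and (b) for a normal-form trace $w=\pop{v_1}\cdots\pop{v_k}\cdot\push{u_1}\cdots\push{u_m}\in\Lpoppush$ one has $\SemR{w}=\set{(v_1\cdots v_k\cdot s,\ u_m\cdots u_1\cdot s)}{s\in\lst(V)}$ (stacks written top-first), and this assignment $w\mapsto\SemR{w}$ is injective on $\Lpoppush$.

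First I would prove \emph{soundness of canonicalization}: $\bigcup_{w\in\overline{L(e)}}\SemR{w}=\Sem{e}$. The $\mathsf{pushpop}$ step only adds traces with the same semantics, the $\mathsf{poppush}$ step only adds traces whose semantics is contained in one already present (fact (a)), and the $\mathsf{filter}$ step removes only redundant traces, by the following observation: any $w\in\pushpop{L(e)}$ with $\SemR{w}\neq\emptyset$ reduces, by repeatedly deleting an adjacent $\push{v}\cdot\pop{v}$ factor, to some $w'\in\Lpoppush$ with $\SemR{w'}=\SemR{w}$; each deletion stays in $\pushpop{L(e)}$ (closure rule of \Cref{fig:pushpoplang}), each preserves $\SemR{\cdot}$, and the process terminates since length strictly decreases, so $w'\in\pushpop{L(e)}\cap\Lpoppush=\mathsf{filter}(\pushpop{L(e)})$. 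The termination point must lie in $\Lpoppush$: if $w'\notin\Lpoppush$ then (taking the first $\pop{}$ occurring after the first $\push{}$) $w'$ has an adjacent $\push{v}\cdot\pop{w}$ factor; if $v=w$ the process would continue, and if $v\neq w$ then $\SemR{w'}=\emptyset$, contradicting $\SemR{w'}=\SemR{w}\neq\emptyset$. Hence all three steps preserve the union of trace semantics.

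The technical core is a \emph{structural description of} $\mathsf{poppush}$: for $w,w'\in\Lpoppush$,
\[
  \SemR{w}\subseteq\SemR{w'} \iff w\in\poppush{\{w'\}}.
\]
Direction $(\Leftarrow)$ is fact (a) applied iteratively: inserting a $\pop{v}\cdot\push{v}$ at the pop/push border only shrinks the relation. For $(\Rightarrow)$, write $w=\pop{\vec v}\cdot\push{\vec u}$ and $w'=\pop{\vec v'}\cdot\push{\vec u'}$ in normal form; comparing the explicit relations of fact (b), and evaluating at long enough stacks $s$, forces $\vec v'$ to be a prefix of $\vec v$, $\vec u'$ a suffix of $\vec u$, and the ``extra'' pops $\vec v=\vec v'\cdot a_1\cdots a_d$ and pushes $\vec u=a_d\cdots a_1\cdot\vec u'$ to be mirror images, i.e.\ the inserted middle segment is a nested stack of $\pop{a}\cdot\push{a}$'s, so $w\in\poppush{\{w'\}}$. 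Writing $M(e)\triangleq\mathsf{filter}(\pushpop{L(e)})\subseteq\Lpoppush$, this gives
\[
  \overline{L(e)}=\poppush{M(e)}=\set{w\in\Lpoppush}{\exists\,w'\in M(e).\ \SemR{w}\subseteq\SemR{w'}}.
\]

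Finally I would identify $\overline{L(e)}=\set{w\in\Lpoppush}{\SemR{w}\subseteq\Sem{e}}$, which makes $\overline{L(e)}$ a function of $\Sem{e}$ and so proves $\Sem{e}=\Sem{f}\Rightarrow\overline{L(e)}=\overline{L(f)}$; the converse is immediate from soundness. Inclusion $\subseteq$ combines the boxed description of $\overline{L(e)}$ with soundness ($\SemR{w'}\subseteq\Sem{e}$ for $w'\in M(e)$). For $\supseteq$, take $w=\pop{v_1}\cdots\pop{v_k}\cdot\push{u_1}\cdots\push{u_m}\in\Lpoppush$ with $\SemR{w}\subseteq\Sem{e}$; then the single pair $(v_1\cdots v_k,\,u_m\cdots u_1)$ (that is, $\SemR{w}$ at $s=\eps$) lies in $\Sem{e}=\bigcup_{w''\in L(e)}\SemR{w''}$, hence in $\SemR{\tilde w}$ for some $\tilde w\in L(e)\subseteq\pushpop{L(e)}$; the reduction from the soundness step turns $\tilde w$ into some $w'\in M(e)$ with $\SemR{w'}=\SemR{\tilde w}$ still containing that pair. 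Because $w'\in\Lpoppush$, containing the $s=\eps$ pair of $w$ forces, by the same normal-form computation as in $(\Rightarrow)$ above, $w\in\poppush{\{w'\}}$, whence $\SemR{w}\subseteq\SemR{w'}\subseteq\Sem{e}$ and $w\in\poppush{M(e)}=\overline{L(e)}$. I expect the main obstacle to be the $(\Rightarrow)$ direction of the structural description of $\mathsf{poppush}$ — showing that semantic inclusion on $\Lpoppush$ coincides \emph{exactly} with the $\mathsf{poppush}$-insertion order — together with the bookkeeping to justify the ``$s=\eps$ pair suffices'' shortcut; the termination argument for $\mathsf{filter}$ is routine by comparison.
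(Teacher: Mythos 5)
Your proof is correct, and its skeleton matches the paper's: first show that canonicalization preserves the union-of-trace semantics, then show that the canonicalized language is recoverable from the semantics alone. The interesting difference is in how the second half is carried out. The paper introduces an auxiliary ``exact-match'' interpretation $[L]' = \{(\langle \emp,s\rangle,\langle\emp,s'\rangle)\mid \pop{s}\push{s'}\in L\}$, proves $[\overline{L}]=[\overline{L}]'$ (\Cref{lem:semprime}, on the grounds that $\mathsf{poppush}$ adds exactly the traces accounting for the extra stack suffix), and then invokes the trivial injectivity of $[\cdot]'$ on normal-form languages (\Cref{lem:semprimetriv}). You instead prove a structural characterization of the pop-push closure --- for $w,w'\in\Lpoppush$, $\Sem{w}\subseteq\Sem{w'}$ iff $w\in\poppush{\{w'\}}$ --- and use it to identify $\overline{L(e)}$ as the semantic downward closure $\{w\in\Lpoppush\mid\Sem{w}\subseteq\Sem{e}\}$, which is manifestly a function of $\Sem{e}$. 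The two mechanisms are equivalent (your ``$s=\eps$ pair'' is precisely what $[\cdot]'$ reads off a trace), but your packaging has two advantages: it makes explicit the combinatorial content of the paper's \Cref{lem:semprime} (the inserted middle segment must be a mirror-image $\pop{\sigma^R}\push{\sigma}$ block, forced by matching the normal forms of \textsf{zip}-able traces), and it supplies the reduction argument showing that $\mathsf{filter}$ discards only semantically empty or redundant traces --- a step the paper's \Cref{lem:semcanonsame} dismisses as an ``easy consequence of the definitions'' but which genuinely requires the termination-plus-case-analysis argument you give. The one place to be careful is the direction of your structural lemma used in the final paragraph: there you only know that the $s=\eps$ pair of $w$ lies in $\Sem{w'}$, not the full containment $\Sem{w}\subseteq\Sem{w'}$; as you note, the normal-form computation only needs the single pair, so this is a presentational wrinkle rather than a gap.
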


\newcommand{\emp}{\mathsf{emp}}

To prove this theorem, we define a semantic interpretation of a language $L$ as follows:
\begin{align*}
  [L]=\bigcup_{x\in L}\Sem{x}
\end{align*}
i.e., the semantic interpretation of a language is the union of the semantic interpretations of its strings, viewed as \StacKAT programs (a string can be viewed as a program in the obvious way, where the empty string is the program $1$).
The following lemma connects the semantic interpretation of a push-pop program with the language of its traces.
\begin{theoremEnd}[default]{lemma}\label{lem:semtraces}
  For a push-pop program $e$, we have $\Sem{e} = [L(e)]$.
\end{theoremEnd}
\begin{proofEnd}
  By induction on $e$.
\end{proofEnd}
i.e., the semantics of a push-pop program agrees with the semantics of all its push-pop traces.

Second, the canonicalization rules are semantically valid w.r.t. $\sem{-}$:
\begin{theoremEnd}[default]{lemma}\label{lem:semcanonsame}
  $\sem{L} = \sem{\overline{L}}$
\end{theoremEnd}
\begin{proofEnd}
  Easy consequence of the definitions.
\end{proofEnd}
We define an alternative semantic interpretation of a language $L$ as follows:
\begin{align*}
  \sem{L}' = \{(\langle \emp,s \rangle, \langle \emp,s' \rangle)\mid \pop{s}\push{s'}\in L\}
\end{align*}
where $\emp$ is the empty packet header, and $\pop{s}$ on an entire stack $s=s_1::\ldots :: s_n$ (as opposed to a single value) is defined as $\pop{s_1}\cdots\pop{s_n}$, and similarly, $\push{s}$ is defined as $\push{s_n}\cdots\push{s_1}$.
On languages of the form $\overline{L}$, these two semantics agree:
\begin{theoremEnd}[default]{lemma}\label{lem:semprime}
  $\sem{\overline{L}} = \sem{\overline{L}}'$
\end{theoremEnd}
\begin{proofEnd}
  The inclusion $\supseteq$ is trivial, and the inclusion $\subseteq$ follows because $[L]'$ differs from $[L]$ only in that $[L]$ allows stepping with stacks that may contain additional values below the strings in $L$, whereas $[L]'$ requires the stack to exactly match the strings in $L$. However, $\poppush{L}$ adds additional strings to $L$ with $\pop{v}\push{v}$ pairs in the middle, which precisely match the possible additional values in the stack.
\end{proofEnd}
The inclusion $\supseteq$ is trivial, and the inclusion $\subseteq$ follows because $[L]'$ differs from $[L]$ only in that $[L]$ allows stepping with stacks that may contain additional values below the strings in $L$, whereas $[L]'$ requires the stack to exactly match the strings in $L$. However, $\poppush{L}$ adds additional strings to $L$ with $\pop{v}\push{v}$ pairs in the middle, which precisely match the possible additional values in the stack.

We also have the following (trivial) lemma:
\begin{theoremEnd}[default]{lemma}\label{lem:semprimetriv}
  If strings in $L_1,L_2$ are of the form $\pop{s}\push{s'}$, then
    $[L_1]' = [L_2]' \iff L_1 = L_2$
\end{theoremEnd}
\begin{proofEnd}
  Easy consequence of the definitions.
\end{proofEnd}

We can now return to \Cref{thm:semcanon}.
\begin{proof}[Proof (of \Cref{thm:semcanon})]
By \Cref{lem:semtraces}, it suffices to prove
$
  [L(e)] = [L(f)] \iff \overline{L(e)} = \overline{L(f)}
$.
By \Cref{lem:semcanonsame}, it suffices to prove
$
  [\overline{L(e)}] = [\overline{L(f)}] \iff \overline{L(e)} = \overline{L(f)}
$.
By \Cref{lem:semprime}, it suffices to prove
$
  [\overline{L(e)}]' = [\overline{L(f)}]' \iff \overline{L(e)} = \overline{L(f)}
$.
We obtain the desired result by \Cref{lem:semprimetriv}.
\end{proof}

We conclude that equivalence of \StacKAT programs in the push-pop fragment is decidable:

\begin{theorem}
  To decide whether $\Sem{e} = \Sem{f}$, convert $e$ and $f$ to NFAs, perform the push-pop closure, zipping, and pop-push closure, obtaining automata for $\zip{\overline{L(e)}}$ and $\zip{\overline{L(f)}}$, and check bisimilarity of these resulting automata.
\end{theorem}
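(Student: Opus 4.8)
The plan is to read this theorem as a corollary that assembles the pieces developed earlier in the section, and to make explicit the three equivalences that chain together. First I would invoke \Cref{thm:semcanon}, which already gives $\Sem{e} = \Sem{f} \iff \overline{L(e)} = \overline{L(f)}$, reducing semantic equivalence to equality of the canonicalized trace languages. Next I would observe that $\overline{L(e)}$ and $\overline{L(f)}$ both lie inside $\Lpoppush$: the $\mathsf{filter}$ step intersects with $\Lpoppush$, and the $\poppush{-}$ step preserves membership in $\Lpoppush$ since it only inserts a $\pop{v}\cdot\push{v}$ block at the pop/push boundary, so the result stays of the form $\mathsf{pop}^*\mathsf{push}^*$. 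Hence the earlier lemma asserting that $\zip{-}$ is injective on $\Lpoppush$ applies, yielding $\overline{L(e)} = \overline{L(f)} \iff \zip{\overline{L(e)}} = \zip{\overline{L(f)}}$. Chaining these, $\Sem{e} = \Sem{f} \iff \zip{\overline{L(e)}} = \zip{\overline{L(f)}}$.

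Second, I would check that the stated pipeline really does compute NFAs for these two zipped languages, and that every step is effective and keeps the automaton finite. Starting from an NFA for $L(e)$ obtained by the standard regular-expression-to-NFA construction, the push-pop closure lemma shows the result accepts $\pushpop{L(e)}$; intersecting with the regular language $\Lpoppush$ (via a product construction) gives an NFA for $\mathsf{filter}(\pushpop{L(e)})$; \Cref{lemma:zipregular} shows the zip construction of \Cref{fig:zipaut} yields an NFA accepting $\zip{\mathsf{filter}(\pushpop{L(e)})}$; and prepending an automaton for $A^*$ realizes the identity $\zip{\poppush{L}} = A^* \cdot \zip{L}$, so the final automaton accepts exactly $\zip{\overline{L(e)}}$, and symmetrically for $f$. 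Termination and finiteness come for free: the push-pop closure adds at most $n^2$ $\epsilon$-edges, zipping produces at most $(n+1)^2$ states, and the $A^*$ prefix is a fixed finite gadget, so the whole procedure is an algorithm producing finite NFAs.

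Finally, deciding $\zip{\overline{L(e)}} = \zip{\overline{L(f)}}$ is just language equivalence of two finite NFAs, which is decidable. There is no real obstacle in this theorem; the only point that deserves a sentence of care is that NFA bisimilarity is strictly finer than language equivalence, so ``check bisimilarity'' must be read as bisimilarity after determinization (subset construction) or, more efficiently, bisimilarity up to congruence in the Bonchi–Pous style on the determinized transition structure — a standard move. Combining the three equivalences above with decidability of this last check gives the theorem, and since each construction is constructive the procedure also produces, in the inequivalent case, a zipped word distinguishing the two automata, which unzips to a concrete input–output witness as discussed later.
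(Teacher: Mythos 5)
Your proof is correct and assembles exactly the ingredients the paper intends for this theorem, which it states without explicit proof as a corollary of \Cref{thm:semcanon}, the injectivity of $\mathsf{zip}$ on $\Lpoppush$ (noting, as you do, that $\overline{L(e)},\overline{L(f)} \subseteq \Lpoppush$), and the automaton-construction lemmas. Your added observation that ``check bisimilarity'' must be read as a language-equivalence check (e.g., bisimilarity after determinization), since NFA bisimilarity is strictly finer than language equivalence, is a worthwhile clarification that the paper glosses over.
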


 \subsection{Computational Complexity}
We establish the computational complexity of the decision problem for the push-pop fragment of \StacKAT, as well as a matching hardness result. Checking equivalence of \StacKAT programs with only push operations and no pop operations (or vice versa) is identical to checking language equivalence of these programs viewed as regular expressions, as no interaction between the two types of operations is possible if only one type of operation is present. Checking language equivalence of regular expressions is known to be PSPACE-complete. We therefore conclude:

\begin{theorem}
  The equivalence problem for the push-pop fragment of \StacKAT is PSPACE-hard.
\end{theorem}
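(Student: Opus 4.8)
The plan is a polynomial-time (indeed logspace) reduction from the equivalence problem for regular expressions over a finite alphabet, which is PSPACE-complete by the classical Meyer--Stockmeyer theorem. Given regular expressions $r_1,r_2$ over an alphabet $\{a_1,\dots,a_k\}$, I would fix $k$ distinct values $v_1,\dots,v_k \in V$ and define the syntactic translation $\phi$ that replaces each letter $a_i$ by $\push{v_i}$ and commutes with the operators $+$, $\cdot$, $(-)^*$ and with the constants $0$ and $1$. This produces \StacKAT programs $e_i \triangleq \phi(r_i)$ that use only push operations (hence lie in the push-pop fragment) and have size linear in $|r_i|$. Since a PSPACE-hard subproblem of push-pop equivalence makes the whole push-pop equivalence problem PSPACE-hard, it suffices to show that $r_1$ and $r_2$ are language-equivalent iff $e_1 \equiv e_2$.

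For the correctness of the reduction I would chain three observations. First, $L(e_i) = \phi(L(r_i))$, where $\phi$ now denotes the induced letter-to-letter renaming on words; since this renaming is a bijection on letters it is a bijection on languages, so $L(r_1)=L(r_2) \iff L(e_1)=L(e_2)$. Second, by \Cref{lem:semtraces} we have $\Sem{e_i} = [L(e_i)] = \bigcup_{w\in L(e_i)} \Sem{w}$. Third, on words built only from pushes the map $w\mapsto\Sem{w}$ is itself injective: executing $w=\push{v_{i_1}}\cdots\push{v_{i_m}}$ on an input $\langle h,s\rangle$ yields exactly $\langle h,\, v_{i_m}::\cdots::v_{i_1}::s\rangle$, so $\Sem{w}$ is the graph of the function $\langle h,s\rangle\mapsto\langle h,\, w^{R}::s\rangle$, where $w^{R}$ is the reversed sequence of values pushed by $w$, and distinct words give distinct functions. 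Hence $[L_1]=[L_2]\iff L_1=L_2$ for any two push-only languages, and combining the three steps gives $r_1\equiv r_2 \iff \Sem{e_1}=\Sem{e_2}$.

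I do not expect any genuinely hard step here: the whole argument is a relabelling together with the faithfulness remark that turning letters into distinct pushes neither merges nor splits any pair of programs, which is precisely the injectivity of $w\mapsto\Sem{w}$ noted above. The dual reduction using pops---mapping $a_i\mapsto\pop{v_i}$ and reading $\Sem{w}$ off the stack prefix that $w$ consumes---works identically, matching the ``or vice versa'' remark. Since regular-expression equivalence is PSPACE-hard and the reduction runs in polynomial time, PSPACE-hardness of equivalence for the push-pop fragment follows; together with the matching PSPACE upper bound obtained from the canonicalization-based decision procedure, this pins the complexity down as PSPACE-complete.
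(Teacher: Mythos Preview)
Your proposal is correct and takes essentially the same approach as the paper: reduce from regular-expression equivalence by viewing a push-only (or pop-only) \StacKAT program as a regular expression over the push (resp.\ pop) alphabet, noting that without the other operation there are no push--pop interactions and hence language equivalence coincides with semantic equivalence. The paper states this in one sentence, while you spell out the faithfulness of the encoding via the injectivity of $w\mapsto\Sem{w}$ on push-only words (which is witnessed on the empty input stack); both arguments are the same reduction.
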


The problem remains in PSPACE even in the presence of \emph{both} push and pop operation, as our decision procedure builds a polynomially sized automaton from the input, and then performs a NFA equivalence check, which can be done in PSPACE;\@ thus, the problem is PSPACE-complete.

\begin{theorem}
  Our decision procedure for the push-pop fragment of \StacKAT is in PSPACE\@.
\end{theorem}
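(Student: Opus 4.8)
The plan is to walk through the decision procedure stage by stage, maintaining a polynomial bound on the size of the automaton at every point, and then appeal to the fact that equivalence of NFAs is in PSPACE. Concretely, I would argue that (i) each stage of the construction increases the number of states by at most a polynomial factor; (ii) each stage is computable in time polynomial in the size of its input, hence in polynomial space; and (iii) the final equivalence check on the two resulting NFAs is a PSPACE problem. Since polynomial-time (a fortiori polynomial-space) preprocessing composed with a PSPACE decision problem stays in PSPACE, this yields the claim.

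For the size bookkeeping, I would start from the NFAs: viewing $e$ and $f$ as regular expressions over $\Sigma = \{\push{v}, \pop{v} \mid v \in V\}$ and applying a standard construction (e.g.\ Antimirov partial derivatives, as suggested earlier) gives NFAs with $O(|e|)$ and $O(|f|)$ states, computable in polynomial time; note also that only the values actually occurring in $e$ and $f$ are relevant to any canonicalization (in particular to $\Lpoppush$), so the effective alphabet has size $O(|e|+|f|)$. The push-pop closure adds only $\epsilon$-edges --- by the argument already given, at most one per ordered pair of states, so $O(n^2)$ new edges and no new states, reaching a fixed point in $O(n^2)$ rounds, each round scanning for applicable rule instances in polynomial time. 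The filter step intersects with $\Lpoppush$, which is accepted by a fixed two-state automaton (a ``pop phase'' and a ``push phase'') with $O(|V|)$ self-loops, so the product construction at most doubles the state count. Zipping replaces the state set $Q$ by a subset of $(Q \cup \{\done\})^2$ --- a quadratic blowup --- with transitions read directly off pairs of transitions of the input automaton according to \Cref{fig:zipaut}. Finally, the pop-push closure is realized by precomposing with $A^*$, where $A = \{(\pop{v},\push{v}) \mid v \in V\}$, i.e.\ by prepending one fresh initial state carrying $|A| = O(|V|)$ self-loops plus an $\epsilon$-transition to the old initial states: a constant number of extra states and polynomially many extra edges. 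Composing these bounds, the automata for $\zip{\overline{L(e)}}$ and $\zip{\overline{L(f)}}$ have polynomially many states and transitions and are constructed in polynomial time.

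It then remains to decide equivalence of these two NFAs. This is in PSPACE: by the standard on-the-fly subset construction one nondeterministically searches for a witness of non-equivalence while tracking, for each automaton, the subset of states reachable on the guessed prefix, which needs only polynomial space; since $\mathrm{NPSPACE} = \mathrm{PSPACE}$, and since we feed it polynomially-sized inputs produced in polynomial time, the whole procedure runs in PSPACE. The one delicate point --- and the main obstacle --- is the bookkeeping around the $\epsilon$-transitions introduced and manipulated by the push-pop and zip steps: one must verify that they never enlarge the state space and can be kept implicit (or removed by a standard polynomial-time $\epsilon$-elimination) so that the polynomial bounds genuinely survive through the entire pipeline. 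Once that is checked, the PSPACE upper bound follows, and together with the PSPACE-hardness already established it gives PSPACE-completeness for push-pop equivalence.
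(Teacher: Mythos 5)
Your proposal is correct and follows essentially the same route as the paper, which simply observes that the procedure builds a polynomially sized automaton and then invokes the PSPACE bound for NFA equivalence. Your stage-by-stage size accounting (quadratic for zipping, constant-factor for filtering and pop-push precomposition, only $\epsilon$-edges added by push-pop closure) just makes explicit the polynomial bound the paper asserts.
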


\section{Decidability of Equivalence for Full \StacKAT}%
\label{sec:decfull}

\newcommand{\head}{\alpha}
\newcommand{\Head}{\mathsf{Head}}
\newcommand{\stack}{\sigma}
\newcommand{\traces}[3]{\ensuremath{\mathsf{traces}_{#1}^{#2}(#3)}}
\newcommand{\canontraces}[3]{\ensuremath{\overline{\mathsf{traces}}_{#1}^{#2}(#3)}}

We now extend the decision procedure to the full \StacKAT language, including tests and modifications, by reducing the problem to the push-pop fragment. For any given input and output packet headers $\head_1$ and $\head_2$, we define the trace language $\traces{\head_1}{\head_2}{e}$ of a \StacKAT program $e$ that can be observed when the input packet header is $\head_1$ and the output packet header is $\head_2$. We will present a method for constructing an NFA for $\traces{\head_1}{\head_2}{e}$. We then show that the equivalence problem $\Sem{e} = \Sem{f}$ can be reduced to the equivalence problem for the trace languages $\traces{\head_1}{\head_2}{e}$ and $\traces{\head_1}{\head_2}{f}$ for all $\head_1$ and $\head_2$, and that the latter problem can be solved using the decision procedure for the push-pop fragment.

To start, we define the trace language of a \StacKAT program with tests and modifications.
We have a trace language for any given input and output packet header values.
For example, consider
\begin{align*}
  e \triangleq (f \test 1 \cdot \push{1} + f \mut 2 \cdot \push{2})^*
\end{align*}
The trace languages $\traces{\head_1}{\head_2}{e}$ fall into several cases. The trace language is:
\begin{itemize}
  \item $\push{1}^* \cdot \push{2}^*$ for input packet header $f=1$, and output packet header $f = 2$.
  \item $\push{1}^*$ for input packet header $f=1$, and output packet header $f = 1$.
  \item $\push{2}^*$ for input packet header $f \neq 1$, and output packet header $f = 2$.
  \item $\emptyset$ for input packet header $f \neq 1$, and output packet header $f \neq 2$.
\end{itemize}

Intuitively, $\traces{\head_1}{\head_2}{e}$ captures an over-approximation of stack manipulations found in $e$ that can take place while $e$ takes input packet header $\head_1$ to output packet header $\head_2$.
The trace language is formally defined in \Cref{fig:tracelang}, by recursion on the structure of the program $e$.

\subsection{The Trace Automaton}

The trace language $\traces{\head_1}{\head_2}{e}$ for given input and output headers is a regular language, which we prove by showing that it can be represented by a finite automaton. Furthermore, the explicit construction of this automaton shows how to implement a possible decision procedure.
We represent states in the automaton as pairs $\langle \head, e \rangle$, where $\head$ is the current packet header and $e$ is the current expression. \Cref{fig:traceaut} defines the transitions of the trace automaton.
The automaton takes $\epsilon$ transitions for tests and modifications, and to decompose the union and star operations. For a sequence $e_1 \cdot e_2$, the automaton takes a transition on $e_1$ until $e_1$ is finished, and then continues with $e_2$.
For push and pop, the automaton takes a transition labeled with the corresponding operation.

To construct the NFA for $\traces{\head_1}{\head_2}{e}$, we start with the initial state $\langle \head_1, e \rangle$ and take transitions according to the rules in \Cref{fig:traceaut}. The final states are those of the form $\langle \head_2, 1 \rangle$.

We also need to show that the resulting automaton is always finite. The $\head$ component of $\langle \head, e \rangle$ ranges over a finite set, but it may not be clear that the $e$ component only reaches a finite number of distinct expressions, because the star rule $e^* \to e \cdot e^*$ can enlarge the expression. However, the automaton steps only to expressions of the form $e_1 \cdot e_2 \cdots e_k$ where each $e_i$ is strictly smaller than $e_{i+1}$, and $e_k$ smaller than $e$, where ``smaller'' means that the expression is a subexpression, possibly with certain occurrences of $f \test v$, $f \mut v$, $\push{v}$, $\pop{v}$ replaced by $1$. Therefore, the number of distinct reachable expressions is finite.

\begin{theoremEnd}[default]{lemma}
The language accepted by the NFA for $\traces{\head_1}{\head_2}{e}$ is indeed $\traces{\head_1}{\head_2}{e}$.
\end{theoremEnd}
\begin{proofEnd}
  Analogous to the same proof for Antimirov derivatives.
\end{proofEnd}

\begin{figure}
  \begin{align*}
    \traces{\head}{\head'}{0} &\triangleq \emptyset &&&
    \traces{\head}{\head'}{1} &\triangleq \begin{cases}
      \{ \epsilon \} & \text{if $\head = \head'$} \\
      \emptyset & \text{otherwise}
    \end{cases}\\
    \traces{\head}{\head'}{f \test v} &\triangleq \begin{cases}
      \{ \epsilon \} & \text{if $\head = \head'$ and $\head_f = v$} \\
      \emptyset & \text{otherwise}
    \end{cases} &&&
    \traces{\head}{\head'}{f \mut v} &\triangleq \begin{cases}
      \{ \epsilon \} & \text{if $\head' = \head[f \mut v]$} \\
      \emptyset & \text{otherwise}
    \end{cases}\\
    \traces{\head}{\head'}{\push{v}} &\triangleq \begin{cases}
      \{ \push{v} \} & \text{if $\head = \head'$} \\
      \emptyset & \text{otherwise}
    \end{cases} &&&
    \traces{\head}{\head'}{\pop{v}} &\triangleq \begin{cases}
      \{ \pop{v} \} & \text{if $\head = \head'$} \\
      \emptyset & \text{otherwise}
    \end{cases}
  \end{align*}
  \begin{align*}
    \traces{\head}{\head'}{e_1 + e_2} &\triangleq \traces{\head}{\head'}{e_1} \cup \traces{\head}{\head'}{e_2} \\
    \traces{\head}{\head'}{e_1 \cdot e_2} &\triangleq \{ s_1 \cdot s_2 \mid s_1 \in \traces{\head}{\beta}{e_1},\ s_2 \in \traces{\beta}{\head'}{e_2}, \beta \in F \to V \} \\
    \traces{\head_1}{\head_2}{e^*} &\triangleq \bigcup_{n \in \mathbb{N}} \traces{\head_1}{\head_2}{e^n}, \qquad \text{where $e^0 \triangleq 1$ and $e^{n+1} \triangleq e^n \cdot e$}
  \end{align*}
  \caption{The trace language of a \StacKAT program.}%
  \label{fig:tracelang}
\end{figure}

\begin{figure}
  \begin{mathpar}
    \inferrule{\head_f = v}{\langle \head, f \test v \rangle \xrightarrow{ \epsilon } \langle \head, 1 \rangle} \and
    \inferrule{}{\langle \head, f \mut v \rangle \xrightarrow{ \epsilon } \langle \head[f \mut v], 1 \rangle} \and
    \inferrule{}{\langle \head, \push{v} \rangle \xrightarrow{ \push{v} } \langle \head, 1 \rangle} \and
    \inferrule{}{\langle \head, \pop{v} \rangle \xrightarrow{ \pop{v} } \langle \head, 1 \rangle} \and
    \inferrule{}{\langle \head, e_1 + e_2 \rangle \xrightarrow{ \epsilon } \langle \head, e_1 \rangle} \and
    \inferrule{}{\langle \head, e_1 + e_2 \rangle \xrightarrow{ \epsilon } \langle \head, e_2 \rangle} \and
    \inferrule{\langle \head, e_1 \rangle \xrightarrow{ s } \langle \head', e_1' \rangle}{\langle \head, e_1 \cdot e_2 \rangle \xrightarrow{ s } \langle \head', e_1' \cdot e_2 \rangle} \and
    \inferrule{}{\langle \head, 1 \cdot e \rangle \xrightarrow{ \epsilon } \langle \head, e \rangle} \and
    \inferrule{}{\langle \head, e^* \rangle \xrightarrow{ \epsilon } \langle \head, 1 \rangle} \and
    \inferrule{}{\langle \head, e^* \rangle \xrightarrow{ \epsilon } \langle \head, e \cdot e^* \rangle}
  \end{mathpar}
  \caption{The transitions of the trace automaton of a \StacKAT program.}%
  \label{fig:traceaut}
\end{figure}

\subsection{Decidability of Equivalence}

We have shown how to construct an automaton for the trace language and how to perform the canonicalization steps on the automaton level. We obtain the canonical automaton by putting together steps for the push-pop closure, zipping, and pop-push closure:
\begin{align*}
  \canontraces{\head_1}{\head_2}{e} \triangleq \poppush{\zip{\pushpop{\traces{\head_1}{\head_2}{e}}}}
\end{align*}

We can use the canonical automaton to build a decision procedure for equivalence:
\begin{theorem}\label{thm:main}
  $\canontraces{\head_1}{\head_2}{e_1} = \canontraces{\head_1}{\head_2}{e_2}$ for all $\head_1, \head_2$ if and only if $\Sem{e_1} = \Sem{e_2}$
\end{theorem}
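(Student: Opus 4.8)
The plan is to reduce equivalence of full \StacKAT programs to a family of push-pop equivalences, one for each pair of input/output headers, and then appeal to \Cref{thm:semcanon}.

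The heart of the argument is a \emph{decomposition lemma}: for every \StacKAT program $e$, headers $\head_1,\head_2$, and stacks $s,s'$,
\[
  (\langle \head_1, s\rangle, \langle \head_2, s'\rangle) \in \Sem{e}
  \iff
  (\langle \emp, s\rangle, \langle \emp, s'\rangle) \in [\traces{\head_1}{\head_2}{e}],
\]
where $[L] = \bigcup_{x\in L}\Sem{x}$ is the semantic interpretation from \Cref{sec:decpushpop}; this is the analogue of \Cref{lem:semtraces} for full \StacKAT. I would prove it by structural induction on $e$, matching \Cref{fig:tracelang} against the semantics in \Cref{fig:syn}. The base cases are routine: $f\test v$ and $f\mut v$ leave the stack untouched and contribute the trace $\{\epsilon\}$ exactly when they would pass, while $\push v$ and $\pop v$ contribute the singleton $\{\push v\}$ (resp.\ $\{\pop v\}$) exactly when input and output headers agree, matching $\Sem{\push v}$ and $\Sem{\pop v}$. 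The $e_1+e_2$ case is a union on both sides. For $e_1\cdot e_2$ one uses that $\Sem{x_1 \cdot x_2}$ is the relational composition of $\Sem{x_1}$ and $\Sem{x_2}$ for push-pop traces, and that the concatenation defining $\traces{\head_1}{\head_2}{e_1\cdot e_2}$ ranges over an intermediate header $\beta$; so both sides quantify over the same intermediate header and intermediate stack. The case $e^*$ follows by a sub-induction on $n$, since $\Sem{e^*}$ and $\traces{\head_1}{\head_2}{e^*}$ are both the corresponding unions over $e^n$.

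Because a push-pop trace never changes the header, $[\traces{\head_1}{\head_2}{e}]$ consists exactly of the pairs $(\langle h,s\rangle,\langle h,s'\rangle)$ with $(s,s')$ in the set of stack transformations witnessed by the traces, and hence two such interpretations are equal iff they witness the same stack transformations. Combined with the decomposition lemma, this gives $\Sem{e_1}=\Sem{e_2}$ iff $[\traces{\head_1}{\head_2}{e_1}] = [\traces{\head_1}{\head_2}{e_2}]$ for all $\head_1,\head_2$. It then remains to replace each of these semantic equalities by equality of the canonical (zipped) automata. This is exactly what is packaged in (the proof of) \Cref{thm:semcanon}: via \Cref{lem:semcanonsame}, \Cref{lem:semprime}, and \Cref{lem:semprimetriv} one gets $[L]=[L'] \iff \overline{L}=\overline{L'}$ for push-pop languages; and since $\overline{L}\subseteq\Lpoppush$, the zipping lemmas---$\zip{\filter(L)}=\zip{L}$, injectivity of $\zip$ on $\Lpoppush$, and $\zip{\poppush{L}}=A^*\cdot\zip{L}$---convert $\overline{L}=\overline{L'}$ into equality of $\canontraces{\head_1}{\head_2}{e_1}$ and $\canontraces{\head_1}{\head_2}{e_2}$, where $L = \traces{\head_1}{\head_2}{e_i}$. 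Quantifying over all $\head_1,\head_2$ yields the theorem.

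I expect the decomposition lemma, and within it the $e_1\cdot e_2$ and $e^*$ cases, to be the only real obstacle; everything afterwards is assembling already-proved lemmas. The one delicate point there is that $\traces{\head_1}{\head_2}{e}$ only over-approximates $e$'s stack behavior in the weak sense that an individual trace may drop on certain stacks; this is harmless, since such a trace contributes precisely the partial stack relation it should, which coincides with what $\Sem{e}$ does on exactly those stacks, so the semantic interpretations still match exactly.
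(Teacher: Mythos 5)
Your proposal matches the paper's proof: the ``decomposition lemma'' is exactly the pointwise form of the paper's replacement for \Cref{lem:semtraces} (namely $\Sem{e} = \bigcup_{\head_1,\head_2} [\traces{\head_1}{\head_2}{e}]_{(\head_1,\head_2)}$, which the paper also proves by induction on $e$ and on the semantics), and the remainder is the same assembly of \Cref{lem:semcanonsame}, \Cref{lem:semprime}, \Cref{lem:semprimetriv}, and the zipping lemmas per header pair, exactly as in \Cref{thm:semcanon}. The argument is correct and essentially identical to the paper's.
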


The condition on the traces can be checked by language equivalence of the automata, which is decidable. Second, the packet headers $\head_1,\head_2$ range over a finite set. Therefore, we have a decision procedure for equivalence of \StacKAT programs.
The proof of \Cref{thm:main} is entirely analogous to \Cref{thm:semcanon}, except that \Cref{lem:semtraces} is replaced by the following lemma:

\begin{theoremEnd}[default]{lemma}
  For a \StacKAT program $e$, we have $\Sem{e} = \bigcup_{\head_1,\head_2} [\traces{\head_1}{\head_2}{e}]_{(\head_1,\head_2)}$
\end{theoremEnd}
\begin{proofEnd}
  We prove two inclusions separately. For one direction we do induction on $e$, and for the other direction we do induction on the semantics.
\end{proofEnd}
Here $[\traces{\head_1}{\head_2}{e}]_{(\head_1,\head_2)}$ is defined as adding the packet headers to the pure stack traces:
\begin{align*}
  [\traces{\head_1}{\head_2}{e}]_{(\head_1,\head_2)} \triangleq \{ ((\head_1, s), (\head_2, s')) \mid (\langle \emp,s \rangle, \langle \emp,s' \rangle) \in [\traces{\head_1}{\head_2}{e}] \}
\end{align*}
On the right-hand side of this definition, note that $\traces{\head_1}{\head_2}{e}$ contains words over letters of the form $\push{v}$ and $\pop{v}$ exclusively, and so the pairs in $[\traces{\head_1}{\head_2}{e}]$ are always of the form $(\langle h, s\rangle, \langle h, s'\rangle)$ --- i.e., with the output packet the same as the input packet.
The rest of the proof proceeds the same way as for \Cref{thm:semcanon}, but using $[L]_{(\head_1,\head_2)}$ instead of $[L]$.

\subsection{Computational Complexity}
We establish the computational complexity of the decision procedure for full \StacKAT, as well as a matching hardness result.
For the hardness result, we reduce equivalence of regular expressions with squaring (which is EXPSPACE-complete~\cite{Meyer1972}) to equivalence of push-only \StacKAT programs with variables. Squaring makes the problem harder because it allows expressions to be encoded more compactly as $e^2$ instead of $e \cdot e$, which would otherwise lead to an expression that is twice as large. \StacKAT variables can compactly encode squaring:
 $e^2$ can be modeled by $x \gets 0 \cdot ((x=0 + x=1) \cdot e \cdot (x=0 \cdot x\gets1 + x=1\cdot x\gets 2))^* \cdot x \neq 0 \cdot x \neq 1$, picking $x$ fresh for each square that occurs, initializing $x \gets 0$ at the beginning. In more common parlance, $x\gets 0 ; \text{while} (x=0 \lor x=1) \{ e ; \text{ if } x=0 \text{ then } x \gets 1 \text{ else if } x=1 \text{ then } x\gets 2 \}$ runs $e$ twice without duplicating $e$ textually. This increases the size of the expression by only a constant factor, and leads to the following result:

\begin{theorem}
  The equivalence problem for full \StacKAT is EXPSPACE-hard.
\end{theorem}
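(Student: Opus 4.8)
The plan is to give a polynomial-time many-one reduction from the equivalence problem for regular expressions with squaring --- which is EXPSPACE-hard~\cite{Meyer1972} --- to \StacKAT program equivalence, with the target programs lying in the fragment that uses only \textsf{push} and finitely-many-valued local variables (no \textsf{pop}). Given source expressions $r_1, r_2$ over a finite alphabet $\Sigma = \{a_1,\dots,a_k\}$, the reduction will produce \StacKAT programs $T'(r_1), T'(r_2)$ with $L(r_1) = L(r_2)$ if and only if $\Sem{T'(r_1)} = \Sem{T'(r_2)}$. Since the source problem is already EXPSPACE-hard over a two-letter alphabet and \StacKAT's value set $V \subseteq \N$ may be taken as large as convenient, I will assume $V$ contains $k$ distinct values $v_1,\dots,v_k$ for encoding letters together with three further values $0,1,2$ used as loop counters.

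First I would fix the letter encoding $a_i \mapsto \push{v_i}$ and let $T$ act homomorphically on $0$, $1$, $+$, $\cdot$, and ${}^*$. Then a word $w = a_{i_1}\cdots a_{i_n}$ becomes a program whose semantics sends $\langle h, s\rangle$ to $\langle h, v_{i_n}::\cdots::v_{i_1}::s\rangle$, so a push-only translation $p$ of a squaring-free $r$ satisfies $\Sem{p} = \{(\langle h, s\rangle, \langle h, \overline w::s\rangle) \mid w \in L(r),\ \langle h, s\rangle \in \Pk\}$, writing $\overline w$ for $w$ reversed and re-encoded as a stack segment. Because $w \mapsto \overline w :: s$ is length-preserving and injective for each fixed $s$ and leaves the header alone, two such push-only translations satisfy $\Sem{p_1} = \Sem{p_2}$ exactly when $L(r_1) = L(r_2)$; for this it already suffices to probe a single packet, say $h$ arbitrary and $s$ empty.

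The heart of the reduction is the counter gadget of the statement: each occurrence of $e^2$ is replaced by a loop controlled by a \emph{fresh} variable $x$ (distinct variables for nested squarings) that sets $x \mut 0$, repeatedly runs $(x\test 0 + x\test 1)\cdot T(e)\cdot(x\test 0\cdot x\mut 1 + x\test 1 \cdot x\mut 2)$, and then tests $x\testNE 0\cdot x\testNE 1$. Here I would argue by a loop invariant that, since $T(e)$ is a push-plus-fresh-variables program and hence never reads or writes $x$, the body can be iterated \emph{exactly twice}: the loop is entered with $x = 0$, the first iteration takes $x$ from $0$ to $1$, the second from $1$ to $2$, no third entry is possible because the entry guard then fails, and the exit test $x\testNE 0 \cdot x\testNE 1$ holds only after exactly two iterations. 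Hence $\Sem{T(e^2)}$ agrees with $\Sem{T(e)\cdot T(e)}$ on the stack and on every variable other than $x$, while additionally forcing $x$ to $2$ on output regardless of its input value. By induction on the source expression, this relates $\Sem{T(r)}$ to $L(r)$, up to the counter variables.

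To make the translation depend on $r$ only through $L(r)$, I would prefix and suffix $T(r)$ with assignments $x_1 \mut 0 \cdots x_N \mut 0$ resetting all counter variables, where $N$ is chosen (from the reduction's input $(r_1,r_2)$) to be at least the number of squaring occurrences in either expression; call the result $T'(r)$. Then $\Sem{T'(r)} = \{(\langle h, s\rangle, \langle h', \overline w :: s\rangle) \mid w \in L(r),\ \langle h, s\rangle \in \Pk\}$ with $h'$ the header $h$ with all counter variables zeroed, which depends only on $L(r)$ and the fixed $N$; combined with the injectivity observation, $\Sem{T'(r_1)} = \Sem{T'(r_2)}$ iff $L(r_1) = L(r_2)$. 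Each squaring is replaced by a constant-size wrapper and the resets add $O(N)$ assignments, so $(r_1,r_2)\mapsto(T'(r_1),T'(r_2))$ is computable in polynomial time with only a polynomial blow-up, and the reduction is complete. I expect the only genuinely non-routine step to be the loop-invariant correctness of the squaring gadget --- the ``runs exactly twice, counter does not interfere'' argument; the embedding of words as push-traces, the injectivity argument, and the size and time bounds are routine bookkeeping.
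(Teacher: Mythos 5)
Your proposal matches the paper's argument: both reduce from equivalence of regular expressions with squaring (EXPSPACE-complete by Meyer--Stockmeyer) to equivalence of push-only \StacKAT programs with local variables, using exactly the same fresh-counter loop gadget $x \mut 0 \cdot ((x \test 0 + x \test 1)\cdot T(e)\cdot (x \test 0 \cdot x \mut 1 + x \test 1 \cdot x \mut 2))^* \cdot x \testNE 0 \cdot x \testNE 1$ to simulate $e^2$ with only polynomial overhead. The paper presents only this gadget as a sketch; your extra bookkeeping --- injectivity of the word-to-stack encoding and zeroing all counter variables at the end so that the output header depends only on $L(r)$ --- is correct and is in fact needed to make the many-one reduction fully precise.
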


Our decision procedure matches this bound, as the trace automaton of a \StacKAT program is at most exponential in the size of the program, and the subsequent steps are PSPACE:\@
\begin{theorem}
  Our decision procedure for full \StacKAT is in EXPSPACE\@.
\end{theorem}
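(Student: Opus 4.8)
The plan is to carry the size bounds through each stage of the procedure and observe that the space complexity classes involved are robust enough to absorb the accumulated blow-ups. First I would bound the trace automaton: by the construction in \Cref{fig:traceaut} a state is a pair $\langle \head, e' \rangle$ with $\head \in F \to V$ and $e'$ one of the finitely many ``smaller'' expressions reachable from $e$; there are $|V|^{|F|}$ choices of header (exponential in the input) and at most exponentially many choices of $e'$, so the automaton for $\traces{\head_1}{\head_2}{e}$ has at most exponentially many states and is computable in exponential space by a straightforward worklist exploration of the transition rules.

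Next I would argue that none of the three canonicalization stages does worse than a polynomial blow-up of the automaton it is applied to. The push-pop closure of \Cref{fig:pushpopaut} adds at most $n^2$ many $\epsilon$-edges to an $n$-state automaton and creates no new states; the zipping construction of \Cref{fig:zipaut} takes the state set to pairs over (original states together with $\done$), a quadratic blow-up; and the pop-push closure merely precomposes with a fixed constant-size automaton for $A^*$. Composing these, the canonical automaton for $\canontraces{\head_1}{\head_2}{e}$ is still only exponential in $|e|$ and can be produced in exponential space.

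Finally, by \Cref{thm:main}, $\Sem{e_1} = \Sem{e_2}$ holds exactly when $\canontraces{\head_1}{\head_2}{e_1}$ and $\canontraces{\head_1}{\head_2}{e_2}$ are equivalent for every pair $(\head_1, \head_2)$, of which there are $(|V|^{|F|})^2$, i.e., exponentially many. I would enumerate these pairs with a counter of polynomial size and, for each, build the two canonical NFAs (of exponential size) and test their equivalence; equivalence of NFAs of size $N$ is in PSPACE, i.e., decidable in space polynomial in $N$, hence in exponential space in the input. Since the per-pair test only needs working space that can be reused across iterations, the whole loop runs in exponential space, and the procedure is in EXPSPACE. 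The one point that genuinely needs care is this last bookkeeping---confirming that iterating over exponentially many header pairs is harmless because the per-pair work is space-reusable and a single pair is cheap to store; everything else is a routine propagation of the size bounds already discussed in the paper.
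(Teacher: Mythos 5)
Your argument is correct and follows the same route the paper takes: it observes that the trace automaton is at most exponential in the program size, that the canonicalization and zipping steps incur only polynomial blow-up on that automaton, and that the remaining NFA-equivalence checks (over the exponentially many header pairs) run in space polynomial in the automaton size, hence EXPSPACE overall. The paper states this only as a one-sentence remark, so your write-up is simply a more detailed version of the same proof.
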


Note that the encoding of $e^2$ requires one additional fresh variable per square; the equivalence problem with a fixed set of variables is still PSPACE (as this bounds the size of the trace automaton).

\section{Counterexample Extraction and Implementation}%
\label{sec:counterexamples}

The decision procedure for equivalence of \StacKAT programs can be used to extract counterexamples. Given two programs $e$ and $f$ that are not equivalent, the procedure will find a packet header pair $\head_1, \head_2$ such that $\canontraces{\head_1}{\head_2}{e} \neq \canontraces{\head_1}{\head_2}{f}$. The packet headers $\head_1$ and $\head_2$ can be used to construct a counterexample input-output pair for which the programs $e$ and $f$ differ.

Because $\canontraces{\head_1}{\head_2}{e} \neq \canontraces{\head_1}{\head_2}{f}$, the bisimilarity check will find a difference in the automata. The difference can be used to extract a word that is accepted by one automaton but not the other.
Because the final automaton has been zipped, this counterexample word will be of the form $\push{v_1} \cdots \push{v_n} \pop{w_m} \cdots \pop{w_1}$ (after unzipping the word again), where the $v_i,w_i$ are values of the stack symbols. Viewing $\vec{v} = (v_1,\ldots,v_n)$ and $\vec{w} = (w_1,\ldots,w_m)$ as stacks, we can construct a counterexample input-output pair as follows:
\begin{align*}
  \text{Input: } \langle \head_1, \vec{v} \rangle \quad \text{Output: } \langle \head_2, \vec{w} \rangle \text{\quad where \quad}
  (\langle \head_1, \vec{v} \rangle, \langle \head_2, \vec{w} \rangle) \in (\Sem{e} \setminus \Sem{f}) \cup (\Sem{f} \setminus \Sem{e})
\end{align*}
This input-output pair is a counterexample to the equivalence of $e$ and $f$, because either $e$ reaches the output packet but $f$ does not, or vice versa.
In fact, we can obtain a representation of \emph{all} counterexamples by computing the symmetric difference of the languages of the canonical automata of $e$ and $f$ for each pair of packet headers $\head_1,\head_2$.
The symmetric difference of the canonical automata of $e$ and $f$ is a regular language, and the corresponding automaton can be used to enumerate all counterexamples, or to find the smallest.

\subsection{Implementation}

An interactive web-based demonstration of the \StacKAT decision procedure is available online at:
\url{https://apndx.org/pub/iqe6/stackatv1.html}
The user can input two \StacKAT programs, and the decision procedure will output whether the programs are equivalent, and if not, give a counterexample input that produces different outputs for the two programs. The decision procedure relies only on the elementary constructions on finite automata, as described in this paper. The implementation differs from the decision procedure in the paper only in that it supports inequality tests ($f \neq v$) directly, without the need to encode this as a sum over the value space.

\subsubsection{Performance Evaluation}

The StacKAT equivalence checker was evaluated using five benchmark categories, each testing different aspects of the algorithm's performance:

\begin{itemize}
    \item \textbf{Header-Stack:} Tests expressions that combine header tests with stack operations, of the form $(f_1{=}0 \cdot \push{0} + f_1{=}1 \cdot \push{1}) \cdot \ldots \cdot (f_n{=}0 \cdot \push{0} + f_n{=}1 \cdot \push{1}) \cdot (\pop{0} + \pop{1}) \cdot \ldots \cdot (\pop{0} + \pop{1})$ compared with the identity $1$. 

    \item \textbf{Nested Alternation:} Tests how deeply nested choice operations affect performance by comparing left-to-right nesting $((\push{1} + \push{2}) + \push{3})$ with right-to-left nesting $(\push{3} + (\push{2} + \push{1}))$.

    \item \textbf{Stack Depth:} Evaluates performance with expressions that push values onto the stack and then pop them off in reverse order: $\push{1} \cdot \ldots \cdot \push{n} \cdot \pop{n} \cdot \ldots \cdot \pop{1} \equiv 1$. 

    \item \textbf{Kleene Star Nesting:} Tests nested repetition operations with expressions of the form $(\ldots ((\push{1}^*)^*)^* \ldots)^* \equiv \push{1}^*$.

    \item \textbf{Packet Header Independence:} Tests commutativity of header checks with expressions of the form $(h_1{=}v_1) \cdot (h_2{=}v_2) \cdot \ldots \cdot (h_n{=}v_n) \equiv (h_n{=}v_n) \cdot \ldots \cdot (h_2{=}v_2) \cdot (h_1{=}v_1)$. 
\end{itemize}

\begin{figure}[t]
  \centering
  \includegraphics[width=\textwidth]{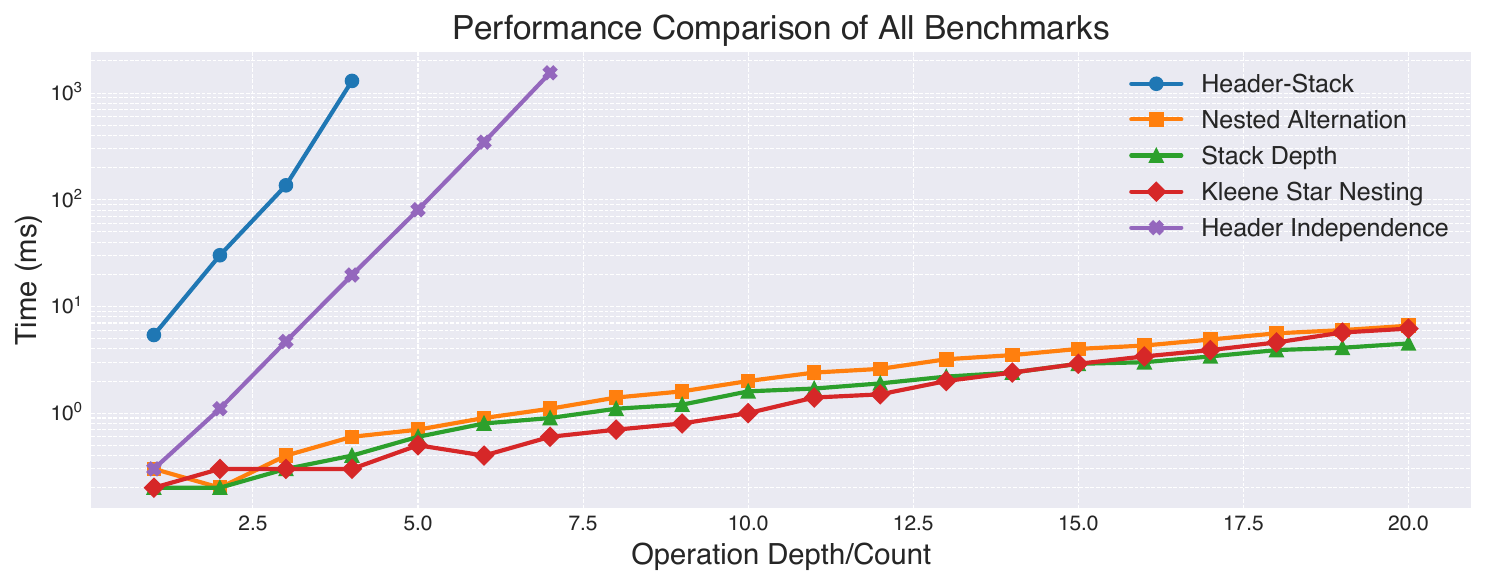}
  \caption{Performance comparison of all benchmarks. The x-axis shows the number of operations or depth for each benchmark, and the y-axis (logarithmic scale) shows execution time in milliseconds.}%
  \label{fig:all-benchmarks}
\end{figure}

As shown in Figure~\ref{fig:all-benchmarks}, the benchmarks exhibit different scaling behaviors:

\begin{itemize}
    \item \textbf{Exponential Growth:} Header-Stack and Header Independence benchmarks show the steepest increase in execution time, with both growing exponentially in the number of operations. These benchmarks involve state space explosions due to independent variables.

    \item \textbf{Moderate Growth:} Nested Alternation, Stack Depth, and Kleene Star Nesting show more moderate growth rates, making them more tractable for larger inputs.
\end{itemize}

The results indicate that the StacKAT equivalence checker performance is primarily affected by header space size, which grows exponentially with the number of independent header variables. Operations that primarily affect stack manipulation without expanding the state space show more favorable performance characteristics.

\subsubsection{Naïveté of the Implementation}

The implementation is naïve in the sense that it enumerates over the packet header space and checks equivalence for each input-output packet header pair. For each such pair, it constructs the zipped automata, and checks language equivalence of the two automata. The alphabet of the automata consists of concrete numeric values that are pushed and popped by the program. We leave the design and implementation of a \emph{symbolic} decision procedure for \StacKAT as future work. A symbolic decision procedure would not naively enumerate over the packet space, nor would it have concrete numeric values on the edges of the automaton: it would instead use BDD-like structures, which have been shown to significantly speed up equivalence checking of NetKAT programs~\cite{Moeller2024}.
A simple version would represent the alphabet on edges symbolically using a BDD, which would allow operations such as generalized pop ($\pop{V}$, where $V$ is a set of values) to be represented as a single edge.
How to best integrate symbolic methods with \StacKAT to allow efficient header-to-stack and stack-to-header transfer is a question for future research, and the current implementation should be seen as an interactive playground for exploration of the theory of \StacKAT.

\section{Axiomatization and Completeness}%
\label{sec:completeness}
\jules{We should harmonize the notation used in this section with the rest of the paper.}

In this section we propose an axiomatization of the push-pop fragment of \stackat, and prove its completeness with respect to the equational theory of $\Sem{-}$.

Our development reuses the language model developed for the decision procedure, and can be divided into two phases:
\begin{itemize}
\item
First, we axiomatize the equational theory of a simple language model corresponding to push-pop canonicalization and filtering invalid traces---i.e., $L_0(-) = \filter(\pushpop{L(-)})$. For this axiomatization, we extend Kleene Algebra (KA) with axioms $\push v \pop v=1$ and $\push{v} \pop{w}=0$ for $v\ne w$.
\item
Next, we extend this axiomatization to the equational theory of $\overline{L}$ (resp.\ $\Sem{-}$, by \Cref{thm:semcanon}), using the axiom $\pop v \push v \leq 1$, and an additional axiom scheme.
Because the interaction of push and pop exhibits non-regular, context-free behavior, we introduce a new language operator $^\dagger$, to allow the formation of a limited class of context-free languages.
\end{itemize}

For brevity, this section highlights the key aspects of our axiomatization and completeness result. For full details, please see the appendix.

\subsection{Completeness for $L_0$}

Let $L_0$ be the language model corresponding to push-pop canonicalization and filtering invalid traces. That is, $L_0(-) = \filter(\pushpop{L(-)})$. To abbreviate notation, let us write $\push{a} \in \Sigma$ simply as $a$, and $\pop{a} \in \Sigma$ as $\bar{a}$.
In this section, we show that the axioms of KA, augmented with the equations $a\bar{a}=1$ and $a\bar{b}=0$ for $a\ne b$ are complete for the equational theory of $L_0$, i.e., $L_0(e_1) = L_0(e_2)$ if $e_1 = e_2$ can be proved from these. 


\smallskip
Recall that the push-pop fragment of \stackat\ is formed by regular expressions over $\Sigma$.
Let $(E,D)$ be the usual Brzozowski derivative on regular expressions~\cite{Brzozowski1964}.
Let $D'$ be like $D$, except it removes symbols from the right instead of the left.
The following two versions of the fundamental theorem for regular expressions can be proved from the laws of KA alone:~\cite{Silva2010}:
\begin{align}
& e = E(e) + \sum_{a\in\Sigma} aD_a(e)
&& e = E(e) + \sum_{a\in\Sigma} D'_a(e)a\label{eq:fundthmZ}
\end{align}
We can straightforwardly extend $D$ and $D'$ to words: for $x\in\Sigma^*$ and $a\in\Sigma$, we define
\begin{align*}
& D_\eps(e) = e
&& D_{xa}(e) = D_{a}(D_{x}(e))
&& D'_\eps(e) = e
&& D'_{xa}(e) = D'_{a}(D'_{x}(e)).
\end{align*}
There are only finitely many derivatives $D_{x}(e)$ and $D'_{x}(e)$ up to KA equivalence~\cite{Brzozowski1964}.

A \emph{pure push} (respectively, \emph{pure pop}) \emph{expression} is a regular expression over letters of the form $a \in \Sigma$ (respectively, $\bar{a} \in \Sigma$).
The expressions $1$ and $0$ are both pure pop and pure push expressions.

\begin{theoremEnd}[default]{lemma}%
\label{lem:sumpushpop}
Every expression $e_1\bar e_2$, where $e_1$ is a pure push expression and $\bar e_2$ a pure pop expression, can be provably transformed to a finite sum of pure push expressions of the form $D'_{xa}(e_1)a$, pure pop expressions of the form $\bar aD_{xa}(\bar e_2)$, or $1$, where $x\in\Sigma^*$ and $a\in\Sigma$.
\end{theoremEnd}
\begin{proofEnd}
We use the fundamental theorem \eqref{eq:fundthmZ} with $D$ applied to $e_1$ and $D'$ applied to $\bar{e}_2$: 
\begin{align*}
e_1 &= E(e_1) + \sum_{a\in \Sigma} D'_a(e_1)a & \bar e_2 &= E(\bar e_2) + \sum_{a\in \Sigma} \bar aD_{a}(\bar e_2).
\end{align*}
We have
\begin{align*}
e_1\bar e_2 &= \Big(E(e_1) + \sum_{a\in \Sigma} D'_a(e_1)a\Big)\Big(E(\bar e_2) + \sum_{b\in \Sigma} \bar b D_b(\bar e_2)\Big)\\
&= E(e_1)E(\bar e_2) + E(\bar e_2)\sum_{a\in \Sigma} D'_a(e_1)a + E(e_1)\sum_{b\in \Sigma} \bar b D_b(\bar e_2) + \Big(\sum_{a\in \Sigma} D'_a(e_1)a\Big)\Big(\sum_{b\in\Sigma}\bar bD_b(\bar e_2)\Big) \\
&= E(e_1)\bar e_2 + E(\bar e_2)e_1 + \sum_{a\in \Sigma} D'_a(e_1)D_a(\bar e_2).
\end{align*}
In the last step, we have simplified the summation by applying the axioms $\push{a}\pop{a} = 1$ and $\push{a}\pop{b} = 0$ when $a\neq b$.
The same analysis applies to $D'_x(e_1)D_x(\bar e_2)$, giving
\begin{align*}
D'_x(e_1)D_x(\bar e_2) &= E(D'_x(e_1))D_x(\bar e_2) + E(D_x(\bar e_2))D'_x(e_1) + \sum_a D'_{xa}(e_1)D_{xa}(\bar e_2)
\end{align*}
for $x\in\Sigma^*$, or more succinctly,
\begin{align}
\alpha_x &= \phi_x + \sum_a \alpha_{xa}\label{eq:alpha}
\end{align}
for $x\in\Sigma^*$, where
\begin{align*}
\alpha_x &= D'_x(e_1)D_x(\bar e_2) & \phi_x = E(D'_x(e_1))D_x(\bar e_2) + E(D_x(\bar e_2))D'_x(e_1).
\end{align*}
There are actually only finitely many equations \eqref{eq:alpha}, as there are only finitely many derivatives of $e_1$ and $\bar e_2$ up to associativity, commutativity, and idempotence (ACI)~\cite{Brzozowski1964}. This gives rise to a finite linear system 
\begin{align}
X_x &= \phi_x + \sum_a X_{xa},\ x\in S\label{eq:system}
\end{align}
where $S\subs\Sigma^*$ is a finite index set large enough that all equations \eqref{eq:alpha} are represented. Then \eqref{eq:system} can be reformulated as a matrix-vector equation $X = \phi + AX$, where $A$ is an $S\times S$ matrix over $\{0,1\}$ and $\phi$ is a vector of sums of pure push or pure pop expressions $\phi_x$. The equations \eqref{eq:alpha} say that $\alpha = \phi + A\alpha$, which says that $X = \alpha$ is a solution of \eqref{eq:system}. 

Now let $\psi$ be the vector with
\begin{align*}
\psi_x &= \sum_{y\in\Sigma^*}\phi_{xy},\ x\in S.
\end{align*}
These are essentially finite sums, again because there are only finitely many derivatives up to ACI\@. We claim that the resulting finite vector $\psi$ is also a solution to \eqref{eq:system}, and is in fact the least solution. To see that it is a solution, 
\begin{align*}
\psi_x &= \sum_y \phi_{xy}
= \phi_x + \sum_{a} \sum_{y} \phi_{xay}
= \phi_x + \sum_a \psi_{xa}.
\end{align*}
It is also the least, since one can show by induction that for any other solution $X$ and any $x$, $\phi_{xy} \le X_x$ for all $y$:
\begin{align*}
\phi_x &\le \phi_x + \sum_a X_{xa} = X_x
&
\phi_{xay} &\le X_{xa}
\le \phi_x + \sum_a X_{xa} = X_x.
\end{align*}
By Kleene algebra~\cite{Kozen1994}, the least solution $\psi$ can be calculated efficiently as $\psi = A^*\phi$, which is a vector of sums of pure push and pure pop expressions.

Let
\begin{align*}
\psi(e_1\bar e_2) &= \psi_\eps = \sum_{x\in\Sigma^*} E(D'_x(e_1))D_x(\bar e_2) + E(D_x(\bar e_2))D'_x(e_1).
\end{align*}
Since $\alpha$ is a solution of \eqref{eq:system}, we know that $\psi(e_1\bar e_2)\le\alpha_\eps=e_1\bar e_2$. 
We claim also that $e_1\bar e_2\le\psi(e_1\bar e_2)$. This will follow from several sublemmas.
\begin{lemma}%
\label{lemma:psi-monotone}
If $f_1 \leq g_1$ and $\bar f_2 \leq \bar g_2$, then $\psi(f_1 \bar f_2) \leq \psi(g_1 \bar g_2)$.
\end{lemma}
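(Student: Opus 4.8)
The plan is to unfold the definition of $\psi$ and reduce the claim to a purely termwise comparison. Recall that
\[
\psi(e_1\bar e_2) \;=\; \sum_{x\in\Sigma^*} \phi_x,
\qquad
\phi_x \;=\; E(D'_x(e_1))\,D_x(\bar e_2) \;+\; E(D_x(\bar e_2))\,D'_x(e_1),
\]
where the sum, though written over all of $\Sigma^*$, is essentially finite: up to associativity, commutativity and idempotence there are only finitely many derivatives of $e_1$ and of $\bar e_2$, hence finitely many possible values of $\phi_x$, and by idempotence the sum already equals $\sum_{x\in T_0}\phi_x$ for any finite $T_0\subseteq\Sigma^*$ containing a representative of every ACI-class occurring among the $\phi_x$. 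So it suffices to show $\psi$ is monotone as a function of the pair $(e_1,\bar e_2)$, built as it is from $+$, $\cdot$, $E$, the derivative operators, and this essentially-finite sum.

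First I would record that every operation from which $\psi$ is built is monotone for the provable order $\le$. For $+$ and $\cdot$ this is immediate from the Kleene algebra axioms; for the constant-part operator $E$ it follows from $E(a+b)=E(a)+E(b)$ together with the fact that $E$ respects provable equality (since $E(a)=1$ iff $\eps\in L(a)$ and Kleene algebra is sound for languages). The key ingredient is monotonicity of the two Brzozowski derivative operators: $D_a(a+b)=D_a(a)+D_a(b)$ and $D'_a(a+b)=D'_a(a)+D'_a(b)$ hold syntactically, and $D_a$ and $D'_a$ descend to the algebra of regular expressions modulo provable equality --- a standard fact about Brzozowski derivatives --- so $a\le b$ implies $D_a(a)\le D_a(b)$ and $D'_a(a)\le D'_a(b)$, and by induction on $x$ the same holds for $D_x$ and $D'_x$. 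Composing these facts, for each fixed $x$ the expression $\phi_x$ is monotone in $(e_1,\bar e_2)$: from $f_1\le g_1$ and $\bar f_2\le\bar g_2$ we get $D'_x(f_1)\le D'_x(g_1)$ and $D_x(\bar f_2)\le D_x(\bar g_2)$, hence $E(D'_x(f_1))\,D_x(\bar f_2)\le E(D'_x(g_1))\,D_x(\bar g_2)$ and likewise $E(D_x(\bar f_2))\,D'_x(f_1)\le E(D_x(\bar g_2))\,D'_x(g_1)$, so adding the two yields $\phi_x(f_1\bar f_2)\le\phi_x(g_1\bar g_2)$.

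Finally I would sum over a common finite index set. Choose a finite $T\subseteq\Sigma^*$ large enough to contain a representative of every ACI-class appearing among the $\phi_x$ for \emph{both} $(f_1,\bar f_2)$ and $(g_1,\bar g_2)$; then $\psi(f_1\bar f_2)=\sum_{x\in T}\phi_x(f_1\bar f_2)$ and $\psi(g_1\bar g_2)=\sum_{x\in T}\phi_x(g_1\bar g_2)$ (adding further terms only adds ACI-duplicates), and summing the termwise inequalities over $x\in T$ gives $\psi(f_1\bar f_2)\le\psi(g_1\bar g_2)$. Equivalently one could argue from the closed form $\psi=A^*\phi$ via monotonicity of the matrix star, but the termwise argument is the most direct. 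The only non-routine point is the monotonicity of $D_a$ and $D'_a$ for the provable order, which rests on the fact that these operators are well defined on Kleene-algebra equivalence classes; everything else is elementary bookkeeping with finite sums and Kleene-algebra monotonicity.
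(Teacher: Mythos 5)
Your proposal is correct and follows essentially the same route as the paper's proof, which is a one-line termwise comparison ``using monotonicity of the derivatives'' over the (essentially finite) sum defining $\psi$. You simply spell out the details the paper leaves implicit --- why $D_x$, $D'_x$, and $E$ are monotone for the provable order, and why the sum can be taken over a common finite index set --- so no changes are needed.
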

\begin{proof}
Using monotonicity of the derivatives,
\begin{align*}
\psi(f_1\bar f_2)
&= \sum_{x\in\Sigma^*} E(D'_x(f_1))D_x(\bar f_2) + E(D_x(\bar f_2))D'_x(f_1)\\
&\le \sum_{x\in\Sigma^*} E(D'_x(g_1))D_x(\bar g_2) + E(D_x(\bar g_2))D'_x(g_1)
= \psi(g_1\bar g_2).
\tag*\qedhere
\end{align*}
\end{proof}

\begin{lemma}%
\label{lemma:insert-left-base}
$a\psi(e_1 \bar e_2) \leq \psi(a e_1 \bar e_2)$.
\end{lemma}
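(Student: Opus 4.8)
The plan is to unfold both sides using the explicit formula $\psi(f_1\bar f_2) = \sum_{x\in\Sigma^*} E(D'_x(f_1))D_x(\bar f_2) + E(D_x(\bar f_2))D'_x(f_1)$ and compare the two sums term by term. On the left we have $a\psi(e_1\bar e_2) = \sum_{x} a\,E(D'_x(e_1))D_x(\bar e_2) + a\,E(D_x(\bar e_2))D'_x(e_1)$. On the right, $\psi(ae_1\bar e_2)$ is the analogous sum but with $e_1$ replaced by $ae_1$. So the heart of the argument is understanding how the right-removing derivative $D'_x$ and the emptiness predicate $E$ interact with a left-concatenation $ae_1$.

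First I would record the key derivative identities. Since $D'$ strips letters from the right, we have $D'_x(ae_1) = a\,D'_x(e_1)$ as long as $D'_x$ has not yet ``eaten through'' all of $e_1$, and once it reaches the leading $a$, one further $D'$-step of the appropriate letter produces (something KA-provably above) $1$ or $0$. More precisely: for every word $x$, either $D'_x(ae_1) = a D'_x(e_1)$ up to KA, or $x = y b$ with $D'_y(ae_1) = a D'_y(e_1)$ and the extra step matches the leading $a$ so that $D'_{yb}(ae_1)$ relates to a derivative of $1$. Correspondingly $E(D'_x(ae_1))$: when $D'_x(ae_1) = aD'_x(e_1)$, its empty part is $0$ (a nonempty leading letter $a$ blocks $\eps$), so those terms of the right-hand sum of the form $E(D'_x(ae_1))D_x(\bar e_2)$ vanish, while the terms $E(D_x(\bar e_2))D'_x(ae_1) = E(D_x(\bar e_2))\,aD'_x(e_1) = a\,E(D_x(\bar e_2))D'_x(e_1)$ reproduce exactly (via $a$ being on the left) the second family of summands of $a\psi(e_1\bar e_2)$. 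That handles $a$ times the ``pure push feedback'' part.

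The remaining, and genuinely harder, part is showing that $a$ times the ``pure pop feedback'' part of $\psi(e_1\bar e_2)$, namely $\sum_x a\,E(D'_x(e_1))D_x(\bar e_2)$, is dominated by contributions in $\psi(ae_1\bar e_2)$ coming from words $x$ long enough that $D'_x$ has consumed all of $e_1$ and begun to interact with the leading $a$. Intuitively, when $D'_{yb}$ strips the final letter $b$ off of $aD'_y(e_1)$ and $b$ is such that $D'_y(e_1)$ can be empty (so $E(D'_y(e_1))\ne 0$), the step acts on the leading $a$, contributing a term of the form $E(D_{yb}(\bar e_2))\cdot(\text{derivative of }1)$ or $E(\cdots)\cdot a\cdot(\cdots)$, which upon simplification using the KA axioms $a\bar a = 1$ and $a\bar b = 0$ and the fundamental theorem collapses to exactly $a E(D'_y(e_1)) D_y(\bar e_2)$. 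I expect this accounting — matching the $a$-prefixed pure-pop summands of the left-hand side with the ``boundary'' summands of $\psi(ae_1\bar e_2)$ where the derivative crosses the leading letter — to be the main obstacle, because it requires carefully tracking the sum index and using $D_{yb}(\bar e_2) = D_b(D_y(\bar e_2))$ together with the cancellation axioms; everything else is bookkeeping with $E(-)$, monotonicity (Lemma~\ref{lemma:psi-monotone}), and the fundamental theorem \eqref{eq:fundthmZ}. Once the term-by-term domination is established, summing over all $x$ and using that $+$ is the join gives $a\psi(e_1\bar e_2) \leq \psi(ae_1\bar e_2)$.
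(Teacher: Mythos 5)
Your proposal follows essentially the same route as the paper: unfold $\psi$ on both sides, dominate the $a\,E(D_x(\bar e_2))D'_x(e_1)$ summands via the (inductively provable) inequality $aD'_x(e_1)\le D'_x(ae_1)$, and match the $a\,E(D'_x(e_1))D_x(\bar e_2)$ summands against the index-$xa$ terms of $\psi(ae_1\bar e_2)$ using the fundamental theorem together with $a\bar a=1$ and $a\bar b=0$. The only detail your sketch leaves implicit is that this last step produces a residual $E(D_x(\bar e_2))\,a$, which must be absorbed by the index-$x$ term $E(D_x(\bar e_2))D'_x(ae_1)$ via the observation that $E(D'_x(e_1))=1$ forces $a\le D'_x(ae_1)$; with that accounted for, the term-by-term domination goes through exactly as in the paper.
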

\begin{proof}
We wish to show
\begin{align*}
\sum_{x\in\Sigma^*} E(D'_x(e_1))aD_x(\bar e_2) + E(D_x(\bar e_2))aD'_x(e_1)
\le \sum_{x\in\Sigma^*} E(D'_x(ae_1))D_x(\bar e_2) + E(D_x(\bar e_2))D'_x(ae_1).
\end{align*}
It suffices to show the two inequalities
\begin{align*}
aD'_x(e_1) &\le D'_x(ae_1) &
E(D'_x(e_1))aD_x(\bar e_2)
&\le E(D'_{xa}(ae_1))D_{xa}(\bar e_2) + E(D_x(\bar e_2))D'_x(ae_1).
\end{align*}
The proof of the left-hand inequality is by induction on $x$. The basis is $aD'_\eps(e_1) = ae_1 = D'_\eps(ae_1)$. For the induction step,
\begin{align*}
aD'_{xb}(e_1)
&= aD'_{b}(D'_{x}(e_1))
\le D'_{b}(aD'_{x}(e_1))
\le D'_{b}(D'_{x}(ae_1))
= D'_{xb}(ae_1).
\end{align*}
For the right-hand inequality, if $E(D'_x(e_1))=0$, we are done. Otherwise, $E(D'_x(e_1))=1$, thus $x^R\in e_1$, thus $(xa)^R = ax^R\in ae_1$, thus also $E(D'_{xa}(ae_1))=1$ and $a\le D'_x(ae_1)$. We now must show
\begin{align*}
aD_x(\bar e_2)
&\le D_{xa}(\bar e_2) + E(D_x(\bar e_2))D'_x(ae_1).
\end{align*}
Using the fundamental theorem,
\begin{align*}
aD_x(\bar e_2)
&= a(E(D_x(\bar e_2)) + \sum_b \bar b D_{xb}(\bar e_2))
= E(D_x(\bar e_2))a + D_{xa}(\bar e_2)\\
&\le E(D_x(\bar e_2))D'_x(ae_1) + D_{xa}(\bar e_2).
\tag*\qedhere
\end{align*}
\end{proof}

\begin{lemma}%
\label{lemma:insert-left}
For all pure push expressions $e_1$ and $d$ and all pure pop expressions $\bar e_2$, it is provable that $d \psi(e_1 \bar e_2) \leq \psi(de_1 \bar e_2)$.
\end{lemma}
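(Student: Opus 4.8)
The statement generalizes Lemma~\ref{lemma:insert-left-base} (the base case $d = a$) to an arbitrary pure push expression $d$. The natural approach is induction on the structure of $d$. There are five cases to consider, corresponding to the constructors of a pure push expression: $d = 0$, $d = 1$, $d = a$ (a single letter), $d = d_1 + d_2$, $d = d_1 \cdot d_2$, and $d = d_1^*$.

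First I would handle the trivial cases. For $d = 0$ we have $0 \cdot \psi(e_1 \bar e_2) = 0 \leq \psi(0 \cdot e_1 \bar e_2)$, which holds since $0$ is below everything. For $d = 1$, both sides equal $\psi(e_1 \bar e_2)$. The case $d = a$ is exactly Lemma~\ref{lemma:insert-left-base}. For $d = d_1 + d_2$, I would use distributivity to write $(d_1 + d_2)\psi(e_1\bar e_2) = d_1\psi(e_1\bar e_2) + d_2\psi(e_1\bar e_2)$, apply the induction hypothesis to each summand to get $d_1\psi(e_1\bar e_2) + d_2\psi(e_1\bar e_2) \leq \psi(d_1 e_1 \bar e_2) + \psi(d_2 e_1 \bar e_2)$, and then use Lemma~\ref{lemma:psi-monotone} together with $d_1 e_1 \leq (d_1 + d_2) e_1$ and $d_2 e_1 \leq (d_1 + d_2)e_1$ to conclude. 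For $d = d_1 \cdot d_2$, I would chain two applications of the induction hypothesis: $d_1 d_2 \psi(e_1\bar e_2) \leq d_1 \psi(d_2 e_1 \bar e_2) \leq \psi(d_1 d_2 e_1 \bar e_2)$, noting that $d_2 e_1$ is again a pure push expression so the hypothesis applies.

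The main obstacle is the Kleene star case $d = d_1^*$. Here I want to show $d_1^* \psi(e_1 \bar e_2) \leq \psi(d_1^* e_1 \bar e_2)$. The standard move is to invoke the fixpoint/induction rule of Kleene algebra: $d_1^* x \leq y$ follows from $x \leq y$ and $d_1 y \leq y$. Taking $x = \psi(e_1\bar e_2)$ and $y = \psi(d_1^* e_1 \bar e_2)$, the first premise $\psi(e_1\bar e_2) \leq \psi(d_1^* e_1\bar e_2)$ follows from Lemma~\ref{lemma:psi-monotone} since $e_1 \leq d_1^* e_1$. The second premise $d_1 \psi(d_1^* e_1 \bar e_2) \leq \psi(d_1^* e_1 \bar e_2)$ follows from the induction hypothesis applied to $d_1$ (which is structurally smaller) with the pure push expression $d_1^* e_1$ in place of $e_1$: this gives $d_1 \psi(d_1^* e_1 \bar e_2) \leq \psi(d_1 d_1^* e_1 \bar e_2)$, and then $d_1 d_1^* e_1 \leq d_1^* e_1$ together with Lemma~\ref{lemma:psi-monotone} finishes it. The delicate point to get right is that the induction is on the structure of $d$ while the second argument $e_1$ varies (it becomes $d_1^* e_1$, $d_2 e_1$, etc.), so the induction hypothesis must be stated with $e_1$ universally quantified over all pure push expressions — which is exactly how the lemma is phrased ("for all pure push expressions $e_1$").
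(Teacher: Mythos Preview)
Your proposal is correct and follows essentially the same approach as the paper: structural induction on $d$, with trivial base cases, Lemma~\ref{lemma:insert-left-base} for a single letter, distributivity plus Lemma~\ref{lemma:psi-monotone} for $+$, a two-step chain for $\cdot$, and the KA star-induction rule for $d_1^*$ (the paper phrases the last as the single premise $\psi(e_1\bar e_2) + d_1\psi(d_1^* e_1\bar e_2) \leq \psi(d_1^* e_1\bar e_2)$, which is equivalent to your two premises). Your observation that $e_1$ must remain universally quantified throughout the induction is exactly the point that makes the $\cdot$ and $*$ cases go through.
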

\begin{proof}
We proceed by induction on $d$. For the basis, the cases $\{0,1\}$ are trivial, and the case $a \in \Sigma$ follows from Lemma~\ref{lemma:insert-left-base}. For the inductive case $d_1 + d_2$,
    \begin{align*}
    (d_1 + d_2)\psi(e_1 \bar e_2)
        &= d_1\psi(e_1 \bar e_2) + d_2\psi(e_1 \bar e_2) \\
        &\leq \psi(d_1e_1 \bar e_2) + \psi(d_2e_1 \bar e_2) && \text{induction hypothesis}\\
        &\leq \psi((d_1 + d_2)e_1 \bar e_2) && \text{\Cref{lemma:psi-monotone}.}
    \end{align*}

The case $d_1d_2$ follows from two applications of the induction hypothesis:
    \begin{align*}
    d_1d_2\psi(e_1 \bar e_2)
        \leq d_1\psi(d_2e_1 \bar e_2)
        \leq \psi(d_1d_2e_1\bar e_2).
    \end{align*}

Finally, for the case $d^*$,
    \begin{align*}
    \psi(e_1 \bar e_2) + d\psi(d^*e_1 \bar e_2)
        &\leq \psi(e_1 \bar e_2) + \psi(dd^* e_1 \bar e_2) && \text{induction hypothesis}\\
        &\leq \psi(d^* e_1 \bar e_2)  && \text{\Cref{lemma:psi-monotone}.}
    \end{align*}
    By an axiom of Kleene algebra, $d^* \psi(e_1 \bar e_2) \leq \psi(d^*e_1 \bar e_2)$.
\end{proof}

\begin{lemma}%
\label{lemma:insert-right}
For all pure push expressions $e_1$ and all pure pop expressions $\bar e_2$ and $\bar d$, it is provable that $\psi(e_1 \bar e_2) \bar d \leq \psi(e_1 \bar e_2 \bar d)$.
\end{lemma}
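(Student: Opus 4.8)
The statement is the mirror image of \Cref{lemma:insert-left}: there one prepends a pure push expression on the \emph{left} of the push component $e_1$, whereas here one appends a pure pop expression on the \emph{right} of the pop component $\bar e_2$. My plan is to replay the whole chain \Cref{lemma:insert-left-base}$\to$\Cref{lemma:insert-left}, interchanging throughout the roles of push and pop, of left and right, and of the two halves of the fundamental theorem \eqref{eq:fundthmZ}: where the push-side argument peels $e_1$ from the right via $D'$ and uses $e_1 = E(e_1)+\sum_a D'_a(e_1)a$, the pop-side argument peels $\bar e_2$ from the left via $D$ and uses $\bar e_2 = E(\bar e_2)+\sum_a \bar a D_a(\bar e_2)$.

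First I would establish the base case $\psi(e_1\bar e_2)\,\bar a \le \psi(e_1\bar e_2\bar a)$ for a single pop symbol $\bar a$ — the analogue of \Cref{lemma:insert-left-base}. Expanding both sides through the explicit formula
\[
  \psi(e_1\bar e_2) \;=\; \sum_{x\in\Sigma^*} E(D'_x(e_1))\,D_x(\bar e_2) \,+\, E(D_x(\bar e_2))\,D'_x(e_1),
\]
it suffices to check, summand by summand, two inequalities mirroring those in the proof of \Cref{lemma:insert-left-base}: first $D_x(\bar e_2)\,\bar a \le D_x(\bar e_2\bar a)$, which is immediate from the Brzozowski product rule since $D_x(\bar e_2\bar a) = D_x(\bar e_2)\bar a + (\text{nonnegative terms})$; and second, that $E(D_x(\bar e_2))\,D'_x(e_1)\,\bar a$ is dominated by the relevant summands of $\psi(e_1\bar e_2\bar a)$. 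For the second I would case-split on $E(D_x(\bar e_2))$: if it is $0$ we are done, and if it is $1$ then $x\in L(\bar e_2)$, so $1 \le D_x(\bar e_2)$, hence $\bar a \le D_x(\bar e_2)\bar a \le D_x(\bar e_2\bar a)$, and moreover the word obtained by appending $\bar a$ to $x$ lies in $L(\bar e_2\bar a)$ so the corresponding $E(\cdot)$ term is $1$; expanding $D'_x(e_1)$ by the right-hand fundamental theorem and collapsing the resulting sum with the interaction axioms $\push v\pop v = 1$ and $\push v\pop w = 0$ (exactly as in the proof of \Cref{lem:sumpushpop}) reduces $D'_x(e_1)\bar a$ to $E(D'_x(e_1))\bar a$ plus a right-derivative of $e_1$, both of which the bounds just obtained dominate. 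I would also note that most summands vanish automatically — $D_x(\bar e_2) = 0$ unless $x$ is a word of pop symbols and $D'_x(e_1) = 0$ unless $x$ is a word of push symbols, so the only genuinely active index is $x = \eps$ — but keeping the inequalities in the ``for all $x$'' form keeps the summand-wise comparison uniform.

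Given the base case, the proof concludes by induction on the structure of $\bar d$, identical in shape to the proof of \Cref{lemma:insert-left}: the cases $\bar d \in \{0,1\}$ are trivial; $\bar d = \bar a \in \Sigma$ is the base case; $\bar d_1 + \bar d_2$ uses distributivity of $\cdot$ over $+$ together with \Cref{lemma:psi-monotone}; $\bar d_1\bar d_2$ applies the induction hypothesis twice (first to $\bar d_1$, then to $\bar d_2$ with the pure pop expression $\bar e_2\bar d_1$) together with right-monotonicity of multiplication; and for $\bar d^*$ one combines the induction hypothesis and \Cref{lemma:psi-monotone} to show $\psi(e_1\bar e_2) + \psi(e_1\bar e_2\bar d^*)\,\bar d \le \psi(e_1\bar e_2\bar d^*)$, whence $\psi(e_1\bar e_2)\,\bar d^* \le \psi(e_1\bar e_2\bar d^*)$ by the right-handed Kleene-algebra star induction axiom $b + xc \le x \Rightarrow bc^* \le x$.

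I expect the base case to be the only real obstacle: one has to mirror the push/pop and left/right directions faithfully, apply the correct half of the fundamental theorem to the pop component, and carry out the $E(D_x(\bar e_2)) \in \{0,1\}$ case analysis — in particular the implication that $E(D_x(\bar e_2)) = 1$ forces $\bar a \le D_x(\bar e_2\bar a)$ — which is the one genuinely nontrivial computation. The subsequent structural induction on $\bar d$ is mechanical and directly parallels \Cref{lemma:insert-left}.
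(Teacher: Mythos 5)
Your proof is correct and is exactly what the paper intends: its own proof of this lemma is literally ``by symmetric arguments to \Cref{lemma:insert-left}'', and you have spelled out that symmetric argument in full (the mirrored base case via the other half of the fundamental theorem and the collapse $D'_x(e_1)\bar a = E(D'_x(e_1))\bar a + D'_{xa}(e_1)$, followed by structural induction on $\bar d$ closing with the right-handed star rule). One harmless inaccuracy: in the sum defining $\psi$ the index $x$ is a word of \emph{values}, read as push letters inside $D'_x(e_1)$ and as pop letters inside $D_x(\bar e_2)$, so indices other than $x=\eps$ genuinely contribute (e.g.\ $e_1 = aa$, $\bar e_2 = \bar a\bar a\bar a$) --- but as you note, your argument never relies on that aside.
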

\begin{proof}
By symmetric arguments to \Cref{lemma:insert-left}.
\end{proof}

\begin{lemma}
For all pure push expressions $e_1$ and all pure pop expressions $\bar e_2$, $e_1 \bar e_2 \leq \psi(e_1 \bar e_2)$.
\end{lemma}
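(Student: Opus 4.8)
The plan is to derive this inequality by bootstrapping from the two insertion lemmas already established (\Cref{lemma:insert-left} and \Cref{lemma:insert-right}), together with a trivial base case. Morally, $\psi$ is an expansion of the product $e_1\bar e_2$ into sum-normal form, and the insertion lemmas say that we may grow its argument one factor at a time while only losing behavior, never gaining it. So starting from the smallest possible argument $1\cdot 1$ and inserting $e_1$ on the left and then $\bar e_2$ on the right should recover all of $e_1\bar e_2$. Combined with the reverse inequality $\psi(e_1\bar e_2)\le\alpha_\eps=e_1\bar e_2$ — which is already known since $\alpha$ solves \eqref{eq:system} and $\psi$ is its least solution — this yields the equality $e_1\bar e_2=\psi(e_1\bar e_2)$ needed for \Cref{lem:sumpushpop}.

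Concretely, I would proceed in three short steps. First, establish the base case $1\le\psi(1\cdot 1)$: since $\psi(e_1\bar e_2)=\sum_{x\in\Sigma^*}\phi_x$, this is immediate from the $x=\eps$ summand $\phi_\eps=E(e_1)\bar e_2+E(\bar e_2)e_1$, which for $e_1=\bar e_2=1$ is just $1+1=1$. Second, multiply on the left by $e_1$ and apply \Cref{lemma:insert-left} with $d:=e_1$ and both arguments of $\psi$ taken to be $1$, obtaining $e_1=e_1\cdot 1\le e_1\,\psi(1\cdot 1)\le\psi(e_1\cdot 1\cdot 1)=\psi(e_1\cdot 1)$; here the last equality collapses $1\cdot 1=1$ inside $\psi$, which is legitimate because $\psi$ respects provable equality by a two-sided use of \Cref{lemma:psi-monotone}. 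Third, multiply on the right by $\bar e_2$ and apply \Cref{lemma:insert-right} with $\bar e_2:=1$ and $\bar d:=\bar e_2$: $e_1\bar e_2\le\psi(e_1\cdot 1)\,\bar e_2\le\psi(e_1\cdot 1\cdot\bar e_2)=\psi(e_1\bar e_2)$, once more collapsing $1\cdot\bar e_2=\bar e_2$ via \Cref{lemma:psi-monotone}.

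The only point demanding any care is the bookkeeping around the unit expressions — making sure that $\psi(e_1\cdot 1)$, $\psi(e_1\cdot 1\cdot 1)$, and $\psi(e_1\bar e_2)$ are all identified correctly — and this is entirely handled by applying \Cref{lemma:psi-monotone} in both directions. All the genuine difficulty has been front-loaded into \Cref{lemma:insert-left-base}, \Cref{lemma:insert-left}, and \Cref{lemma:insert-right} (the delicate inductions on $x$ and on $d$ using the fundamental theorem), so I do not expect this final step to pose a substantive obstacle: it is essentially a three-line assembly of the preceding lemmas.
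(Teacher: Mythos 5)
Your proof is correct and is essentially the paper's own argument: the paper likewise chains $1\leq\psi(1)$ with \Cref{lemma:insert-left} and \Cref{lemma:insert-right} to get $e_1\bar e_2 \leq e_1\psi(1)\bar e_2 \leq \psi(e_1\bar e_2)$. Your extra bookkeeping with the unit factors and the two-sided use of \Cref{lemma:psi-monotone} is just a more explicit rendering of the same three-line assembly.
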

\begin{proof}
Using \Cref{lemma:insert-left,lemma:insert-right} and the fact that $1\leq\psi(1)$,
$e_1\bar e_2 \leq e_1\psi(1)\bar e_2 \leq \psi(e_1 \bar e_2)$.
\end{proof}

We have shown that
\begin{align*}
e_1\bar e_2 &= \sum_{x\in\Sigma^*} E(D'_x(e_1))D_x(\bar e_2) + E(D_x(\bar e_2))D'_x(e_1).
\end{align*}
To obtain the form as specified in the statement of the lemma (Lemma~\ref{lem:sumpushpop}), we can simply write
$D_x(\bar e_2)$ as $E(D_x(\bar e_2)) + \sum_a \bar aD_{xa}(\bar e_2)$ and
$D'_x(e_1)$ as $E(D'_x(e_1)) + \sum_a D_{xa}(e_1)a$ using the fundamental theorem \eqref{eq:fundthmZ}. 
\end{proofEnd}

An expression is in \emph{normal form} if it is a sum of expressions of the form $\bar e_1e_2$, where $\bar e_1$ is a pure pop expression and $e_2$ is a pure push expression. Note that $L(e)=L_0(e)$ if $e$ is in normal form.

\begin{theoremEnd}[default]{theorem}%
\label{thm:normalform}
Every expression can be provably transformed to normal form.
\end{theoremEnd}
\begin{proofEnd}
By induction. For the basis, $a$, $\bar a$, $1$, and $0$ are all in normal form.
For $+$, we transform both terms of the sum individually to normal form; clearly, normal forms are preserved by sums.

For a product of expressions with each factor already in normal form, we multiply out the sums using distributivity.
This gives us a big sum of expressions of the form $\bar{e}_1e_2\bar{e}_1'e_2'$, with $\bar{e}_1$ and $\bar{e}_1'$ pure pop expressions, while $\bar{e}_2$ and $\bar{e}_2'$ are pure push expressions.
It now suffices to show how to transform $\bar{e}_1e_2\bar{e}_1'e_2'$ into normal form.
We do this using \Cref{lem:sumpushpop}, which tells us that the inner expression $e_2\bar{e}_1'$ can be written as $\bar{d}_1 + d_2$, with $\bar{d}_1$ a pure pop expression, and $d_2$ a pure push expression.
This lets us achieve the desired normal form, by deriving as follows:
\begin{align*}
\bar e_1e_2\bar e_1'e_2' = \bar e_1(e_2\bar e_1')e_2' = \bar e_1(\bar d_1 + d_2)e_2' = \bar e_1\bar d_1e_2' + \bar e_1d_2e_2'.
\end{align*}

Finally, consider $e^*$, where $e = \bar b_1c_1 + \cdots + \bar b_n c_n$ has already been transformed to normal form.
Let $[\bar b]$ be the vector whose $i$th component is $\bar b_i$ and let $[c]$ be the vector whose $i$th component is $c_i$.
Then $e = [\bar b]^T[c]$, where $^T$ denotes matrix transpose.
Using basic Kleene algebra, we then derive:
\begin{align}
\Big(\sum_i \bar b_i c_i\Big)^* &= ([\bar b]^T[c])^*
= 1 + [\bar b]^T[c]([\bar b]^T[c])^*
= 1 + [\bar b]^T([c][\bar b]^T)^*[c]
= 1 + [\bar b]^T A^*[c],\label{eq:Amatrix}
\end{align}
where $A = [c][\bar b]^T$ is an $n\times n$ matrix with $A_{ij}=c_i\bar b_j$.
By applying \Cref{lem:sumpushpop} to the terms in $A$, we can express this matrix as $B_0 + C_0$, where $(B_0)_{ij}$ contains only pure push terms $D'_{xa}(c_i)a$ (in general different terms for different $j$) or $1$ and $(C_0)_{ij}$ contains only pure pop terms $\bar aD_{xa}(\bar b_j)$ or $1$.

Now consider $B_0C_0$.
Each entry $(B_0C_0)_{ij}$ is a sum of terms of the form either $D'_{xa}(c_i)a$, $\bar aD_{xa}(\bar b_j)$, $D'_{xa}(c_i)a\bar cD_{xc}(\bar b_j)$ or $1$. We can delete $D'_{xa}(c_i)a\bar cD_{xc}(c_j)$ when $a\ne c$.
If $a=c$, we can replace $D'_{xa}(c_i)a\bar aD_{xa}(\bar b_j)$ by $D'_{xa}(c_i)D_{xa}(\bar b_j)$, then replace this by a sum of pure push and pure pop terms according to Lemma~\ref{lem:sumpushpop}.
The new terms will still be of the form $D'_{xa}(c_i)a$ or $\bar aD_{xa}(\bar b_j)$. We add these new terms to $B_0$ and $C_0$ to obtain matrices $B_1$ and $C_1$ respectively.
Then $B_1 + C_1 = B_0 + C_0 + B_0C_0$.

We continue this process, getting $B_0\le B_1\le B_2\le\cdots$ and $C_0\le C_1\le C_2\le\cdots$ as long as there are new terms. Eventually we do not get any new terms because there are only finitely many derivatives up to ACI\@. At that point $B_{n+1}=B_n$ and $C_{n+1}=C_n$, and $B_n C_n\le B_{n+1}+C_{n+1} = B_n+C_n$.

We claim at this point that the original $A$ of \eqref{eq:Amatrix} satisfies 
\begin{align*}
A^* &= (B_n + C_n)^* = C_n^*B_n^*,
\end{align*}
where the right-hand side is of the desired form.
For the left-hand equality, surely $A^* = (B_0 + C_0)^* \le (B_n + C_n)^*$, and by induction, $B_{i+1}+C_{i+1} = B_i + C_i + B_i C_i\le A^* + A^* + A^*A^* = A^*$, so $(B_n+C_n)^*\le A^{**} = A^*$. For the right-hand equality, it is clear that $C_n^*B_n^* \le (B_n + C_n)^*$, so it remains to show that $(B_n + C_n)^* \le C_n^*B_n^*$. Since $B_n C_n \le B_n + C_n$, we have
\begin{align*}
B_n + (B_n + C_n^*)C_n = B_n + B_n C_n + C_n^*C_n \le B_n + B_n + C_n + C_n^* \le B_n + C_n^*.
\end{align*}
Applying the left-hand fixpoint rule $b + xa \le x\ \Imp\ ba^* \le x$ of KA with $b = B_n$, $a = C_n$, and $x = B_n + C_n^*$, we have that $B_n C_n^*\le B_n + C_n^*$. Then
\begin{align*}
1 + (B_n + C_n)C_n^*B_n^* &= 1 + B_n C_n^*B_n^* + C_n C_n^*B_n^* \le 1 + (B_n + C_n^*)B_n^* + C_n C_n^*B_n^* \le C_n^*B_n^*.
\end{align*}
Now applying the right-hand fixpoint rule $b + ax \le x\ \Imp\ a^*b \le x$ of KA with $a = B_n + C_n$, $b=1$, and $x = C_n^*B_n^*$, we conclude that $(B_n + C_n)^* \le C_n^*B_n^*$.
\end{proofEnd}

We now have everything we need to state our completeness result for $L_0$.

\begin{theorem}%
\label{thm:completenessL0}
The axioms of KA augmented with the equations $a\bar a=1$ and $a\bar b=0$ for $b\ne a$ are sound and complete for the equational theory of the interpretation $L_0$; that is, for any $e_1,e_2\in\Exp$, $L_0(e_1)=L_0(e_2)$ iff $e_1=e_2$ is a deductive consequence of these axioms.
\end{theorem}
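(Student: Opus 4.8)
The statement has two halves: soundness ($e_1=e_2 \Rightarrow L_0(e_1)=L_0(e_2)$) and completeness ($L_0(e_1)=L_0(e_2) \Rightarrow e_1=e_2$). The plan is to dispatch soundness directly and reduce completeness to \Cref{thm:normalform} together with Kozen's classical completeness theorem for Kleene algebra. For soundness, I would first observe that $L_0$ factors as $L_0(e) = c(L(e))$, where $L$ is the ordinary regular-language interpretation and $c(X) = \filter(\pushpop{X})$. The key auxiliary fact is that $c$ is a closure operator that commutes with the Kleene-algebra operations, i.e.\ $c(c(X) \cup c(Y)) = c(X \cup Y)$, $c(c(X)\cdot c(Y)) = c(X\cdot Y)$ and $c(c(X)^*) = c(X^*)$: each is proved by taking a word on the left-hand side, threading the $\pushpop$-reduction that witnesses its membership back through the union/product/iteration into a word of $X\cup Y$, $XY$, or $X^*$ (the rule of \Cref{fig:pushpoplang} is context-closed, so reductions inside a contiguous block survive inside a larger word), and noting that $\filter$ only retains traces in $\Lpoppush$. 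With this, the canonical languages $\{c(X)\}$, equipped with $X \oplus Y = c(X\cup Y)$, $X\otimes Y = c(XY)$, and the star $X \mapsto c(X^*)$, form a Kleene algebra in which $\le$ is ordinary inclusion, and $L_0$ is a homomorphism into it; hence every KA axiom and inference rule is sound for $L_0$. The two extra equations are then immediate computations from the definitions: $L_0(\push v \cdot \pop v) = \filter(\pushpop{\{\push v\,\pop v\}}) = \filter(\{\push v\,\pop v,\ \eps\}) = \{\eps\} = L_0(1)$, since $\push v\,\pop v$ is discarded by $\filter$; and, for $v\ne w$, $L_0(\push v \cdot \pop w) = \filter(\{\push v\,\pop w\}) = \emptyset = L_0(0)$, since $\pushpop$ leaves $\push v\,\pop w$ unchanged and $\filter$ discards it.

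For completeness, suppose $L_0(e_1) = L_0(e_2)$. By \Cref{thm:normalform} there are normal-form expressions $e_1', e_2'$ with $e_1 = e_1'$ and $e_2 = e_2'$ provable from the axioms, so by soundness $L_0(e_1') = L_0(e_1) = L_0(e_2) = L_0(e_2')$. Now I use the identity $L(e) = L_0(e)$ for $e$ in normal form (already noted after \Cref{thm:normalform}): the traces of a normal form already lie in $\Lpoppush = \pop{V}^* \cdot \push{V}^*$, so $\filter$ leaves them untouched, and such a trace never contains an adjacent $\push v \cdot \pop v$, so $\pushpop$ adds nothing. Therefore $L(e_1') = L_0(e_1') = L_0(e_2') = L(e_2')$, and by Kozen's completeness theorem for Kleene algebra~\cite{Kozen1994} the equation $e_1' = e_2'$ is derivable in KA alone, hence a fortiori from our axioms. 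Chaining the provable steps $e_1 = e_1' = e_2' = e_2$ yields the desired derivation.

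The genuine difficulty here has already been absorbed by \Cref{thm:normalform}, and underneath it by \Cref{lem:sumpushpop}, whose derivative-and-matrix argument is what actually folds all $\push v\cdot\pop w$ interactions into a regular normal form. Relative to that, the present theorem is a short reduction to Kozen's theorem. The only points in the argument above that still require a (brief) proof are the closure and commutation properties of $c(X) = \filter(\pushpop{X})$ that underpin soundness, and the identity $L = L_0$ on normal forms; neither is hard once the $\pushpop$-reductions are tracked carefully, but both are load-bearing, so I would state them as explicit lemmas.
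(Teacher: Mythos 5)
Your proof is correct and follows essentially the same route as the paper: reduce completeness to \Cref{thm:normalform}, observe that $L_0$ and $L$ agree on normal forms, and invoke Kozen's completeness theorem for KA, exactly as the paper does. The only difference is that you elaborate the soundness half (which the paper dismisses as routine) via the commutation of $c = \filter\circ\pushpop$ with the regular operations; this is the standard argument and works, though note $c$ is idempotent and union-preserving but not extensive, so it is not literally a closure operator.
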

\begin{proof}
Soundness of the axioms is routine; thus if $e_1=e_2$ is provable, then $L_0(e_1)=L_0(e_2)$.
Conversely, suppose $L_0(e_1)=L_0(e_2)$.
Using \Cref{thm:normalform}, we can provably transform $e_1$ and $e_2$ to normal forms $e_1'$ and $e_2'$, respectively.
By soundness, $L_0(e_1')=L_0(e_2')$, and since $L_0(e)=L(e)$ for any $e$ in normal form (by \Cref{lem:sumpushpop}), $L(e_1')=L(e_2')$.
By the completeness of KA w.r.t. $L$~\cite{Kozen1994}, $e_1'=e_2'$ is provable.
In conjunction with the proofs of $e_1=e_1'$ and $e_2=e_2'$, we obtain a proof of $e_1=e_2$.
\end{proof}

\subsection{Completeness for \stackat}

We now turn our attention to proving completeness for the push-pop fragment of \StacKAT with respect to $\bar{L}$.
We continue to assume the axioms for $L_0$, namely the axioms of KA in addition to the equations $a\bar a=1$ and $a\bar b=0$ for $b\ne a$, but $\bar{L}$ (and therefore $\Sem{-}$) also satisfy an additional property:

\begin{theoremEnd}[default]{lemma}%
\label{lem:L0L1}
The axioms for $L_0$ plus $\bar aa \le 1$ (equivalently, $1+\bar aa = 1$) are sound for $\overline{L}$.
\end{theoremEnd}
\begin{proofEnd}
The axioms of KA are sound since $\BB$ is a KA\@. The axioms $a\bar a=1$ and $a\bar b=0$, $b\ne a$, are still sound since equations are preserved under a homomorphism. The axiom $\bar aa \le 1$, equivalently $1+\bar aa = 1$, is also sound under $L_1$:
\begin{align*}
& L_1(1+\bar aa) = \NF(\{\eps,\bar aa\})\dn = \{\eps,\bar aa\}\dn = \set{\bar xx}{x\in\Sigma^*} = \{\eps\}\dn = \NF(\{\eps\})\dn = L_1(1).
\end{align*}
Thus if $e_1=e_2$ is provable, then $L_1(e_1)=L_1(e_2)$.
\end{proofEnd}

Although inspired by this axiom, our axiomatization is rather more complicated, involving an infinite set of rules derived from the fundamental theorem. 
We do not know whether $\bar aa\le 1$ by itself is enough for completeness, but conjecture that it is not.

To properly describe our axiomatization, we need a new operator.
For $B\subs\Sigma^*$, define $B^\dagger = \set{\bar xx}{x\in B}$.
For example, $(a^*)^\dagger = \set{\bar a^n a^n}{n\ge 0}$.
Clearly, ${}^\dagger$ does not preserve regularity, which is not surprising as we are dealing with stacks.
However, we will temper the non-regular aspect of this operator by using the two Brzozowski derivatives $D$ and $D'$ in tandem. 
Still assuming $a\bar a=1$ and $a\bar b=0$ for $b\ne a$, here are some properties of $^\dagger$ that are not difficult to derive:
\begin{align}
& \Bigl(\bigcup_\alpha B_\alpha\Bigr)^\dagger = \bigcup_\alpha B_\alpha^\dagger
&& B^\dagger B^\dagger = B^\dagger
&& (B^\dagger)^* = 1 + B^\dagger
&& \bar aB^\dagger a = (Ba)^\dagger,\ a\in\Sigma
\label{eq:sounddaggereqs}
\end{align}
It is also clear that if $B$ is regular, then $B^\dagger$ is context-free. 
We furthermore have the following.

%

\begin{textAtEnd}
We can extend the interpretation $R:\Exp\to 2^{(\Sigma\cup\bar\Sigma)^*}$ to expressions involving $^\dagger$ by adding the clause $\SemR{e^\dagger} = \SemR e^\dagger$ to the inductive definition. The semantic properties \eqref{eq:sounddaggereqs} lead to sound equations 
\begin{align*}
& (e_1+e_2)^\dagger = e_1^\dagger + e_2^\dagger
&& e^\dagger e^\dagger = e^{\dagger\dagger} = e^\dagger
&& (e^\dagger)^* = 1 + e^\dagger
&& \bar ae^\dagger a = (ea)^\dagger,\ a\in\Sigma
\end{align*}
The congruence rule $e_1=e_2\Imp e_1^\dagger=e_2^\dagger$ also holds.

From \eqref{eq:fundthmZ}, we have a version of the fundamental theorem that holds in the presence of $^\dagger$: 
\begin{align}
\bar e_1e^\dagger e_2
&= E(\bar e_1)e^\dagger E(e_2) + E(\bar e_1)\sum_b e^\dagger bD_{b}(e_2)\nonumber + E(e_2)\sum_a D'_{a}(\bar e_1)\bar ae^\dagger\nonumber\\
&\quad\ + \sum_{a\ne b} D'_{a}(\bar e_1)\bar a e^\dagger bD_{b}(e_2) + \sum_a D'_{a}(\bar e_1)(ea)^\dagger D_{a}(e_2).\label{eq:fundthmA}
\end{align}
We also have $e^\dagger = E(e) + \sum_{a\in\Sigma} \bar aD'_a(e)^\dagger a$.
\end{textAtEnd}

\begin{theoremEnd}[default]{lemma}%
\label{lem:LR}
For pure pop expressions $\bar{e}_1$ and pure push expressions $e_2$,
$\bar{L}(\bar e_1e_2) = L(\bar e_1) \cdot \Ssd \cdot L(e_2)$.
\end{theoremEnd}
\begin{proofEnd}
By definition, $\overline{L}(\bar e_{1}e_{2}) = \poppush{L_0(\bar e_{1}e_{2})} = \poppush{L(\bar e_{1}e_{2})} = L(\bar e_{1}) \cdot \Ssd \cdot L(e_{2})$.
\end{proofEnd}

We can also extend ${}^\dagger$ to regular expressions, interpreting $e^\dagger$ as $L(e)^\dagger$.
The following then holds.

\begin{theoremEnd}[default]{lemma}%
\label{lem:delimited}
Let $\bar e_1$ be a pure pop expression and $e,e_2$ pure push expressions.
The expression $\bar e_1e^\dagger e_2$ is semantically $L$-equivalent to a sum of expressions of the following forms:
\begin{align}
& \bar d_1\bar ad^\dagger bd_2,\ a\ne b && \bar d_1\bar ad^\dagger && d^\dagger bd_2 && d^\dagger\label{eq:four1}
\end{align}
where $d_1$ is a pure pop expression, $d$ and $d_2$ are pure push expressions, and $a,b\in\Sigma$, $a\ne b$.
\end{theoremEnd}
\begin{proofEnd}
Fix $\bar e_1$, $e$, and $e_2$ as in the statement of the lemma. Define a binary relation on $\Sigma^*$:
\begin{align*}
y\equiv z\ \ \Iff\ \ \SemR{D'_{y}(\bar e_1)} = \SemR{D'_{z}(\bar e_1)}\ \ \text{and}\ \ \SemR{D_{y}(e_2)} = \SemR{D_{z}(e_2)}.
\end{align*}
The relation $\equiv$ is a right-invariant equivalence relation (that is, $y\equiv z$ implies $ya\equiv za$) with only finitely many classes $[z]=\set y{y\equiv z}$, as there are only finitely many derivatives up to KA equivalence. Thus there is a finite-state automaton with states $\Sigma^*/{\equiv}$ and transition function $\delta([x],a)=[xa]$. The equivalence classes $[z]$ are all regular sets and are the strings accepted from the state $[z]$.

One can compute a regular expression $e_z$ for $[z]$ as follows. The single-step transition matrix of this automaton is a $(\Sigma^*/{\equiv})\times(\Sigma^*/{\equiv})$ matrix $A$ of expressions such that $A_{[x][z]} = \sum_{[xa]=[z]} a$. Thus
\begin{align*}
\SemR{A_{[x][z]}} = \set{a\in\Sigma}{[xa]=[z]}.
\end{align*}
It follows by induction that
\begin{align*}
& \SemR{A^n_{[x][z]}} = \set{y\in\Sigma^n}{[xy]=[z]}
&& \SemR{A^*_{[x][z]}} = \set{y\in\Sigma^*}{[xy]=[z]}
\end{align*}
so we can take $e_z = A^*_{[\eps][z]} = [z]$.

Now for $x\in\Sigma^*$, define
\begin{align*}
\tau(x)
&= D'_x(\bar e_1)(ex)^\dagger D_x(e_2)\\
\phi(x)
&= E(D'_x(\bar e_1))(ex)^\dagger E(D_x(e_2)) + E(D'_x(\bar e_1))\sum_b (ex)^\dagger bD_{xb}(e_2)\\
&\quad\ + E(D_x(e_2))\sum_a D'_{xa}(\bar e_1)\bar a(ex)^\dagger + \sum_{a\ne b} D'_{xa}(\bar e_1)\bar a(ex)^\dagger bD_{xb}(e_2).
\end{align*}
Using the fact that $\sum_{y\equiv z}(ey)^\dagger = (ee_z)^\dagger$ and the fact that $D'_{y}(\bar e_1)$ and $D'_{z}(\bar e_1)$ are equivalent and $D_{y}(e_2)$ and $D_{z}(e_2)$ are equivalent for $y\equiv z$, we have
\begin{align}
\sum_{y\equiv z}\phi(y)
&= \sum_{y\equiv z}E(D'_z(\bar e_1))(ey)^\dagger E(D_z(e_2)) + \sum_{y\equiv z}E(D'_z(\bar e_1))\sum_b (ey)^\dagger bD_{zb}(e_2)\nonumber\\
&\quad\ + \sum_{y\equiv z}E(D_z(e_2))\sum_a D'_{za}(\bar e_1)\bar a(ey)^\dagger + \sum_{y\equiv z}\sum_{a\ne b} D'_{za}(\bar e_1)\bar a(ey)^\dagger bD_{zb}(e_2)\nonumber\\
&= E(D'_y(\bar e_1))(ee_z)^\dagger E(D_y(e_2)) + E(D'_y(\bar e_1))\sum_b (ee_z)^\dagger bD_{yb}(e_2)\nonumber\\
&\quad\ + E(D_y(e_2))\sum_a D'_{ya}(\bar e_1)\bar a(ee_z)^\dagger + \sum_{a\ne b} D'_{ya}(\bar e_1)\bar a(ee_z)^\dagger bD_{yb}(e_2).\label{eq:goodform}
\end{align}
This is a finite sum of terms of the desired form as described in the statement of the lemma.

The fundamental theorem \eqref{eq:fundthmA} says that $\tau(x) = \phi(x) + \sum_a \tau(xa)$. Unwinding this inductively, 
\begin{align*}
\bar e_1e^\dagger e_2 &= \tau(\eps) = \sum_y \phi(y) = \sum_{[z]} \sum_{y\equiv z} \phi(y),
\end{align*}
which by \eqref{eq:goodform} is equivalent to a finite sum of the desired form. 
\end{proofEnd}

Note that the four forms shown in \eqref{eq:four1} represent pairwise disjoint sets under $L$. 
For example, every string in $L(\bar d_1\bar ad^\dagger bd_2)$, where $a\ne b$, contains a substring $\bar a\kern1pt\bar xxb$, whereas every string in $L(\bar d_1\bar ad^\dagger)$ has a suffix of the form $\bar a\kern1pt\bar xx$, and no string in normal form can have both.

Let $\#$ be a new delimiter symbol and consider the forms
\begin{align}
& ad_1\#d\#bd_2,\ a\ne b && ad_1\#d\# && \#d\#bd_2 && \#d\#\label{eq:four2}
\end{align}
corresponding to the four forms of \eqref{eq:four1}. 
Define the map
\begin{align*}
& K:\Sigma^*\#\Sigma^*\#\Sigma^* \to \bar\Sigma^*\Sigma^* && K(x\#y\#z) = \bar x\bar yyz.
\end{align*}
$K$ is not injective; for example, $K(a\#a\#a) = K(aa\#\#aa)$.
However, $K$ acts bijectively between strings of the form~\eqref{eq:four1} and corresponding strings of the form~\eqref{eq:four2}.
\begin{theoremEnd}[default]{lemma}%
\label{lem:LR2}
Let $e$ be one of the expressions in \eqref{eq:four1} and let $e'$ be the corresponding expression in \eqref{eq:four2}. The map $K$ restricted to $L(e')$ is a bijection $K:L(e')\to L(e)$. 
\end{theoremEnd}
\begin{proofEnd}
Consider the first two corresponding expressions $\bar d_1\bar ad^\dagger bd_2$ from \eqref{eq:four1} and $ad_1\#d\#bd_2$ from \eqref{eq:four2}, where $a\ne b$. The image of $L(ad_1\#d\#bd_2)$ under $K$ i 
\begin{align*}
\set{K(u)}{u\in L(ad_1\#d\#bd_2)}
&= \set{K(ax\#y\#bz)}{x\in L(d_1),\ y\in L(d),\ z\in L(d_2)}\\
&= \set{\bar x\kern1pt\bar a\kern1pt\bar yybz}{x\in L(d_1),\ y\in L(d),\ z\in L(d_2)}\\
&\subs L(\bar d_1\kern1pt\bar ad^\dagger bd_2).
\end{align*}
But any string of the form $\bar x\kern1pt\bar a\kern1pt\bar yybz\in\bar d_1\kern1pt\bar ad^\dagger bd_2$ is the image under $K$ of exactly one string in $L(ad_1\#d\#bd_2)$, namely $ax\#y\#bz$. The string $y$ is uniquely determined since $\bar yy$ is the longest skew-palindromic substring of $\bar x\kern1pt\bar a\kern1pt\bar yybz$ as it is delimited by $\bar a$ and $b$ and $a\ne b$, and thereafter $x$ and $z$ are also uniquely determined.

A similar argument holds for the other three cases.
\end{proofEnd}

Let $\sum_i\bar e_{i1}e_{i2}$ and $\sum_j\bar d_{j1}d_{j2}$ be two expressions in normal form. Using Lemma~\ref{lem:delimited}, reduce $\sum_i\bar e_{i1}\Ssd e_{i2}$ and $\sum_j\bar d_{j1}\Ssd d_{j2}$ to sums of expressions $\sum_i\bar p_i q_i^\dagger r_i$ and $\sum_j \bar u_j v_j^\dagger w_j$, respectively, of the form \eqref{eq:four1}. Let $\sum_i p_i\#q_i\#r_i$ and $\sum_j u_j\#v_j\#w_j$ be the corresponding sums of expressions of the form \eqref{eq:four2}. 

\begin{theoremEnd}[default]{lemma}%
\label{lem:FG5}
$\overline{L}(\sum_i\bar e_{i1}e_{i2})\subs \overline{L}(\sum_j\bar d_{j1}d_{j2})
\ \Iff\ L(\sum_i p_i\#q_i\#r_i)\subs L(\sum_j u_j\#v_j\#w_j)$.
\end{theoremEnd}
\begin{proofEnd}
We show that the following are equivalent:
\begin{enumerate}[(i)]
\item
$L_1(\sum_i\bar e_{i1}e_{i2})\subs L_1(\sum_j\bar d_{j1}d_{j2})$
\item
$L(\sum_i\bar e_{i1}\Ssd e_{i2})\subs L(\sum_j\bar d_{j1}\Ssd d_{j2})$
\item
$L(\sum_i p_i\#q_i\#r_i)\subs L(\sum_j u_j\#v_j\#w_j)$.
\end{enumerate}
Statements (i) and (ii) are equivalent by Lemma~\ref{lem:LR} and (ii) and (iii) are equivalent by Lemma~\ref{lem:LR2}.
\end{proofEnd}

Now consider the rule
\begin{align}
\frac{\sum_i p_i\#q_i\#r_i \le \sum_j u_j\#v_j\#w_j}{\sum_i\bar e_{i1}e_{i2} \le \sum_j\bar d_{j1}d_{j2}},\label{eq:badrule2}
\end{align}
where the expressions are as described in the paragraph above Lemma~\ref{lem:FG5}.

\begin{theorem}%
\label{thm:completenessB}
The rule~\eqref{eq:badrule2}, along with $a\bar a=1$, $a\bar b=0$ for $b\ne a$, and the axioms of KA and equational logic, are sound and complete for the equational theory of the push-pop fragment of \stackat.
\end{theorem}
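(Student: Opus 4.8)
The plan is to prove soundness and completeness separately, in the style of the proof of \Cref{thm:completenessL0}; essentially all of the mathematical content is already isolated in the supporting lemmas, so the argument here amounts to assembling them in the right order.

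\emph{Soundness.} First I would check that every ingredient of the system is valid for $\overline{L}$ (equivalently, by \Cref{thm:semcanon}, for $\Sem{-}$). The axioms of KA together with $a\bar a=1$ and $a\bar b=0$ for $b\ne a$ are sound by the argument behind \Cref{lem:L0L1}. For rule~\eqref{eq:badrule2}, the key point is \Cref{lem:FG5}: it states precisely that $\overline{L}(\sum_i\bar e_{i1}e_{i2})\subs\overline{L}(\sum_j\bar d_{j1}d_{j2})$ is \emph{equivalent} to the ordinary language inclusion $L(\sum_i p_i\#q_i\#r_i)\subs L(\sum_j u_j\#v_j\#w_j)$ of the associated $\#$-expressions. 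Hence, whenever the premise of the rule is derivable, the corresponding language inclusion holds, so the conclusion is $\overline{L}$-valid. To make this a clean induction on derivations, one should fix a single interpretation of the auxiliary expressions (those mentioning $\#$ or ${}^\dagger$), reading $e^\dagger$ as $L(e)^\dagger$ and extending the push--pop canonicalization to the enlarged alphabet---this keeps $a\bar a=1$ valid since $\#$ is neither a push nor a pop---and then soundness of the whole system follows routinely.

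\emph{Completeness.} Assume $\Sem{e_1}=\Sem{e_2}$, i.e.\ $\overline{L}(e_1)=\overline{L}(e_2)$. Since a provable equality is a provable two-sided inequality, it suffices to show that $\overline{L}(e)\subs\overline{L}(f)$ implies $e\le f$ is provable and apply this twice. So take $\overline{L}(e_1)\subs\overline{L}(e_2)$. By \Cref{thm:normalform} we provably rewrite $e_1$ and $e_2$ to normal forms $\sum_i\bar e_{i1}e_{i2}$ and $\sum_j\bar d_{j1}d_{j2}$; this rewriting uses only the $L_0$-axioms, which are sound for $\overline{L}$, so $\overline{L}(\sum_i\bar e_{i1}e_{i2})\subs\overline{L}(\sum_j\bar d_{j1}d_{j2})$ still holds. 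Now I would follow the construction just above \Cref{lem:FG5}: by \Cref{lem:LR} replace each $\overline{L}(\bar e_{i1}e_{i2})$ by $L(\bar e_{i1}\,\Ssd\,e_{i2})$, by \Cref{lem:delimited} reduce to sums of the forms~\eqref{eq:four1}, and pass to the corresponding $\#$-forms~\eqref{eq:four2} to obtain regular expressions $\sum_i p_i\#q_i\#r_i$ and $\sum_j u_j\#v_j\#w_j$. By \Cref{lem:FG5} the hypothesis becomes $L(\sum_i p_i\#q_i\#r_i)\subs L(\sum_j u_j\#v_j\#w_j)$, which, being an inclusion of ordinary regular languages, is provable in KA by Kozen's completeness theorem~\cite{Kozen1994}. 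Feeding that proof into rule~\eqref{eq:badrule2} yields a proof of $\sum_i\bar e_{i1}e_{i2}\le\sum_j\bar d_{j1}d_{j2}$, and composing with the normal-form equalities gives $e_1\le e_2$; symmetrically $e_2\le e_1$, whence $e_1=e_2$ is provable.

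\emph{Where the difficulty lies.} The genuinely hard steps are not in this proof but in the lemmas it cites: \Cref{lem:delimited}, which unwinds $\bar e_1 e^\dagger e_2$ by running the two Brzozowski derivatives $D$ and $D'$ in tandem and solving the resulting finite linear system, and \Cref{lem:LR2}, whose map $K$ realizes each context-free ${}^\dagger$-language of the restricted shapes~\eqref{eq:four1} as the bijective image of a genuinely regular $\#$-language---this is what allows a KA completeness argument to go through despite the non-regularity of ${}^\dagger$. Within the theorem proof itself the one delicate point is the one flagged under soundness: pinning down a single interpretation of the $\#$- and ${}^\dagger$-expressions under which KA, the push--pop equations, and the heavy rule~\eqref{eq:badrule2} are all simultaneously sound, so that the soundness induction and the completeness chain rest on the same semantics.
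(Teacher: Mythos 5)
Your proposal is correct and follows essentially the same route as the paper's proof: normalize via \Cref{thm:normalform}, pass to the $\#$-expressions, invoke \Cref{lem:FG5} together with rule~\eqref{eq:badrule2}, and close with Kozen's completeness of KA for regular language inclusion. Your treatment of soundness (and of the need to fix one interpretation for the $\#$- and ${}^\dagger$-expressions) is in fact more explicit than the paper's, which leaves that direction implicit.
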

\begin{proof}
Suppose we want to prove that $e\le d$ under the interpretation $L_1$ or $\Sem-$. By Theorem~\ref{thm:normalform}, we can provably convert $e$ and $d$ to normal form $\sum_i\bar e_{i1}e_{i2}$ and $\sum_j\bar d_{j1}d_{j2}$, respectively. By Lemma~\ref{lem:FG5} and rule~\eqref{eq:badrule2}, to prove that $\sum_i\bar e_{i1}e_{i2} \le \sum_j\bar d_{j1}d_{j2}$ under the interpretation $L_1$, it suffices to prove $\sum_i p_i\#q_i\#r_i\le \sum_j u_j\#v_j\#w_j$ under the interpretation $R$. This is provable, as KA is complete for $R$.
\end{proof}

\section{Related Work}\label{sec:related}

In relating \StacKAT to existing extensions of regular languages it is important to make a few observations. Most importantly, a \StacKAT program does not have a separate input tape. Hence, it does not capture behaviors that require both processing the input and pushing/popping symbols to/from the stack. Using standard pushdown automata, one can model behaviors like ``after every $a$-transition, write symbol $x$ to the stack'' but this is not possible in \StacKAT. If \StacKAT programs were equivalent to automata with both an input tape and a stack, equivalence of \StacKAT languages would clearly be undecidable---e.g., we could then use the stack to parse a context-free grammar on the input tape. Having said that, there are several tractable fragments of context-free languages that are somewhat reminiscent of \StacKAT. The reader might wonder if \StacKAT is a particular instance of one of these previously studied, tractable fragments. To the best of our knowledge this is not the case. Below, we review below the most closely related work and explain the key differences.

\paragraph{Network Reachability Verification of MPLS Networks}
Closely related to our work is the verification of reachability in MPLS networks~\cite{Jensen2020,Jensen2021, Jensen2018}. Since MPLS networks use the packet as a stack, the verification problem can be cast as a reachability problem in a pushdown system. These works can also handle quantitative properties such as failures or latency. Recently, a symbolic decision procedure was developed~\cite{Beckett2022} that can efficiently handle large packet spaces.
In contrast with these works, \StacKAT handles equivalence queries. An interesting question for future work is whether ideas from the above works can be adapted to \StacKAT.

\paragraph{Visibly Pushdown Languages}
%
One important subclass of languages that is particularly relevant for program analysis are visibly pushdown languages (VPLs)~\cite{Alur2004}.  VPLs are accepted by visibly pushdown automata. Whereas in general pushdown automata the stack operations depend on both the input symbol and the current state, in VPAs the input alphabet is partitioned in a way that the type of stack operation (push or pop) is determined by the input symbol itself alone. VPLs are more tractable than general context-free languages as they enjoy several closure and decidability properties. VPLs are closed under union, intersection, and complementation. Moreover, emptiness, membership, and language equality and inclusion are decidable.

We view the relationship between \StacKAT and VPLs as follows.
Naively, one might hope to use equivalence of pushdown automata as a basis for decision procedures.
Unfortunately, that equivalence problem is undecidable.
VPLs address this issue by restricting the pushdown automaton to a specific subclass.
\StacKAT takes a different approach and instead removes the input tape, but keeps the pushdown stack completely unrestricted.
Therefore, VPLs and \StacKAT are incomparable, being two different ways of taming the expressive power of pushdown automata.


\paragraph{Dyck Languages}
%
VPLs are more general than so-called {\em parentheses languages}, also often referred to as Dyck languages, which are languages consisting of balanced strings of parentheses, with one and multiple types of parentheses~\cite{McNaughton1967,Berstel2002}. However, a language $\poppush{L}$ cannot be viewed as such a language, as not all pop/push actions occur parenthesized: a letter $\pop{v}$ in a word $w$ in $\poppush{L}$ can be seen as an opening bracket in case it occurred because of the $\poppush{-}$ closure, but otherwise it is just a letter, and there is not necessarily a closing bracket ($\push{v}$) in $w$. Thus, the $\poppush{-}$-closure cannot be characterized using Dyck languages.

Another class of related work in which Dyck languages play an important role is language graph reachability analysis, a common way to tackle certain static analyses~\cite{Reps1997}. Informally, the nodes of a graph represent various program segments, while edges capture relationships between those segments, such as program flow or data dependencies. Analysis is then formulated as a reachability question between nodes in the graph, as witnessed by paths whose labels along the edges produce a string that belongs to a language, which is often a visibly pushdown language. This reachability problem, often referred to as Dyck reachability, has been applied in many contexts including interprocedural data-flow analysis~\cite{Horwitz1995}, slicing~\cite{Reps1994}, or type-based flow analysis~\cite{Rehof2001}. 
 Dyck reachability has been generalized to bidirected graphs (with applications in pointer analysis) and has been studied in terms of decidability and complexity~\cite{Kjelstrom2022}.

\paragraph{Interprocedural Analysis and Pushdown Systems}
Precise interprocedural dataflow analyses involve reasoning about the call stack to capture call-return behavior~\cite{Reps1995, Sagiv1996}.
Pushdown Systems are another formalism used for program analysis and dataflow analysis~\cite{Finkel1997,Bouajjani1997}, which model a control state as well as a stack. Dataflow queries are then translated to reachability queries in pushdown systems. This avoids merging data flows from states with the same program point but different calling context, resulting in a more precise analysis. Our work by contrast is focused on program equivalence, in order to enable encoding of verification queries such as slicing, translation validation, loop freedom, etc.~\cite{Foster2015}. Weigthed versions of pushdown systems have also been developed~\cite{Reps2003, Reps2005}. It would be interesting to investigate whether techniques for interprocedural dataflow analysis can be harnessed to answer network verification queries, including weighted variants.

\paragraph{Deterministic Context Free Languages}
Another relevant subclass of languages are Deterministic Context Free Languages (DCFLs), the class accepted by deterministic pushdown automata (DPDAs). DCFLs are closed under complementation, but not under union. The equivalence problem for DPDAs is decidable~\cite{Senizergues1997,Senizergues2002,Senizergues2002a}. Like VPLs, and unlike StacKAT, DCFLs have a separate input tape.

\paragraph{Deterministic One-Counter Automata}
Another relevant subclass are the languages accepted by deterministic one-counter automata (DOCA)~\cite{Valiant1975}. A DOCA is a special case of a pushdown automaton, with a single stack that can only hold one type of symbol (often interpreted as a counter), and the stack operations are restricted to either pushing or popping this symbol (or leaving the stack unchanged). However, again because \StacKAT becomes undecidable when we have a separate stack and input tape, \StacKAT languages are not the languages captured by one-counter automata.

\paragraph{Valence Automata}
A generalization of both DOCA and pushdown automata is given by the notion of \emph{valence automaton}~\cite{Buckheister2013} where transitions are labeled with elements from a monoid. A valence automaton accepts a string depending not only on reaching a final state but also on the accumulated product of the transition labels equaling the zero of the monoid. Valence automata generalize several models and can be used to study languages and automata under algebraic constraints. In particular, a DOCA is a valence automaton for the monoid of natural numbers under addition. Pushdown automata are valence automata under the monoid of stack operations. Transitions push or pop symbols on the stack, accepting if the stack is empty. Decidability and closure properties of valence automata are dependent on the specific monoid used.

For our setting, it is not clear how to adapt valence automata to \StacKAT, because \StacKAT programs can transform the stack, and in fact the way they do so is integral to their semantics, whereas (stacks encoded in) valence automata function as an additional constraint on acceptance.

\paragraph{Generalized Versions of KA}
\tobias{The discussion of KA+H is now lumped together with omega-algebras. Maybe it deserves its own paragraph. At the same time, there is a bit of overlap with the discussion section that follows, so maybe it should just be in the discussion section?}
In~\cite{Mathieu2005}, the authors investigate an extension of Kleene algebra called omega algebra with domain to model pushdown automata, and propose axioms to model the stack that are reminiscent of ours. However, they do not discuss decidability or completeness of their algebraic theory. Recent advances in completeness techniques led to the development of the Kleene Algebra with Hypotheses framework~\cite{Doumane2019,Pous2024}, which can yield completeness proofs of several Kleene Algebra extensions. However, some of the axioms we consider in \StacKAT fail the conditions of this framework, which prevents its immediate application. In particular, closure w.r.t.\ the hypotheses $\pop{v}\cdot\push{v}\leq 1$ and $\push{v}\cdot\pop{v}\leq 1$ does not preserve regularity.

\section{Future Work}

\StacKAT is an extension of \NetKAT~\cite{Anderson2014,Foster2015,Smolka2015}, but goes beyond it by adding a stack to the data model. A new symbolic decision procedure was recently developed that can efficiently decide large \NetKAT equivalence queries~\cite{Moeller2024}. We would naturally like to integrate \StacKAT within this symbolic framework. As we can encode dup in \StacKAT, one might wonder if the integration is immediate, which is however not the case. The core of the developments in~\cite{Moeller2024} rely on the fact that during the decision procedure, when trying to determine whether $e_1$ and $e_2$ are equivalent, if a dup action is encountered then the packet state is identical at the corresponding states of $e_1$ and $e_2$.
This is fundamentally different in \StacKAT: if equivalent $e_1$ and $e_2$ push the same symbol onto the stack, that does not imply that the packet state is also the same at that moment. In other words, \StacKAT programs $e_1$ and $e_2$ might be equivalent even though the packet state they are in is different. This behavioral difference makes the integration with the framework of~\cite{Moeller2024} challenging, and is left as future work.
Another promising direction for future work is to determine the \emph{active domain} of a \StacKAT program, similar to APKeep~\cite{Zhang2020}, grouping packet values together into equivalence classes, and then reducing the packet space enumeration to the equivalence classes.

\begin{acks}
  This work was supported in part by ONR grant N68335-22-C-0411 and DARPA grant W912CG-23-C-0032, the Dutch research council (NWO) under grant no.\ VI.Veni.232.286 (ChEOpS), the European Union's Horizon 2020 research and innovation programme under grant no.\ 101027412 (VERLAN), a Royal Society Wolfson fellowship, and gifts from Google, InfoSys, and the VMware University Research Fund.
  We thank the EPFL DCSL group for providing a welcoming and supportive environment and Audrey Yuan for contributing early ideas about adding stacks to NetKAT\@.
\end{acks}

\bibliographystyle{ACM-Reference-Format}
\bibliography{bibliography}


\begin{thebibliography}{40}


\ifx \showCODEN    \undefined \def \showCODEN     #1{\unskip}     \fi
\ifx \showISBNx    \undefined \def \showISBNx     #1{\unskip}     \fi
\ifx \showISBNxiii \undefined \def \showISBNxiii  #1{\unskip}     \fi
\ifx \showISSN     \undefined \def \showISSN      #1{\unskip}     \fi
\ifx \showLCCN     \undefined \def \showLCCN      #1{\unskip}     \fi
\ifx \shownote     \undefined \def \shownote      #1{#1}          \fi
\ifx \showarticletitle \undefined \def \showarticletitle #1{#1}   \fi
\ifx \showURL      \undefined \def \showURL       {\relax}        \fi
\providecommand\bibfield[2]{#2}
\providecommand\bibinfo[2]{#2}
\providecommand\natexlab[1]{#1}
\providecommand\showeprint[2][]{arXiv:#2}

\bibitem[Alur and Madhusudan(2004)]%
        {Alur2004}
\bibfield{author}{\bibinfo{person}{Rajeev Alur} {and} \bibinfo{person}{P.
  Madhusudan}.} \bibinfo{year}{2004}\natexlab{}.
\newblock \showarticletitle{Visibly pushdown languages}. In
  \bibinfo{booktitle}{\emph{STOC}}.
\newblock
\href{https://doi.org/10.1145/1007352.1007390}{doi:\nolinkurl{10.1145/1007352.1007390}}


\bibitem[Anderson et~al\mbox{.}(2014)]%
        {Anderson2014}
\bibfield{author}{\bibinfo{person}{Carolyn~Jane Anderson},
  \bibinfo{person}{Nate Foster}, \bibinfo{person}{Arjun Guha},
  \bibinfo{person}{Jean-Baptiste Jeannin}, \bibinfo{person}{Dexter Kozen},
  \bibinfo{person}{Cole Schlesinger}, {and} \bibinfo{person}{David Walker}.}
  \bibinfo{year}{2014}\natexlab{}.
\newblock \showarticletitle{NetKAT: Semantic Foundations for Networks}. In
  \bibinfo{booktitle}{\emph{POPL}}.
\newblock
\href{https://doi.org/10.1145/2535838.2535862}{doi:\nolinkurl{10.1145/2535838.2535862}}


\bibitem[Antimirov(1996)]%
        {Antimirov1996}
\bibfield{author}{\bibinfo{person}{Valentin~M. Antimirov}.}
  \bibinfo{year}{1996}\natexlab{}.
\newblock \showarticletitle{Partial Derivatives of Regular Expressions and
  Finite Automaton Constructions}.
\newblock \bibinfo{journal}{\emph{Theor. Comput. Sci.}} (\bibinfo{year}{1996}).
\newblock
\href{https://doi.org/10.1016/0304-3975(95)00182-4}{doi:\nolinkurl{10.1016/0304-3975(95)00182-4}}


\bibitem[Beckett and Gupta(2022)]%
        {Beckett2022}
\bibfield{author}{\bibinfo{person}{Ryan Beckett} {and} \bibinfo{person}{Aarti
  Gupta}.} \bibinfo{year}{2022}\natexlab{}.
\newblock \showarticletitle{Katra: Realtime Verification for Multilayer
  Networks}. In \bibinfo{booktitle}{\emph{NSDI}}.
\newblock
\urldef\tempurl%
\url{https://www.usenix.org/conference/nsdi22/presentation/beckett}
\showURL{%
\tempurl}


\bibitem[Berstel and Boasson(2002)]%
        {Berstel2002}
\bibfield{author}{\bibinfo{person}{Jean Berstel} {and} \bibinfo{person}{Luc
  Boasson}.} \bibinfo{year}{2002}\natexlab{}.
\newblock \showarticletitle{Balanced Grammars and Their Languages}. In
  \bibinfo{booktitle}{\emph{Formal and Natural Computing - Essays Dedicated to
  Grzegorz Rozenberg}}.
\newblock
\href{https://doi.org/10.1007/3-540-45711-9_1}{doi:\nolinkurl{10.1007/3-540-45711-9_1}}


\bibitem[Bosshart et~al\mbox{.}(2014)]%
        {Bosshart2014}
\bibfield{author}{\bibinfo{person}{Pat Bosshart}, \bibinfo{person}{Dan Daly},
  \bibinfo{person}{Glen Gibb}, \bibinfo{person}{Martin Izzard},
  \bibinfo{person}{Nick McKeown}, \bibinfo{person}{Jennifer Rexford},
  \bibinfo{person}{Cole Schlesinger}, \bibinfo{person}{Dan Talayco},
  \bibinfo{person}{Amin Vahdat}, \bibinfo{person}{George Varghese}, {and}
  \bibinfo{person}{David Walker}.} \bibinfo{year}{2014}\natexlab{}.
\newblock \showarticletitle{P4: Programming Protocol-Independent Packet
  Processors}.
\newblock \bibinfo{journal}{\emph{SIGCOMM}} (\bibinfo{date}{07}
  \bibinfo{year}{2014}).
\newblock
\href{https://doi.org/10.1145/2656877.2656890}{doi:\nolinkurl{10.1145/2656877.2656890}}


\bibitem[Bouajjani et~al\mbox{.}(1997)]%
        {Bouajjani1997}
\bibfield{author}{\bibinfo{person}{Ahmed Bouajjani}, \bibinfo{person}{Javier
  Esparza}, {and} \bibinfo{person}{Oded Maler}.}
  \bibinfo{year}{1997}\natexlab{}.
\newblock \showarticletitle{Reachability Analysis of Pushdown Automata:
  Application to Model-Checking}. In \bibinfo{booktitle}{\emph{CONCUR}}.
\newblock
\href{https://doi.org/10.1007/3-540-63141-0_10}{doi:\nolinkurl{10.1007/3-540-63141-0_10}}


\bibitem[Brzozowski(1964)]%
        {Brzozowski1964}
\bibfield{author}{\bibinfo{person}{Janusz~A. Brzozowski}.}
  \bibinfo{year}{1964}\natexlab{}.
\newblock \showarticletitle{Derivatives of Regular Expressions}.
\newblock \bibinfo{journal}{\emph{J. ACM}} (\bibinfo{year}{1964}).
\newblock
\href{https://doi.org/10.1145/321239.321249}{doi:\nolinkurl{10.1145/321239.321249}}


\bibitem[Buckheister and Zetzsche(2013)]%
        {Buckheister2013}
\bibfield{author}{\bibinfo{person}{P. Buckheister} {and} \bibinfo{person}{Georg
  Zetzsche}.} \bibinfo{year}{2013}\natexlab{}.
\newblock \showarticletitle{Semilinearity and Context-Freeness of Languages
  Accepted by Valence Automata}. In \bibinfo{booktitle}{\emph{MFCS}}.
\newblock
\href{https://doi.org/10.1007/978-3-642-40313-2_22}{doi:\nolinkurl{10.1007/978-3-642-40313-2_22}}


\bibitem[Doumane et~al\mbox{.}(2019)]%
        {Doumane2019}
\bibfield{author}{\bibinfo{person}{Amina Doumane}, \bibinfo{person}{Denis
  Kuperberg}, \bibinfo{person}{Damien Pous}, {and}
  \bibinfo{person}{C{\'{e}}cilia Pradic}.} \bibinfo{year}{2019}\natexlab{}.
\newblock \showarticletitle{Kleene Algebra with Hypotheses}. In
  \bibinfo{booktitle}{\emph{FOSSACS}}.
\newblock
\href{https://doi.org/10.1007/978-3-030-17127-8_12}{doi:\nolinkurl{10.1007/978-3-030-17127-8_12}}


\bibitem[Finkel et~al\mbox{.}(1997)]%
        {Finkel1997}
\bibfield{author}{\bibinfo{person}{Alain Finkel}, \bibinfo{person}{Bernard
  Willems}, {and} \bibinfo{person}{Pierre Wolper}.}
  \bibinfo{year}{1997}\natexlab{}.
\newblock \showarticletitle{A direct symbolic approach to model checking
  pushdown systems}. In \bibinfo{booktitle}{\emph{Workshop on Verification of
  Infinite State Systems}}.
\newblock
\href{https://doi.org/10.1016/S1571-0661(05)80426-8}{doi:\nolinkurl{10.1016/S1571-0661(05)80426-8}}


\bibitem[Foster et~al\mbox{.}(2015)]%
        {Foster2015}
\bibfield{author}{\bibinfo{person}{Nate Foster}, \bibinfo{person}{Dexter
  Kozen}, \bibinfo{person}{Mae Milano}, \bibinfo{person}{Alexandra Silva},
  {and} \bibinfo{person}{Laure Thompson}.} \bibinfo{year}{2015}\natexlab{}.
\newblock \showarticletitle{A Coalgebraic Decision Procedure for NetKAT}. In
  \bibinfo{booktitle}{\emph{POPL}}.
\newblock
\href{https://doi.org/10.1145/2676726.2677011}{doi:\nolinkurl{10.1145/2676726.2677011}}


\bibitem[Hopcroft et~al\mbox{.}(2006)]%
        {Hopcroft2006}
\bibfield{author}{\bibinfo{person}{John~E. Hopcroft}, \bibinfo{person}{Rajeev
  Motwani}, {and} \bibinfo{person}{Jeffrey~D. Ullman}.}
  \bibinfo{year}{2006}\natexlab{}.
\newblock \bibinfo{booktitle}{\emph{Introduction to Automata Theory, Languages,
  and Computation (3rd Edition)}}.
\newblock


\bibitem[Horwitz et~al\mbox{.}(1995)]%
        {Horwitz1995}
\bibfield{author}{\bibinfo{person}{Susan Horwitz}, \bibinfo{person}{Thomas~W.
  Reps}, {and} \bibinfo{person}{Shmuel Sagiv}.}
  \bibinfo{year}{1995}\natexlab{}.
\newblock \showarticletitle{Demand Interprocedural Dataflow Analysis}. In
  \bibinfo{booktitle}{\emph{SIGSOFT}}.
\newblock
\href{https://doi.org/10.1145/222124.222146}{doi:\nolinkurl{10.1145/222124.222146}}


\bibitem[Jensen et~al\mbox{.}(2018)]%
        {Jensen2018}
\bibfield{author}{\bibinfo{person}{Jesper~Stenbjerg Jensen},
  \bibinfo{person}{Troels~Beck Kr{\o}gh}, \bibinfo{person}{Jonas~Sand Madsen},
  \bibinfo{person}{Stefan Schmid}, \bibinfo{person}{Jir{\'{\i}} Srba}, {and}
  \bibinfo{person}{Marc~Tom Thorgersen}.} \bibinfo{year}{2018}\natexlab{}.
\newblock \showarticletitle{P-Rex: fast verification of {MPLS} networks with
  multiple link failures}. In \bibinfo{booktitle}{\emph{CoNEXT}}.
\newblock
\href{https://doi.org/10.1145/3281411.3281432}{doi:\nolinkurl{10.1145/3281411.3281432}}


\bibitem[Jensen et~al\mbox{.}(2020)]%
        {Jensen2020}
\bibfield{author}{\bibinfo{person}{Peter~Gj{\o}l Jensen}, \bibinfo{person}{Dan
  Kristiansen}, \bibinfo{person}{Stefan Schmid},
  \bibinfo{person}{Morten~Konggaard Schou}, \bibinfo{person}{Bernhard~Clemens
  Schrenk}, {and} \bibinfo{person}{Jir{\'{\i}} Srba}.}
  \bibinfo{year}{2020}\natexlab{}.
\newblock \showarticletitle{AalWiNes: a fast and quantitative what-if analysis
  tool for {MPLS} networks}. In \bibinfo{booktitle}{\emph{CoNEXT}}.
\newblock
\href{https://doi.org/10.1145/3386367.3431308}{doi:\nolinkurl{10.1145/3386367.3431308}}


\bibitem[Jensen et~al\mbox{.}(2021)]%
        {Jensen2021}
\bibfield{author}{\bibinfo{person}{Peter~Gj{\o}l Jensen},
  \bibinfo{person}{Stefan Schmid}, \bibinfo{person}{Morten~Konggaard Schou},
  \bibinfo{person}{Jir{\'{\i}} Srba}, \bibinfo{person}{Juan Vanerio}, {and}
  \bibinfo{person}{Ingo van Duijn}.} \bibinfo{year}{2021}\natexlab{}.
\newblock \showarticletitle{Faster Pushdown Reachability Analysis with
  Applications in Network Verification}. In \bibinfo{booktitle}{\emph{ATVA}}.
\newblock
\href{https://doi.org/10.1007/978-3-030-88885-5_12}{doi:\nolinkurl{10.1007/978-3-030-88885-5_12}}


\bibitem[Kjelstr{\o}m and Pavlogiannis(2022)]%
        {Kjelstrom2022}
\bibfield{author}{\bibinfo{person}{Adam~Husted Kjelstr{\o}m} {and}
  \bibinfo{person}{Andreas Pavlogiannis}.} \bibinfo{year}{2022}\natexlab{}.
\newblock \showarticletitle{The decidability and complexity of interleaved
  bidirected Dyck reachability}. In \bibinfo{booktitle}{\emph{POPL}}.
\newblock
\href{https://doi.org/10.1145/3498673}{doi:\nolinkurl{10.1145/3498673}}


\bibitem[Kozen(1994)]%
        {Kozen1994}
\bibfield{author}{\bibinfo{person}{Dexter Kozen}.}
  \bibinfo{year}{1994}\natexlab{}.
\newblock \showarticletitle{A Completeness Theorem for {K}leene Algebras and
  the Algebra of Regular Events}.
\newblock \bibinfo{journal}{\emph{Inf. Comput.}} (\bibinfo{year}{1994}).
\newblock
\href{https://doi.org/10.1006/inco.1994.1037}{doi:\nolinkurl{10.1006/inco.1994.1037}}


\bibitem[Kozen(1996)]%
        {Kozen1996}
\bibfield{author}{\bibinfo{person}{Dexter Kozen}.}
  \bibinfo{year}{1996}\natexlab{}.
\newblock \showarticletitle{Kleene algebra with tests and commutativity
  conditions}. In \bibinfo{booktitle}{\emph{TACAS}}.
\newblock
\href{https://doi.org/10.1007/3-540-61042-1_35}{doi:\nolinkurl{10.1007/3-540-61042-1_35}}


\bibitem[Mathieu and Desharnais(2005)]%
        {Mathieu2005}
\bibfield{author}{\bibinfo{person}{Vincent Mathieu} {and}
  \bibinfo{person}{Jules Desharnais}.} \bibinfo{year}{2005}\natexlab{}.
\newblock \showarticletitle{Verification of Pushdown Systems Using Omega
  Algebra with Domain}. In \bibinfo{booktitle}{\emph{RelMICS/AKA}}.
\newblock
\href{https://doi.org/10.1007/11734673_15}{doi:\nolinkurl{10.1007/11734673_15}}


\bibitem[McNaughton(1967)]%
        {McNaughton1967}
\bibfield{author}{\bibinfo{person}{Robert McNaughton}.}
  \bibinfo{year}{1967}\natexlab{}.
\newblock \showarticletitle{Parenthesis Grammars}.
\newblock \bibinfo{journal}{\emph{J. ACM}} (\bibinfo{year}{1967}).
\newblock
\href{https://doi.org/10.1145/321406.321411}{doi:\nolinkurl{10.1145/321406.321411}}


\bibitem[Meyer and Stockmeyer(1972)]%
        {Meyer1972}
\bibfield{author}{\bibinfo{person}{Albert~R. Meyer} {and}
  \bibinfo{person}{Larry~J. Stockmeyer}.} \bibinfo{year}{1972}\natexlab{}.
\newblock \showarticletitle{The Equivalence Problem for Regular Expressions
  with Squaring Requires Exponential Space}. In
  \bibinfo{booktitle}{\emph{SWAT}}.
\newblock
\href{https://doi.org/10.1109/SWAT.1972.29}{doi:\nolinkurl{10.1109/SWAT.1972.29}}


\bibitem[Moeller et~al\mbox{.}(2024)]%
        {Moeller2024}
\bibfield{author}{\bibinfo{person}{Mark Moeller}, \bibinfo{person}{Jules
  Jacobs}, \bibinfo{person}{Olivier~Savary Belanger}, \bibinfo{person}{David
  Darais}, \bibinfo{person}{Cole Schlesinger}, \bibinfo{person}{Steffen
  Smolka}, \bibinfo{person}{Nate Foster}, {and} \bibinfo{person}{Alexandra
  Silva}.} \bibinfo{year}{2024}\natexlab{}.
\newblock \showarticletitle{KATch: A Fast Symbolic Verifier for NetKAT}. In
  \bibinfo{booktitle}{\emph{PLDI}}.
\newblock
\href{https://doi.org/10.1145/3656454}{doi:\nolinkurl{10.1145/3656454}}


\bibitem[Pous et~al\mbox{.}(2024)]%
        {Pous2024}
\bibfield{author}{\bibinfo{person}{Damien Pous}, \bibinfo{person}{Jurriaan
  Rot}, {and} \bibinfo{person}{Jana Wagemaker}.}
  \bibinfo{year}{2024}\natexlab{}.
\newblock \showarticletitle{On Tools for Completeness of Kleene Algebra with
  Hypotheses}.
\newblock \bibinfo{journal}{\emph{LMCS}} (\bibinfo{year}{2024}).
\newblock
\href{https://doi.org/10.46298/LMCS-20(2:8)2024}{doi:\nolinkurl{10.46298/LMCS-20(2:8)2024}}


\bibitem[Rehof and F{\"{a}}hndrich(2001)]%
        {Rehof2001}
\bibfield{author}{\bibinfo{person}{Jakob Rehof} {and} \bibinfo{person}{Manuel
  F{\"{a}}hndrich}.} \bibinfo{year}{2001}\natexlab{}.
\newblock \showarticletitle{Type-base flow analysis: from polymorphic subtyping
  to CFL-reachability}. In \bibinfo{booktitle}{\emph{POPL}}.
\newblock
\href{https://doi.org/10.1145/360204.360208}{doi:\nolinkurl{10.1145/360204.360208}}


\bibitem[Reps(1998)]%
        {Reps1997}
\bibfield{author}{\bibinfo{person}{Thomas~W. Reps}.}
  \bibinfo{year}{1998}\natexlab{}.
\newblock \showarticletitle{Program analysis via graph reachability}.
\newblock \bibinfo{journal}{\emph{Inf. Softw. Technol.}} \bibinfo{volume}{40},
  \bibinfo{number}{11-12} (\bibinfo{year}{1998}), \bibinfo{pages}{701--726}.
\newblock
\href{https://doi.org/10.1016/S0950-5849(98)00093-7}{doi:\nolinkurl{10.1016/S0950-5849(98)00093-7}}


\bibitem[Reps et~al\mbox{.}(1995)]%
        {Reps1995}
\bibfield{author}{\bibinfo{person}{Thomas~W. Reps}, \bibinfo{person}{Susan
  Horwitz}, {and} \bibinfo{person}{Shmuel Sagiv}.}
  \bibinfo{year}{1995}\natexlab{}.
\newblock \showarticletitle{Precise Interprocedural Dataflow Analysis via Graph
  Reachability}. In \bibinfo{booktitle}{\emph{POPL}}.
\newblock
\href{https://doi.org/10.1145/199448.199462}{doi:\nolinkurl{10.1145/199448.199462}}


\bibitem[Reps et~al\mbox{.}(1994)]%
        {Reps1994}
\bibfield{author}{\bibinfo{person}{Thomas~W. Reps}, \bibinfo{person}{Susan
  Horwitz}, \bibinfo{person}{Shmuel Sagiv}, {and} \bibinfo{person}{Genevieve
  Rosay}.} \bibinfo{year}{1994}\natexlab{}.
\newblock \showarticletitle{Speeding up Slicing}. In
  \bibinfo{booktitle}{\emph{SIGSOFT}}.
\newblock
\href{https://doi.org/10.1145/193173.195287}{doi:\nolinkurl{10.1145/193173.195287}}


\bibitem[Reps et~al\mbox{.}(2003)]%
        {Reps2003}
\bibfield{author}{\bibinfo{person}{Thomas~W. Reps}, \bibinfo{person}{Stefan
  Schwoon}, {and} \bibinfo{person}{Somesh Jha}.}
  \bibinfo{year}{2003}\natexlab{}.
\newblock \showarticletitle{Weighted Pushdown Systems and Their Application to
  Interprocedural Dataflow Analysis}. In \bibinfo{booktitle}{\emph{SAS}}.
\newblock
\href{https://doi.org/10.1007/3-540-44898-5_11}{doi:\nolinkurl{10.1007/3-540-44898-5_11}}


\bibitem[Reps et~al\mbox{.}(2005)]%
        {Reps2005}
\bibfield{author}{\bibinfo{person}{Thomas~W. Reps}, \bibinfo{person}{Stefan
  Schwoon}, \bibinfo{person}{Somesh Jha}, {and} \bibinfo{person}{David
  Melski}.} \bibinfo{year}{2005}\natexlab{}.
\newblock \showarticletitle{Weighted pushdown systems and their application to
  interprocedural dataflow analysis}.
\newblock \bibinfo{journal}{\emph{Sci. Comput. Program.}}
  (\bibinfo{year}{2005}).
\newblock
\href{https://doi.org/10.1016/J.SCICO.2005.02.009}{doi:\nolinkurl{10.1016/J.SCICO.2005.02.009}}


\bibitem[Sagiv et~al\mbox{.}(1996)]%
        {Sagiv1996}
\bibfield{author}{\bibinfo{person}{Shmuel Sagiv}, \bibinfo{person}{Thomas~W.
  Reps}, {and} \bibinfo{person}{Susan Horwitz}.}
  \bibinfo{year}{1996}\natexlab{}.
\newblock \showarticletitle{Precise Interprocedural Dataflow Analysis with
  Applications to Constant Propagation}.
\newblock \bibinfo{journal}{\emph{Theor. Comput. Sci.}} \bibinfo{number}{1\&2}
  (\bibinfo{year}{1996}).
\newblock
\href{https://doi.org/10.1016/0304-3975(96)00072-2}{doi:\nolinkurl{10.1016/0304-3975(96)00072-2}}


\bibitem[S{\'{e}}nizergues(1997)]%
        {Senizergues1997}
\bibfield{author}{\bibinfo{person}{G{\'{e}}raud S{\'{e}}nizergues}.}
  \bibinfo{year}{1997}\natexlab{}.
\newblock \showarticletitle{The Equivalence Problem for Deterministic Pushdown
  Automata is Decidable}. In \bibinfo{booktitle}{\emph{ICALP}}.
\newblock
\href{https://doi.org/10.1007/3-540-63165-8_221}{doi:\nolinkurl{10.1007/3-540-63165-8_221}}


\bibitem[S{\'{e}}nizergues(2002a)]%
        {Senizergues2002}
\bibfield{author}{\bibinfo{person}{G{\'{e}}raud S{\'{e}}nizergues}.}
  \bibinfo{year}{2002}\natexlab{a}.
\newblock \showarticletitle{{L(A)} = L(B)? Decidability Results from Complete
  Formal Systems}. In \bibinfo{booktitle}{\emph{ICALP}}.
\newblock
\href{https://doi.org/10.1007/3-540-45465-9_4}{doi:\nolinkurl{10.1007/3-540-45465-9_4}}


\bibitem[S{\'{e}}nizergues(2002b)]%
        {Senizergues2002a}
\bibfield{author}{\bibinfo{person}{G{\'{e}}raud S{\'{e}}nizergues}.}
  \bibinfo{year}{2002}\natexlab{b}.
\newblock \showarticletitle{L(A)=L(B)? {A} simplified decidability proof}.
\newblock \bibinfo{journal}{\emph{Theor. Comput. Sci.}} (\bibinfo{year}{2002}).
\newblock
\href{https://doi.org/10.1016/S0304-3975(02)00027-0}{doi:\nolinkurl{10.1016/S0304-3975(02)00027-0}}


\bibitem[Silva(2010)]%
        {Silva2010}
\bibfield{author}{\bibinfo{person}{Alexandra Silva}.}
  \bibinfo{year}{2010}\natexlab{}.
\newblock \emph{\bibinfo{title}{Kleene Coalgebra}}.
\newblock \bibinfo{thesistype}{Ph.\,D. Dissertation}. \bibinfo{school}{Radboud
  Universiteit Nijmegen}.
\newblock


\bibitem[Smolka et~al\mbox{.}(2015)]%
        {Smolka2015}
\bibfield{author}{\bibinfo{person}{Steffen Smolka}, \bibinfo{person}{Spiridon
  Eliopoulos}, \bibinfo{person}{Nate Foster}, {and} \bibinfo{person}{Arjun
  Guha}.} \bibinfo{year}{2015}\natexlab{}.
\newblock \showarticletitle{A Fast Compiler for NetKAT}. In
  \bibinfo{booktitle}{\emph{ICFP}}.
\newblock
\href{https://doi.org/10.1145/2784731.2784761}{doi:\nolinkurl{10.1145/2784731.2784761}}


\bibitem[Sunshine(1977)]%
        {Sunshine1977}
\bibfield{author}{\bibinfo{person}{Carl~A. Sunshine}.}
  \bibinfo{year}{1977}\natexlab{}.
\newblock \showarticletitle{Source routing in computer networks}.
\newblock \bibinfo{journal}{\emph{Comput. Commun. Rev.}}
  (\bibinfo{year}{1977}).
\newblock
\href{https://doi.org/10.1145/1024853.1024855}{doi:\nolinkurl{10.1145/1024853.1024855}}


\bibitem[Valiant and Paterson(1975)]%
        {Valiant1975}
\bibfield{author}{\bibinfo{person}{Leslie~G. Valiant} {and}
  \bibinfo{person}{Michael~S. Paterson}.} \bibinfo{year}{1975}\natexlab{}.
\newblock \showarticletitle{Deterministic one-counter automata}.
\newblock \bibinfo{journal}{\emph{J. Comput. Syst. Sci.}}
  (\bibinfo{year}{1975}).
\newblock
\href{https://doi.org/10.1016/S0022-0000(75)80005-5}{doi:\nolinkurl{10.1016/S0022-0000(75)80005-5}}


\bibitem[Zhang et~al\mbox{.}(2020)]%
        {Zhang2020}
\bibfield{author}{\bibinfo{person}{Peng Zhang}, \bibinfo{person}{Xu Liu},
  \bibinfo{person}{Hongkun Yang}, \bibinfo{person}{Ning Kang},
  \bibinfo{person}{Zhengchang Gu}, {and} \bibinfo{person}{Hao Li}.}
  \bibinfo{year}{2020}\natexlab{}.
\newblock \showarticletitle{APKeep: Realtime Verification for Real Networks}.
  In \bibinfo{booktitle}{\emph{NSDI}}.
\newblock
\urldef\tempurl%
\url{https://www.usenix.org/conference/nsdi20/presentation/zhang-peng}
\showURL{%
\tempurl}


\end{thebibliography}
\ifproofatend%

\appendix

\section{Proofs}
\tobias{On page 24, I am not sure about the proof of Lemma 3.3. Why does it show the forward implication? I think what you really need is that zip is a partial injection?}
\tobias{Is it our intention to fill out the proofs of the lemmas that are now stated without proof in the appendix?}

\printProofs%
\fi

\end{document}